\newtheorem{lemma}{Lemma}
\newtheorem{prop}[lemma]{Proposition}
\newtheorem{theorem}[lemma]{Theorem}
\newtheorem{definition}[lemma]{Definition}
\newtheorem{corollary}[lemma]{Corollary}
\newtheorem{remark}[lemma]{Remark}
\DeclareMathOperator{\tr}{Tr}
\DeclareMathOperator{\supp}{supp}
\newcommand{\cI}{\mathcal{I}}
\newcommand{\cR}{\mathcal{R}}
\newcommand{\cS}{\mathcal{S}}
\newcommand{\cL}{\mathcal{L}}
\newcommand{\cP}{\mathcal{P}}
\newcommand{\cN}{\mathcal{N}}
\newcommand{\cA}{\mathcal{A}}
\newcommand{\cM}{\mathcal{M}}
\newcommand{\cV}{\mathcal{V}}
\newcommand{\cQ}{\mathcal{Q}}
\newcommand{\cU}{\mathcal{U}}
\newcommand{\eps}{\varepsilon}
\begin{document}

\title{Amortized Channel Divergence for\\Asymptotic Quantum Channel Discrimination}

\author{Mark M.~Wilde}
\affiliation{Hearne Institute for Theoretical Physics, Department of Physics and Astronomy, and Center for Computation and Technology, Louisiana State University, Baton Rouge, Louisiana 70803, USA}

\author{Mario Berta}
\affiliation{Department of Computing, Imperial College London, United Kingdom}

\author{Christoph Hirche}
\affiliation{F\'{\i}sica Te\`{o}rica: Informaci\'{o} i Fen\`{o}mens Qu\`{a}ntics, Departament de F\'{i}sica, Universitat Aut\`{o}noma de Barcelona}

\author{Eneet Kaur}
\affiliation{Hearne Institute for Theoretical Physics, Department of Physics and Astronomy, and Center for Computation and Technology, Louisiana State University, Baton Rouge, Louisiana 70803, USA}

\begin{abstract}
It is well known that for the discrimination of classical and quantum channels in the finite, non-asymptotic regime, adaptive strategies can give an advantage over non-adaptive strategies. However, Hayashi [IEEE Trans.~Inf.~Theory 55(8), 3807 (2009)] showed that in the asymptotic regime, the exponential error rate for the discrimination of classical channels is not improved in the adaptive setting. We extend this result in several ways. First, we establish the strong Stein's lemma for classical-quantum channels by showing that asymptotically the exponential error rate for classical-quantum channel discrimination is not improved by adaptive strategies. Second, we recover many other classes of channels for which adaptive strategies do not lead to an asymptotic advantage. Third, we give various converse bounds on the power of adaptive protocols for general asymptotic quantum channel discrimination. Intriguingly, it remains open whether adaptive protocols can improve the exponential error rate for quantum channel discrimination in the asymmetric Stein setting. Our proofs are based on the concept of amortized distinguishability of quantum channels, which we analyse using data-processing inequalities.
\end{abstract}

\maketitle

\tableofcontents


\section{Introduction}

A fundamental task in quantum statistics is to distinguish between two (or multiple) non-orthogonal quantum states. After considerable efforts, the resource trade-off is by now well understood in the information-theoretic limit of asymptotically many copies and quantified by quantum Stein's lemma~\cite{HP91,ON00}, the quantum Chernoff bound~\cite{NS09,ACMBMAV07}, as well as refinements thereof~\cite{Nagaoka06,Audenaert2008,Mosonyi2015}.

As a natural extension of quantum state discrimination, we study here the task of distinguishing between two quantum channels, in the information-theoretic limit of asymptotically many repetitions. Whereas the mathematical properties of states and channels are strongly intertwined, channel discrimination is qualitatively different from state discrimination for a variety of reasons. Most importantly, when distinguishing between two quantum channels one can employ adaptive protocols that make use of a quantum memory~\cite{CDP08a}. The physical scenario in which such adaptive protocols apply consists of a discriminator being given ``black-box'' access to $n$ uses of a channel $\mathcal{N}$ or $\mathcal{M}$, and there is no physical constraint on the kind of operations that he is allowed to perform. In particular, the discriminator is allowed to prepare a quantum state with a quantum memory register that is arbitrarily large, perform adaptive quantum channels with arbitrarily large input and output quantum memories between every call to $\mathcal{N}$ or $\mathcal{M}$, and finally perform an arbitrary quantum measurement on the final state. See Figure~\ref{fig:adaptive-prot} for a graphical depiction.

\begin{figure}[hb]
\begin{center}
\includegraphics[
width=4.3399in
]
{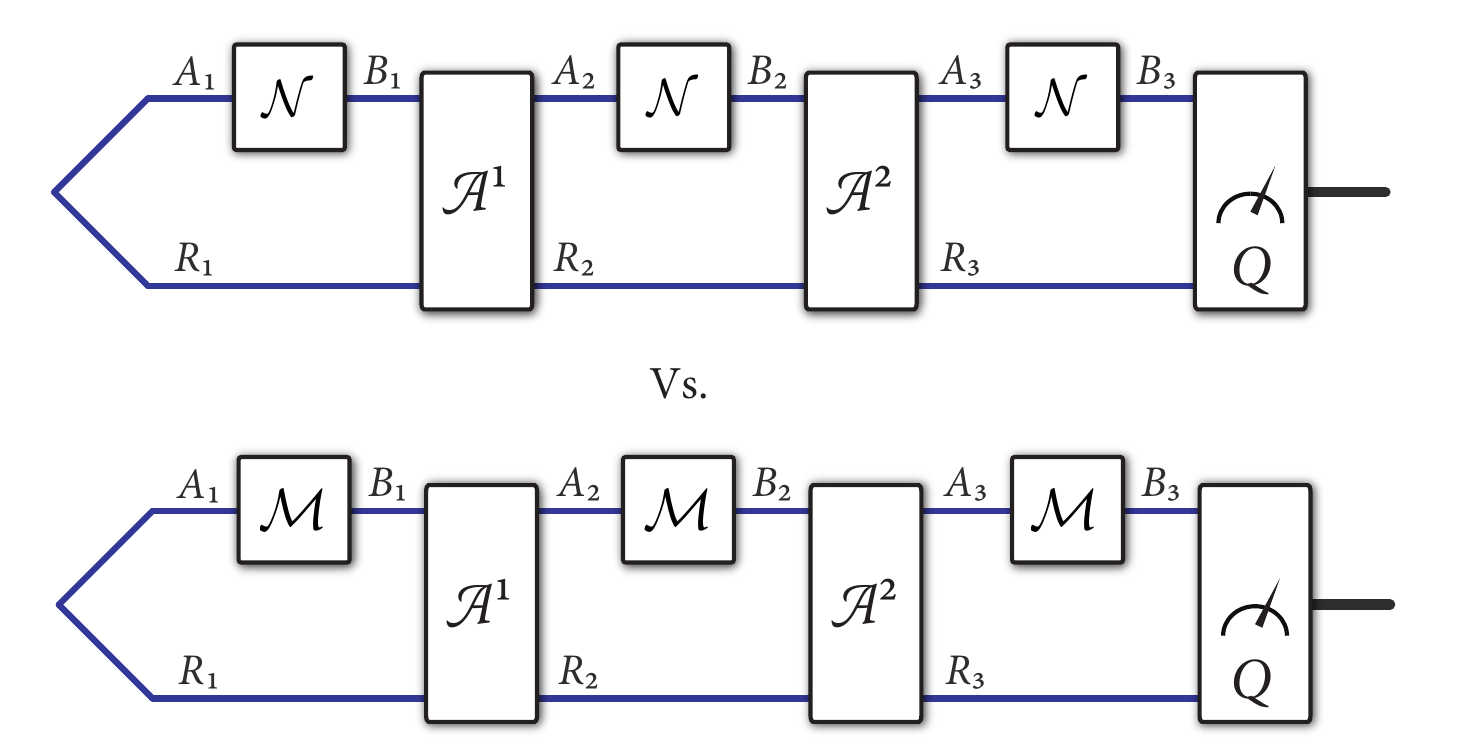}
\end{center}
\caption{A general protocol for channel discrimination when the channel $\mathcal{N}$ or $\mathcal{M}$ is called three times.}
\label{fig:adaptive-prot}
\end{figure}

For the finite, non-asymptotic regime, such protocols are then also known to give an advantage over non-adaptive protocols, the latter of which are restricted to picking a fixed input state and then executing standard state discrimination for the channel outputs.
For an in-depth discussion of this phenomenon, we refer to the latest works~\cite{Duan09,Harrow10} and references therein (also see the recent \cite{Puzzuoli2017}, \cite{NPPZ18}, \cite{PPKK18}). In fact, the advantage of adaptive protocols in this regime already manifests itself for the discrimination of classical channels~\cite[Section 5]{Harrow10}. Somewhat surprisingly, however, Hayashi showed that this advantage disappears for classical channel discrimination in the information-theoretic limit of a large number of repetitions~\cite{Hayashi09}. In particular, the optimal exponential error rate for the discrimination of classical channels in the sense of Stein and Chernoff is achieved by just picking a large number of copies of the best possible product state input and then performing state discrimination for the product output states.

In contrast, in the quantum case, asymptotic channel discrimination has been studied much less systematically than the aforementioned finite, non-asymptotic regime. Notable exceptions include~\cite{Cooney2016} involving replacer channels and \cite{PhysRevLett.118.100502,TW2016} about jointly teleportation-simulable channels. Moreover, references~\cite{Yu17,Pirandola18} feature bounds for general quantum channels, but the exact quantitative performance of these bounds remains rather unclear in the asymptotic setting. We would also like to point to the very related quantum strategies framework of \cite{GW07,G09,G12}, as well as the quantum tester framework of \cite{CDP08b,CDP08a}.

In this paper, we extend some of the seminal classical results~\cite{Hayashi09} to the quantum setting by providing a framework for deriving upper bounds on the power of adaptive protocols for asymptotic quantum channel discrimination. In particular, in order to quantify the largest distinguishability that can be realized between two quantum channels, we introduce the concept of amortized channel divergence. This then allows us to give converse bounds for adaptive channel discrimination protocols in the asymmetric hypothesis testing setting in the sense of Stein, as well as in the symmetric hypothesis testing setting in the sense of Chernoff. Now, whenever the amortized channel divergences collapse to the standard channel divergences~\cite{PhysRevA.97.012332}, we immediately get single-letter converse bounds on the power of adaptive protocols for channel discrimination. Most importantly, we arrive at the characterization of the strong Stein's lemma for classical-quantum channel discrimination. Namely, as a full extension of the corresponding classical result~\cite[Corollary 1]{Hayashi09}, we have that picking many copies of the best possible product-state input and then applying quantum Stein's lemma for the product output states is asymptotically optimal. Other examples with tight characterizations include unitary and isometry channels~\cite{Duan07,Duan09}, projective measurements~\cite{Duan06}, replacer channels~\cite{Cooney2016}, as well as environment-parametrized channels that are environment seizable, as given here in Definition~\ref{def:seizable} (the latter including the channels considered in \cite{PhysRevLett.118.100502,TW2016}).

Intriguingly, we have to leave open the question of whether adaptive protocols improve the exponential error rate for quantum channel discrimination in the asymmetric Stein setting. Even though we provide many classes of channels for which adaptive protocols do not give an advantage in the asymptotic limit, we suspect that in general such a gap exists. We emphasise that this might already occur for entanglement breaking channels or even quantum-classical channels (measurements). Moreover, this would also be consistent with the known advantage of adaptive protocols in the symmetric Chernoff setting~\cite{Duan09,Harrow10,Duan16}. From a learning perspective and following Hayashi's comments for the classical case~\cite[Section 1]{Hayashi09}, this leaves open the possibility that quantum memory is asymptotically helpful for designing active learning protocols for inferring about unknown parameters of quantum systems.

Our paper is structured as follows. In Section~\ref{sec:notation}, we introduce our notation, and in Section~\ref{sec:settings}, we give the precise information-theoretic settings for asymptotic quantum channel discrimination. As our main technical tool, we then introduce amortized channel divergences and analyse their mathematical properties in Section~\ref{sec:technical-tools}. Based on this framework, we proceed to present various converse bounds on the power of adaptive protocols for quantum channel discrimination in Section~\ref{sec:general-bounds}. This is followed by our main result in Section~\ref{sec:cq}, the strong Stein's lemma for classical-quantum channel discrimination. Section~\ref{sec:examples} discusses various other examples for which tight characterisations are available. We end with Section~\ref{sec:conclusion}, where we conclude and discuss open questions.


\section{Notation}\label{sec:notation}

Here we introduce our notation and give the relevant definitions needed later.


\subsection{Setup}

Throughout, quantum systems are denoted by $A$, $B$, and $C$ and have finite dimensions $|A|$, $|B|$, and $|C|$, respectively. Linear operators acting on system $A$ are denoted by $L_A\in\cL(A)$ and positive semi-definite operators by $P_A\in\cP(A)$. Quantum states of system $A$ are denoted by $\rho_A\in\cS(A)$ and pure quantum states by $\Psi_A\in\cV(A)$. 
A maximally entangled state $\Phi_{RA}$ of Schmidt rank $d$ is given by
\begin{equation}
 \Phi_{RA}\coloneqq \frac{1}{d}
 \sum_{i,j=1}^d |i\rangle \langle j |_R
 \otimes |i\rangle \langle j |_A,
\end{equation}
where $\{|i\rangle_R\}_i$
and $\{|i\rangle_A\}_i$ are orthonormal bases.
Quantum channels are completely positive and trace-preserving maps from $\cL(A)$ to $\cL(B)$ and denoted by $\cN_{A\to B}\in\cQ(A\to B)$. The Choi state of a quantum channel $\cN_{A\to B}$ is a standard concept in quantum information and is defined as $\cN_{A\to B}( \Phi_{RA})$. Classical systems are denoted by $X$, $Y$, and $Z$ and have finite dimensions $|X|$, $|Y|$, and $|Z|$, respectively. For $p\geq1$ the Schatten norms are defined for $L_A\in\cL(A)$ as
\begin{align}
\|L_A\|_p\coloneqq\Big(\tr\big[|L_A|^p\big]\Big)^{1/p}.
\end{align}

In this work, we also consider superchannels \cite{CDP08}, which are linear maps that take as input a quantum channel and output a quantum channel. Such superchannels have previously been considered in various contexts in quantum information theory \cite{LM15,WFD17,CG18,G18}. To define them, let $\mathcal{L}(A\rightarrow B)$ denote the set of all linear maps from $\mathcal{L}(A)$ to $\mathcal{L}(B)$. Similarly let $\mathcal{L}(C\rightarrow
D)$ denote the set of all linear maps from $\mathcal{L}(C)$ to $\mathcal{L}(D)$.
Let $\Theta:\mathcal{L}(A\rightarrow B)\rightarrow\mathcal{L}(C\rightarrow D)$
denote a linear supermap, taking $\mathcal{L}(A\rightarrow B)$ to $\mathcal{L}(C\rightarrow D)$. A quantum channel is a particular kind of linear map, and any linear supermap $\Theta$ that takes as input an arbitrary quantum channel
$\Psi_{A\rightarrow B}\in\mathcal{Q}(A\rightarrow B)$ and is required to output a quantum
channel $\Phi_{C\rightarrow D}\in\mathcal{Q}(C\rightarrow D)$ should
preserve the properties of complete positivity and trace preservation. Any
such transformation that does so is called a superchannel. In \cite{CDP08}, it
was proven that any superchannel $\Theta:\mathcal{L}(A\rightarrow
B)\rightarrow\mathcal{L}(C\rightarrow D)$ can be physically realized as follows. If
\begin{equation}
\Phi_{C\rightarrow D} = \Theta\lbrack\Psi_{A\rightarrow B}]
\end{equation}
for an arbitrary
input channel $\Psi_{A\rightarrow B}\in\mathcal{Q}(A\rightarrow B)$ and some output channel $\Phi_{C\rightarrow D} \in \mathcal{Q}(C\rightarrow D)$, then the physical realization of the superchannel 
$\Theta$ is as follows:
\begin{equation}\label{eqn:superchannel}
\Phi_{C\rightarrow D}=\Omega_{BE\rightarrow D}\circ\left(  \Psi_{A\rightarrow B}\otimes\mathcal{I}_{E}\right)  \circ\Lambda_{C\rightarrow AE},
\end{equation}
where $\Lambda_{C\rightarrow AE}:\mathcal{L}(C)\rightarrow \mathcal{L}(AE)$ is a
pre-processing channel, system $E$ corresponds to some memory or environment
system, and $\Omega_{BE\rightarrow D}:\mathcal{L}(BE)\rightarrow \mathcal{L}(D)$ is a
post-processing channel. 


\subsection{Quantum entropies}

The quantum relative entropy for $\rho,\sigma\in\cS(A)$ is defined as \cite{Ume62}
\begin{align}
D(\rho\|\sigma)\coloneqq\begin{cases} \tr\big[\rho\left(\log\rho-\log\sigma\right)\big] \qquad & \supp(\rho)\subseteq\supp(\sigma)\\ +\infty & \text{otherwise,}\end{cases}
\end{align}
where in the above and throughout the paper we employ the convention that all logarithms are evaluated using base two.
The Petz-R\'enyi divergences are defined for $\rho,\sigma\in\cS(A)$ and $\alpha\in(0,1)\cup(1,\infty)$ as \cite{P85,P86}
\begin{align}
D_\alpha(\rho\|\sigma)\coloneqq\frac{1}{\alpha-1}\log\tr\left[\rho^\alpha\sigma^{1-\alpha}\right],
\end{align}
whenever either $\alpha\in(0,1)$ and $\rho$ is not orthogonal to $\sigma$ in Hilbert-Schmidt inner product or $ \alpha>1$ and $\supp(\rho)\subseteq\supp(\sigma)$. Otherwise, we set $D_\alpha(\rho\|\sigma)\coloneqq +\infty$.
In the above and throughout the paper, we employ the convention that inverses are to be understood as generalized inverses.
For $\alpha\in\{0,1\}$, we define the Petz-R\'enyi divergence in the limit as
\begin{align}
D_0(\rho\|\sigma)& \coloneqq\lim_{\alpha\to0}D_\alpha(\rho\|\sigma)=-\log
\tr\left[ \Pi_{\rho} \sigma\right],
\\
D_1(\rho\|\sigma)& \coloneqq\lim_{\alpha\to1}D_\alpha(\rho\|\sigma)=D(\rho\|\sigma),
\end{align}
where $\Pi_{\rho}$ denotes the projection onto the support of $\rho$.
Another quantity of interest related to the Petz-R\'enyi divergences is the Chernoff divergence~\cite{ACMBMAV07,Audenaert2008,NS09}
\begin{align}
C(\rho\|\sigma) & \coloneqq-\inf_{0\leq\alpha\leq1}\log\tr\left[\rho^\alpha\sigma^{1-\alpha}\right]\\
& = \sup_{0\leq\alpha\leq1} (1-\alpha) D_\alpha(\rho \| \sigma). \label{eq:Chernoff-div-expr-renyi}
\end{align}
The sandwiched R\'enyi divergences are defined for $\rho,\sigma\in\cS(A)$ and $\alpha\in(0,1)\cup(1,\infty)$  as~\cite{muller2013quantum,WWY14}
\begin{align}
\widetilde{D}_\alpha(\rho\|\sigma)\coloneqq\frac{1}{\alpha-1}\log\tr\left[\left(\sigma^{\frac{1-\alpha}{2\alpha}}\rho\sigma^{\frac{1-\alpha}{2\alpha}}\right)^\alpha\right]
\end{align}
whenever either $ \alpha\in(0,1)$ and $\rho$ is not orthogonal to $\sigma$ in Hilbert-Schmidt inner product or $\alpha>1$ and $\supp(\rho)\subseteq\supp(\sigma)$. Otherwise we set $\widetilde{D}_\alpha(\rho\|\sigma)\coloneqq\infty$. For $\alpha=1$, we define the sandwiched R\'enyi relative entropy in the limit as \cite{muller2013quantum,WWY14}
\begin{align}
\widetilde{D}_1(\rho\|\sigma)\coloneqq\lim_{\alpha\to1}\widetilde{D}_\alpha(\rho\|\sigma)=D(\rho\|\sigma).
\end{align}
We have that
\begin{align}
\widetilde{D}_{1/2}(\rho\|\sigma)=-\log F(\rho,\sigma),
\end{align}
with Uhlmann's fidelity defined as $F(\rho,\sigma)\coloneqq\|\sqrt{\rho}\sqrt{\sigma}\|_1^2$ \cite{U76}.
In the limit $\alpha\to\infty$, the sandwiched R\'enyi relative entropy converges to the max-relative entropy~\cite{Datta09,Jain02}
\begin{align}
D_{\max}(\rho\|\sigma) \coloneqq\widetilde{D}_\infty(\rho\|\sigma)& \coloneqq\lim_{\alpha\to\infty}\widetilde{D}_\alpha(\rho\|\sigma)
\\
& =\log \left\|\sigma^{-1/2}\rho\sigma^{-1/2}\right\|_\infty \\
& =\inf\left\{\lambda:\rho\leq2^{\lambda}\cdot\sigma\right\},
\end{align}
as shown in \cite{muller2013quantum}. The log-Euclidean R\'enyi divergence is defined for positive definite density operators $\rho,\sigma\in \cS(A)$ as
\cite{Mosonyi2017} (see also \cite{AN00,HP93,ON00,Nagaoka06,AD15})
\begin{equation}
D^\flat_{\alpha}(\rho\| \sigma) \coloneqq
\frac{1}{\alpha - 1}
\log \tr[\exp(\alpha \ln \rho + (1-\alpha) \ln \sigma) ],
\end{equation}
and for general density operators as the following limit: $\lim_{\eps \to 0} D^\flat_{\alpha}(\rho + \eps I\| \sigma + \eps I)$. An explicit expression for the limiting value above is available in \cite[Lemma~3.1]{Mosonyi2017}. In the limit $\alpha\to 1$, the log-Euclidean R\'enyi divergence converges to the quantum relative entropy \cite[Lemma~3.4]{Mosonyi2017}:
\begin{equation}
\lim_{\alpha \to 1}D^\flat_{\alpha}(\rho\| \sigma) = D(\rho\Vert\sigma).
\end{equation}
In analogy to the Chernoff divergence representation in \eqref{eq:Chernoff-div-expr-renyi} we also define the log-Euclidean Chernoff distance as
\begin{equation}\label{eq:Chernoff-div-expr-flat-renyi}
C^\flat(\rho\|\sigma)  \coloneqq\sup_{0\leq\alpha\leq1} (1-\alpha) D^\flat_\alpha(\rho \| \sigma).
\end{equation}
The log-Euclidean R\'enyi divergence comes up in our work due to the following ``divergence sphere optimization,'' holding for not too large $r > 0$ and states $\rho$ and $\sigma$ \cite[Remark 1]{Nagaoka06}:
\begin{equation}\label{eq:div-sphere-rep}
\inf_{\tau : D(\tau \Vert \sigma) \leq r}D(\tau \Vert \rho) = \sup_{\alpha\in (0,1)} \left\{ \frac{\alpha-1}{\alpha} \left[r - D^\flat_\alpha(\rho \Vert \sigma) \right]\right\}.
\end{equation}
The following inequalities relate the three aforementioned quantum R\'enyi divergences evaluated on quantum states $\rho$ and $\sigma$~\cite{WWY14,IRS17,Mosonyi2017}:
\begin{align}
\alpha D_{\alpha}(\rho\| \sigma) 
 & \leq \widetilde{D}_{\alpha}(\rho\| \sigma)
 \leq D_{\alpha}(\rho\| \sigma) 
\leq D^\flat_{\alpha}(\rho\| \sigma)
&   \text{ for } \alpha & \in(0,1) , \\ 
D^\flat_{\alpha}(\rho\| \sigma)
& \leq \widetilde{D}_{\alpha}(\rho\| \sigma)
\leq D_{\alpha}(\rho\| \sigma) 
 &  \text{ for } \alpha & \in(1,\infty).
\end{align}
All of the above quantum R\'enyi divergences reduce to the corresponding classical versions by embedding probability distributions into diagonal, commuting quantum states. 


\section{Settings for asymptotic channel discrimination}\label{sec:settings}

In this section, we describe the information-theoretic settings for asymptotic quantum channel discrimination that we study. We emphasise that this is in contrast to most of the previous work that has focused on the finite, non-asymptotic regime.


\subsection{Protocol for quantum channel discrimination}\label{sec:discrimination}

The problem of quantum channel discrimination is made mathematically precise by the following hypothesis testing problems for quantum channels. Given two quantum channels $\cN_{A\to B}$ and $\cM_{A\to B}$ acting on an input system $A$ and an output system $B$, a general adaptive strategy for  discriminating them is as follows.

We allow the preparation of an arbitrary
input state $\rho_{R_{1}A_{1}}=\tau_{R_{1}A_{1}}$, where $R_{1}$ is an
ancillary register. The $i$th use of a channel accepts the register $A_{i}$ as
input and produces the register $B_{i}$ as output. After each invocation of
the channel $\mathcal{N}_{A\to B}$ or $\mathcal{M}_{A\to B}$, an (adaptive) channel $\mathcal{A}_{R_{i}B_{i}\rightarrow
R_{i+1}A_{i+1}}^{(i)}$ is applied to the registers $R_{i}$ and $B_{i}$,
yielding a quantum state $\rho_{R_{i+1}A_{i+1}}$ or $\tau_{R_{i+1}A_{i+1}}$ in
registers $R_{i+1}A_{i+1}$, depending on whether the channel is equal to
${\mathcal{N}_{A\to B}}$ or ${\mathcal{M}_{A\to B}}$. That is,
\begin{align}
\rho_{R_{i+1}A_{i+1}} &  \coloneqq \mathcal{A}_{R_{i}B_{i}\rightarrow
R_{i+1}A_{i+1}}^{\left(  i\right)  }(\rho_{R_{i}B_{i}}), &  &  \rho
_{R_{i}B_{i}}\coloneqq \mathcal{N}_{A_{i}\rightarrow B_{i}}(\rho_{R_{i}A_{i}
})\label{eq:rho-adaptive},\\
\tau_{R_{i+1}A_{i+1}} &  \coloneqq \mathcal{A}_{R_{i}B_{i}\rightarrow
R_{i+1}A_{i+1}}^{\left(  i\right)  }(\tau_{R_{i}B_{i}}),&  & 
\tau
_{R_{i}B_{i}}\coloneqq \mathcal{M}_{A_{i}\rightarrow B_{i}}(\tau_{R_{i}A_{i}
})\label{eq:tau-adaptive},
\end{align}
for every $1\leq i<n$ on the left-hand side, and for every $1\leq i\leq n$ on
the right-hand side. Finally, a quantum measurement $\{Q_{R_{n}B_{n}}
,I_{R_{n}B_{n}}-Q_{R_{n}B_{n}}\}$ is performed on the systems $R_{n}B_{n}$ to
decide which channel was applied. The outcome $Q$ corresponds to a final decision that the channel is $\cN$, while the outcome $I-Q$ corresponds to a final decision that the channel is $\cM$. We define the final decision probabilities as
\begin{align}
p &  \coloneqq \operatorname{Tr}\big[Q_{R_{n}B_{n}}\rho_{R_{n}B_{n}}\big], \label{eq:prot-p}\\
q &  \coloneqq \operatorname{Tr}\big[Q_{R_{n}B_{n}}\tau_{R_{n}B_{n}}\big]. \label{eq:prot-q}
\end{align}
Figure~\ref{fig:adaptive-prot} depicts such a general protocol for channel discrimination when the channel $\mathcal{N}$ or $\mathcal{M}$ is called three times.

In what follows, we use the simplifying notation $\{ Q, \cA\}$ to identify a particular strategy using channels $\{\cA^{(i)}_{R_i B_i \to R_{i+1} A_{i+1}}\}_i$ and a final measurement $\{ Q_{R_n B_n}, I_{R_n B_n} - Q_{R_n B_n} \}$. For simplicity, this shorthand also includes the preparation of the initial state $\rho_{R_1 A_1} = \tau_{R_1 A_1} $, which can be understood as arising from the action of an initial channel $\cA^{(0)}_{R_0 B_0 \to R_{1} A_{1}}$ for which the input systems $R_0$ and $B_0$ are trivial. This naturally gives rise to the two possible error probabilities:
\begin{align}
\label{eq:ChannelErrorsI+II}
\alpha_n(\{ Q, \cA\})&\coloneqq\tr\big[
(I_{R_n B_n}-Q_{R_n B_n})\rho_{R_n B_n}\big]\;
&\text{type~I error probability,}
\\
\beta_n(\{ Q, \cA\})&\coloneqq \tr\big[Q_{R_n B_n} \tau_{R_n B_n} \big]\;
&\text{type~II error probability.}&
\end{align}
In what follows, we discuss the behaviour of the type~I and type~II error probabilities in various asymmetric and symmetric settings.

In the above specification of quantum channel discrimination, the physical setup corresponding to it is that the discriminator has ``black box'' access to $n$ uses of the channel $\cN$ or~$\cM$, meaning that the channel is some device in the laboratory of the discriminator, he has physical access to both the input and output systems of the channel, and he is allowed to apply arbitrary procedures to distinguish them. As such, the above method of discriminating the channels is the most natural and general in this setting. Other physical constraints motivate different models of channel discrimination protocols, and in fact, there could be a large number of physically plausible channel discrimination strategies to consider, depending on the physical constraints of the discriminator(s). For example, if the channels being compared have input and output systems that are in different physical locations, as would be the case for a long-haul fiber optic cable, then it might not be feasible  to carry out such a general channel discrimination protocol as described above (two parties in distant laboratories would be needed), and it would be meaningful to consider a different channel discrimination protocol. However, the channel discrimination described above is the most general, and if there is a limitation established for the distinguishability of two channels in this model, then the same limitation applies to any other channel discrimination model that could be considered.

\begin{remark}\label{rmk:block-coding}
Another kind of channel discrimination strategy often considered in the literature is a parallel discrimination strategy, in which a state $\gamma_{RA^n}$ is prepared, either the tensor-power channel $(\cN_{A \to B})^{\otimes n}$ or $(\cM_{A \to B})^{\otimes n}$ is applied, and then a joint measurement is performed on the systems~$RB^n$. As noted in \cite{CDP08a}, a parallel channel discrimination strategy of the channels $\cN$ and $\cM$ is a special case of an adaptive channel discrimination strategy as detailed above. Indeed, the first state $\rho_{R_1 A_1}$ in an adaptive protocol could be $\gamma_{RA^n}$ with the system $R_1$ of $\rho_{R_1A_1}$ identified with the systems $RA_2 \cdots A_n$ of $\gamma_{RA^n}$, and then the role of the first adaptive channel would be simply to swap in system $A_2$ of $\gamma_{RA^n}$ for the second channel call, the second adaptive channel would swap in system $A_3$ of $\gamma_{RA^n}$ for the third channel call, etc. As such, parallel channel discrimination is not the most general approach to consider, and as stated previously, any limitation placed on the distinguishability of the channels from an adaptive discrimination strategy serves as a limitation when using a parallel discrimination strategy.
\end{remark}

To the best of our knowledge, adaptive quantum channel discrimination protocols were first studied by Chiribella {\it et al.}~\cite{CDP08a}, whereas the particular information-theoretic quantities that we introduce in the following Sections \ref{sec:asym-stein}--\ref{sec:syn-chernoff} go back to Hayashi~\cite{Hayashi09} for the classical case and to Cooney {\it et al.}~\cite{Cooney2016} for the quantum case.


\subsection{Asymmetric setting -- Stein}\label{sec:asym-stein}

For asymmetric hypothesis testing, we minimize the type~II error probability, under the constraint that the type~I error probability does not exceed a constant $\eps\in(0,1)$. We are then interested in characterizing the non-asymptotic quantity
\begin{align}
\zeta_n(\eps,\cN,\cM)\coloneqq\sup_{\{Q,\cA\}}\left\{-\frac{1}{n}\log\beta_n(\{ Q, \cA\})\middle|\alpha_n(\{Q,\cA\})\leq\eps\right\},
\end{align}
as well as the asymptotic quantities
\begin{equation}
\underline{\zeta}(\eps,\cN,\cM)\coloneqq\liminf_{n \to \infty} \zeta_n(\eps,\cN,\cM), \qquad \overline{\zeta}(\eps,\cN,\cM)\coloneqq\limsup_{n \to \infty} \zeta_n(\eps,\cN,\cM).
\end{equation}


\subsection{Strong converse exponent -- Han-Kobayashi}\label{sec:han-kob}

The strong converse exponent is a refinement of the asymmetric hypothesis testing quantity discussed above. For $r>0$, we are interested in characterizing the non-asymptotic quantity
\begin{align}
H_n(r,\cN,\cM)\coloneqq\inf_{\{Q,\cA\}}\left\{-\frac{1}{n}\log(1-\alpha_n(\{ Q, \cA\}))\middle|\beta_n(\{Q,\cA\})\leq2^{-rn}\right\},
\end{align}
as well as the asymptotic quantities
\begin{equation}
\underline{H}(r,\cN,\cM)\coloneqq\liminf_{n \to \infty} H_n(r,\cN,\cM), \qquad \overline{H}(r,\cN,\cM)\coloneqq\limsup_{n \to \infty} H_n(r,\cN,\cM).
\end{equation}
The interpretation is that the type~II error probability is constrained to tend to zero exponentially fast at a rate $r > 0$, but then if $r$ is too large, the type~I error probability will necessarily tend to one exponentially fast, and we are interested in the exact rate of exponential convergence. Note that this strong converse exponent is only non-trivial if $r$ is sufficiently large.


\subsection{Error exponent -- Hoeffding}

The error exponent is another refinement of asymmetric hypothesis testing, in the sense that the type~II error probability is constrained to decrease exponentially with exponent $r>0$. We are then interested in characterizing the  error exponent of the type~I error probability under this constraint. That is, we are interested in characterizing the non-asymptotic quantity
\begin{align}
B_n(r,\cN,\cM)\coloneqq\sup_{\{Q,\cA\}}\left\{-\frac{1}{n}\log\alpha_n(\{Q,\cA\})\middle|\beta_n(\{Q,\cA\})\leq2^{-rn}\right\}.
\end{align}
as well as the asymptotic quantities
\begin{equation}
\underline{B}(r,\cN,\cM)\coloneqq\liminf_{n \to \infty} B_n(r,\cN,\cM), \qquad \overline{B}(r,\cN,\cM)\coloneqq\limsup_{n \to \infty} B_n(r,\cN,\cM).
\end{equation}
Note that this error exponent is non-trivial only if $r$ is not too large.


\subsection{Symmetric setting -- Chernoff}\label{sec:syn-chernoff}

Here we are interested in minimizing the total error probability of guessing incorrectly, that is, symmetric hypothesis testing, which is sometimes also described as the Bayesian setting of hypothesis testing. Given an {\it a priori} probability $p\in(0,1)$ that the first channel $\mathcal{N}$ is selected, the non-asymptotic symmetric error exponent is defined as\footnote{The quantity underlying the non-asymptotic symmetric error exponent was previously studied in \cite{CDP08a,G12} and shown to be related to the norm defined therein (see \cite{GW07,CDP08b,G09} for related work).}
\begin{align}
\xi_n(p,\cN,\cM)\coloneqq\sup_{\{Q,\cA\}}-\frac{1}{n}\log\Big(p\cdot \alpha_n(\{Q,\cA\})+(1-p)\beta_n(\{Q,\cA\})\Big).
\end{align}
Given that the expression above involves an optimization over all final measurements abbreviated by $Q$, we can employ the well known result relating optimal error probability to trace distance \cite{H69,H73,Hel76} to conclude that
\begin{equation}
\xi_n(p,\cN,\cM)=\sup_{\{\cA\}}-\frac{1}{n}\log\left(\frac{1}{2}\left(1- \left\Vert p \rho_{R_n B_n }- (1-p)\tau_{R_n B_n}\right\Vert_1\right)\right),
\end{equation}
where $\rho_{R_n B_n }$ and $\tau_{R_n B_n }$ are defined in \eqref{eq:rho-adaptive} and \eqref{eq:tau-adaptive}, respectively.
We are then interested in the asymptotic symmetric error exponent
\begin{align}
\underline{\xi}(\cN,\cM)\coloneqq\liminf_{n \to \infty} \xi_n(p,\cN,\cM) & = \liminf_{n \to \infty}  \xi_n(1/2,\cN,\cM), \\
\overline{\xi}(\cN,\cM)\coloneqq\limsup_{n \to \infty} \xi_n(p,\cN,\cM) & = \limsup_{n \to \infty} \xi_n(1/2,\cN,\cM),
\end{align}
with the equalities following, e.g., from \cite[Theorem 12]{Yu17}. That is, choosing \textit{a priori} probabilities  different from $1/2$ does not affect the asymptotic symmetric error exponent. We have the following relation between the asymptotic Hoeffding error exponent and the asymptotic symmetric error exponent:
\begin{align}\label{eq:Hoeff-to-Chern}
\overline{\xi}(\cN,\cM)=\sup\Big\{r\Big|\overline{B}(r,\cN,\cM)\geq r\Big\}.
\end{align}


\subsection{Energy-constrained channel discrimination}\label{sec:energy-cons-ch-disc}

The protocols for quantum channel discrimination could be energy constrained as well. This is an especially important consideration when discriminating bosonic Gaussian channels~\cite{S17}, for which the theory could become trivial without such an energy constraint imposed. For example, if the task is to discriminate  two pure-loss bosonic Gaussian channels of different transmissivities and there is no energy constraint, then these channels can be perfectly discriminated with a single call: one would send in a coherent state of arbitrarily large photon number, and then states output from the two different channels are orthogonal in the limit of infinite photon number (see, e.g., \cite[Section 2]{Winter17}).

To develop the formalism of energy-constrained channel discrimination, let $H_A$ be a Hamiltonian acting on the channel input Hilbert space, and  we take $H_A$ to be a positive semi-definite operator throughout for simplicity. Then, for the channel discrimination protocol described in Section~\ref{sec:discrimination} to be energy constrained, we demand that the average energy of the reduced states at all of the channel inputs satisfy
\begin{equation}
\frac{1}{n}\sum_{i=1}^{n}\operatorname{Tr}[H_A\rho_{A_{i}}]\leq E,
\label{eq:energy-constraint}
\end{equation}
where $E\in\lbrack0,\infty)$. It then follows that an unconstrained protocol corresponds to choosing $H_A=I_A$ and $E=1$, so that the corresponding ``energy constraint'' in \eqref{eq:energy-constraint} is automatically satisfied for all quantum states.

The resulting quantities of interest then depend on the Hamiltonian $H_A$ and energy constraint $E$, and we write $\{Q, \cA, H, E\}$ to denote the strategy employed. We write the type I and II error probabilities as
$\alpha_n(\{Q,\cA, H, E\})$ and $\beta_n(\{Q,\cA, H, E\}) $, respectively. The resulting optimized quantities of interest from the previous sections are then defined in the same way, but additionally depend on the Hamiltonian $H_A$ and energy constraint $E$. We denote them by 
\begin{align}
& \zeta_{n}(\eps,\cN, \cM, H,E) & \text{ (Stein)}, \\
&H_n(r,\cN, \cM, H, E) & \text{ (Han-Kobayashi)}, \\
& B_n(r,\cN, \cM, H, E) & \text{ (Hoeffding)}, \\
& \xi_n(p,\cN, \cM, H, E) & \text{ (Chernoff)}.
\end{align}


\section{Amortized distinguishability of quantum channels}\label{sec:technical-tools}

In order to analyse the hypothesis testing problems for quantum channels as discussed in Section~\ref{sec:discrimination}, we now introduce the concept of the amortized distinguishability of quantum channels. This allows us to reduce questions about the operational problems of hypothesis testing to mathematical questions about quantum channels, states, and distinguishability measures of them. In the following, we also detail many properties of the amortized distinguishability of quantum channels, which are of independent interest.


\subsection{Generalized divergences}

We say that a function $\mathbf{D}:\cS(A)\times\cS(A)\to\mathbb{R}\cup\{+\infty\}$ is a generalized divergence \cite{PV10,SW12} if for arbitrary Hilbert spaces $\mathcal{H}_A$ and $\mathcal{H}_B$, arbitrary states $\rho_A,\sigma_A\in\cS(A)$, and an arbitrary channel $\cN_{A\to B}\in\cQ(A\to B)$, the following data-processing inequality holds
\begin{align}\label{eq:data-processing}
\mathbf{D}(\rho_{A}\|\sigma_A)\geq\mathbf{D}(\cN_{A\to B}(\rho_A)\|\cN_{A\to B}(\sigma_A)).
\end{align}
From this inequality, we find in particular that for all states $\rho_A,\sigma_A\in\cS(A)$, $\omega_R\in\cS(R)$, the following identity holds \cite{WWY14}
\begin{align}
\mathbf{D}(\rho_A\otimes\omega_R\|\sigma_A\otimes\omega_R)=\mathbf{D}(\rho_A\|\sigma_A),
\end{align}
and that for an arbitrary isometric channel $\cU_{A\to B}\in\cQ(A\to B)$, we have that \cite{WWY14}
\begin{align}
\mathbf{D}(\cU_{A\to B}(\rho_A)\|\cU_{A\to B}(\sigma_A))=\mathbf{D}(\rho_A\|\sigma_A).
\end{align}

We call a generalized divergence \textit{faithful} if the inequality $\mathbf{D}(\rho_A\|\rho_A)\leq0$ holds for an arbitrary state $\rho_A\in\cS(A)$, and \textit{strongly faithful} if for arbitrary states $\rho_A,\sigma_A\in\cS(A)$ we have $\mathbf{D}(\rho_A\|\sigma_A)=0$ if and only if $\rho_A=\sigma_A$. Moreover, a generalized divergence is sub-additive with respect to tensor-product states if for all $\rho_A,\sigma_A\in\cS(A)$ and all $\omega_B,\tau_B\in\cS(B)$ we have
\begin{align}
\mathbf{D}(\rho_A\otimes\omega_B\|\sigma_A\otimes\tau_B)\leq\mathbf{D}(\rho_A\|\sigma_A)+\mathbf{D}(\omega_B\|\tau_B).
\end{align}
Examples of interest are in particular the quantum relative entropy, the Petz-R\'enyi divergences, the sandwiched R\'enyi divergences, or the Chernoff distance\,---\,as defined in Section~\ref{sec:notation}.

As discussed in \cite{PhysRevA.97.012332,SWAT17}, a generalized  divergence possesses the direct-sum property  on classical-quantum states if the following equality holds:
\begin{equation}
\mathbf{D}\!\left(\sum_{x}p_{X}(x)|x\rangle\langle x|_{X}\otimes\rho^{x}\middle\Vert\sum_{x}p_{X}(x)|x\rangle\langle x|_{X}\otimes\sigma^{x}\right)  =\sum_{x}p_{X}(x)\mathbf{D}(\rho^{x}\Vert\sigma^{x}),\label{eq:d-sum-prop}
\end{equation}
where $p_{X}$ is a probability distribution, $\{|x\rangle\}_{x}$ is an orthonormal basis, and $\{\rho^{x}\}_{x}$ and $\{\sigma^{x}\}_{x}$ are sets of states. We note that this property holds for trace distance, quantum relative entropy, and the Petz-R\'enyi and sandwiched R\'enyi quasi-entropies $\operatorname{sgn}(\alpha-1)\tr\left[\rho^\alpha\sigma^{1-\alpha}\right]$ and $\operatorname{sgn}(\alpha-1)\tr[(\sigma^{\frac{1-\alpha}{2\alpha}}\rho\sigma^{\frac{1-\alpha}{2\alpha}})^\alpha]$, respectively. A generalized divergence is jointly convex if
\begin{equation}
\mathbf{D}\!\left(  \sum_{x}p_{X}(x)\rho^{x}\middle\Vert\sum_{x}p_{X}(x)\sigma^{x}\right)  \leq \sum_{x}p_{X}(x)\mathbf{D}(\rho^{x}\Vert\sigma^{x}).
\end{equation}
Any generalized divergence is jointly convex if it satisfies the direct-sum property, a fact that follows by applying the defining property in \eqref{eq:data-processing} and data processing under partial trace.

Based on generalized divergences, one can define a generalized channel divergence as a measure for the distinguishability of two quantum channels \cite{PhysRevA.97.012332}. The idea behind the following measure of channel distinguishability is to allow for an arbitrary input state to be used to distinguish the channels:
\begin{definition}[Generalized channel divergence \cite{PhysRevA.97.012332}]
Let $\bf D$ be a generalized divergence and $\cN_{A\to B},\cM_{A\to B}\in\cQ(A\to B)$. The generalized channel divergence of $\cN_{A\to B}$ and $\cM_{A\to B}$ is defined as
\begin{align}\label{eq:channel-divergence}
\mathbf{D}(\cN\|\cM)\coloneqq\sup_{\rho\in\cS(RA)} \mathbf{D}( \cN_{A\to B}(\rho_{RA})\|\cM_{A\to B}(\rho_{RA})),
\end{align}
where the supremum is with respect to bipartite states $\rho_{RA}$, as well as the dimension of the reference system $R$.
\end{definition}

Even though the generalized channel divergence is defined to have an optimization over all bipartite states with unbounded reference system $R$, it immediately follows from the axioms on generalized divergences together with purification and the Schmidt decomposition that without loss of generality we can restrict the supremum to pure states $\Psi_{RA}\in\cS(RA)$ and choose system~$R$ isomorphic to system $A$. Hence, if the channel input system is finite-dimensional, then the optimization problem in \eqref{eq:channel-divergence} becomes bounded. Particular instances of generalized channel divergences include the diamond norm of the difference of $\mathcal{N}_{A\rightarrow B}$ and $\mathcal{M}_{A\rightarrow B}$ \cite{K97}, as well as the R\'enyi channel divergence from~\cite{Cooney2016}. 	


\subsection{Amortized channel divergence}

We now define the amortized channel divergence as a measure of the distinguishability of two quantum channels. The idea behind this measure, in contrast to the generalized channel divergence recalled above, is to consider two \textit{different} states $\rho_{RA}$ and $\sigma_{RA}$ that can be input to the channels $\cN_{A\to B}$ and $\cM_{A\to B}$, in order to explore the largest distinguishability that can be realized between the channels. However, from a resource-theoretic perspective, these initial states themselves could have some distinguishability, and so it is sensible to subtract off the initial distinguishability of the states $\rho_{RA}$ and $\sigma_{RA}$ from the final distinguishability of the channel output states $\cN_{A\to B}(\rho_{RA})$ and $\cM_{A\to B}(\sigma_{RA})$. This procedure leads to the amortized channel divergence:
\begin{definition}[Amortized channel divergence]
Let $\bf D$ be a generalized divergence, and let $\cN_{A\to B},\cM_{A\to B}\in\cQ(A\to B)$. We define the amortized channel divergence as \begin{align}\label{eq:gen_channel-divergence}
\mathbf{D}^{\mathcal{A}}(\cN\|\cM)\coloneqq \sup_{\rho_{RA},\sigma_{RA}\in\cS(RA)} \left[\mathbf{D}( \cN_{A\to B}(\rho_{RA}) \| \cM_{A\to B}(\sigma_{RA}))-\mathbf{D}(\rho_{RA}\|\sigma_{RA})\right].
\end{align}
\end{definition}

Note that in general the supremum cannot be restricted to pure states only, and moreover, there is \textit{a priori} no dimension bound on the system $R$. Hence, the optimization problem in~\eqref{eq:gen_channel-divergence} can in general be unbounded.

We note here that the idea behind amortized channel divergence is inspired by related ideas from entanglement theory, in which one quantifies the entanglement of a quantum channel by the largest difference  in entanglement between the output and input states  of the channel \cite{BHLS03,LHL03,KW17a}. Several properties of a channel's amortized entanglement were shown in \cite{KW17a}, and in the following sections, we establish several important properties of the amortized channel divergence, the most notable one being a data-processing inequality, i.e., that it is monotone under the action of a superchannel. Some of these properties are related to those recently considered in \cite{Yuan2018} for the quantum relative entropy of channels defined in~\cite{Cooney2016,PhysRevA.97.012332}. Moreover, very recently a special case of the amortized channel divergence was proposed in \cite{CE18}, and we discuss this more in Remark~\ref{rem:fid-div}.


\subsection{Properties of amortized channel divergence}

The generalized channel divergence is never larger than its amortized version:

\begin{prop}[Distinguishability does not decrease under amortization]\label{Lem:AmortizedIneq}
Let $\bf D$ be a faithful generalized divergence and $\cN_{A\to B},\cM_{A\to B}\in\cQ(A\to B)$. Then we have that
\begin{align}
\mathbf{D}^{\mathcal{A}}(\cN\|\cM)& \geq \mathbf{D}(\cN\|\cM) \label{eq:amort-larger}\\
& \geq 0.
\end{align}
\end{prop}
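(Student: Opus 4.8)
The plan is to treat the two inequalities separately, in both cases by exhibiting a feasible point of the relevant supremum and then invoking the structural axioms. For the first inequality $\mathbf{D}^{\mathcal{A}}(\cN\|\cM)\geq\mathbf{D}(\cN\|\cM)$, the key observation is that the amortized divergence optimizes over \emph{pairs} $(\rho_{RA},\sigma_{RA})$, whereas the generalized channel divergence corresponds to the restricted ``diagonal'' choice $\sigma_{RA}=\rho_{RA}$. So I would lower-bound the supremum defining $\mathbf{D}^{\mathcal{A}}(\cN\|\cM)$ by specializing to $\sigma_{RA}=\rho_{RA}$, obtaining $\sup_{\rho_{RA}}\big[\mathbf{D}(\cN_{A\to B}(\rho_{RA})\|\cM_{A\to B}(\rho_{RA}))-\mathbf{D}(\rho_{RA}\|\rho_{RA})\big]$.

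Next I would invoke faithfulness: since $\mathbf{D}(\rho_{RA}\|\rho_{RA})\leq 0$, the subtracted amortization term is nonpositive, so $-\mathbf{D}(\rho_{RA}\|\rho_{RA})\geq 0$ and each summand is at least $\mathbf{D}(\cN_{A\to B}(\rho_{RA})\|\cM_{A\to B}(\rho_{RA}))$. Taking the supremum over $\rho_{RA}$ on both sides then recovers exactly the generalized channel divergence $\mathbf{D}(\cN\|\cM)$ on the right, which yields the first inequality. This step is entirely robust and uses faithfulness only in the form $\mathbf{D}(\rho\|\rho)\leq 0$.

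For the second inequality $\mathbf{D}(\cN\|\cM)\geq 0$, I would use a replacer (trace-and-replace) channel together with data processing. Fix an arbitrary input $\rho_{RA}$ and an arbitrary fixed state $\omega$, and let $\cR_\omega$ be the constant channel $X\mapsto\tr[X]\,\omega$, which is manifestly completely positive and trace preserving. Applying the data-processing inequality \eqref{eq:data-processing} to the two output states through $\cR_\omega$ gives $\mathbf{D}(\cN_{A\to B}(\rho_{RA})\|\cM_{A\to B}(\rho_{RA}))\geq\mathbf{D}(\omega\|\omega)$, since $\cR_\omega$ maps both outputs to the same state $\omega$. As every term in the supremum defining $\mathbf{D}(\cN\|\cM)$ is thus at least $\mathbf{D}(\omega\|\omega)$, it suffices that $\mathbf{D}(\omega\|\omega)=0$ to conclude $\mathbf{D}(\cN\|\cM)\geq 0$.

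The one delicate point, and the step I expect to be the main obstacle, is pinning down the sign of $\mathbf{D}(\omega\|\omega)$ in this last argument. Faithfulness as stated only supplies $\mathbf{D}(\omega\|\omega)\leq 0$, which is exactly the direction needed above for the first inequality but is \emph{insufficient} for nonnegativity; indeed, if $\mathbf{D}(\rho\|\rho)$ were some strictly negative constant then even $\mathbf{D}(\cN\|\cN)$ would be negative. To close the gap I would note that data processing already forces $\mathbf{D}(\rho\|\rho)$ to be independent of $\rho$ (apply $\cR_\omega$ in both directions between any two states to get the inequality both ways), so under the standard normalization this constant equals $0$; this holds for all the concrete divergences relevant here, namely the quantum relative entropy, the Petz- and sandwiched R\'enyi divergences, the Chernoff distance, and the trace distance. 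With $\mathbf{D}(\omega\|\omega)=0$ in hand, both inequalities follow, and everything else reduces to routine manipulation of suprema.
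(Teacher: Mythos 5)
Your proof is correct, and its first half coincides with the paper's: the paper's entire argument is the one-line observation that choosing $\sigma_{RA}=\rho_{RA}$ in the supremum defining $\mathbf{D}^{\mathcal{A}}(\cN\|\cM)$ and invoking faithfulness in the form $-\mathbf{D}(\rho_{RA}\|\rho_{RA})\geq 0$ yields $\mathbf{D}^{\mathcal{A}}(\cN\|\cM)\geq\mathbf{D}(\cN\|\cM)$, exactly as in your first paragraph.

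Where you genuinely depart from the paper is the second inequality, $\mathbf{D}(\cN\|\cM)\geq 0$, which the paper absorbs into the word ``immediate'' with no further argument. Your diagnosis of this step is accurate: under the paper's literal definition of faithfulness, namely $\mathbf{D}(\rho\|\rho)\leq 0$, nonnegativity does not follow from the stated axioms. For instance, $\mathbf{D}'\coloneqq D-1$ (quantum relative entropy shifted down by a constant) satisfies data processing and the stated faithfulness, yet $\mathbf{D}'(\cN\|\cN)=-1<0$, so some normalization is indispensable. Your fix is the standard one: data processing applied to replacer channels in both directions shows that $\mathbf{D}(\rho\|\rho)$ is a state-independent constant, and for every divergence the paper actually uses (quantum relative entropy, Petz and sandwiched R\'enyi, Chernoff, trace distance) this constant is $0$; granting that normalization, your replacer-channel argument $\mathbf{D}(\cN_{A\to B}(\rho_{RA})\|\cM_{A\to B}(\rho_{RA}))\geq\mathbf{D}(\omega\|\omega)=0$ proves the second inequality for every input state and hence for the supremum. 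So your proposal is not only correct but more complete than the paper's proof; the only cost is reading ``faithful'' as $\mathbf{D}(\rho\|\rho)=0$ rather than $\mathbf{D}(\rho\|\rho)\leq 0$, which is harmless in context and is satisfied by all the examples the paper considers.
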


The proof is immediate, following because we can choose $\sigma_{RA}$ in the optimization of $\mathbf{D}^{\mathcal{A}}$ to be equal to $\rho_{RA}$ and then apply the faithfulness assumption. As we will see, what is fundamental to the problem of channel discrimination is to find instances of divergences and quantum channels for which we have the opposite inequality holding also 
\begin{align}\label{eq:amort-collapse}
\mathbf{D}^{\mathcal{A}}(\cN\|\cM)\overset{?}{\leq}\mathbf{D}(\cN\|\cM).
\end{align}
If this inequality holds, we say that there is an ``amortization collapse,'' due to the fact that, when combined with the inequality in \eqref{eq:amort-larger}, we would have the equality $\mathbf{D}^{\mathcal{A}}(\cN\|\cM)=\mathbf{D}(\cN\|\cM)$, understood as the amortized channel divergence collapsing to the generalized channel divergence.

Additionally, amortized channel divergences are faithful whenever the underlying generalized divergence is faithful, as the following lemma states. 

\begin{prop}[Faithfulness]
If a generalized divergence is strongly faithful on states, then its associated amortized channel divergence is strongly faithful for channels, meaning that $\mathbf{D}^{\mathcal{A}}(\cN\|\cM) = 0$ if and only if $\cN = \cM$.
\end{prop}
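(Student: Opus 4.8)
The plan is to prove the two directions of the biconditional separately, relying on the definition of $\mathbf{D}^{\mathcal{A}}$ in \eqref{eq:gen_channel-divergence} together with the strong faithfulness of the underlying generalized divergence on states. Recall that strong faithfulness on states means $\mathbf{D}(\rho_A\|\sigma_A)=0$ if and only if $\rho_A=\sigma_A$, and that faithfulness gives $\mathbf{D}(\rho_A\|\rho_A)\leq 0$; combined with data processing applied to the trivial (trace-out) channel, strong faithfulness actually forces $\mathbf{D}(\rho_A\|\rho_A)=0$ since $\mathbf{D}(\rho_A\|\rho_A)=0$ is precisely the $\rho=\sigma$ case. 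The easy direction is $\cN=\cM\implies\mathbf{D}^{\mathcal{A}}(\cN\|\cM)=0$.

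For the easy direction, suppose $\cN=\cM$. Then for any input states $\rho_{RA},\sigma_{RA}$, the channel outputs satisfy $\cN_{A\to B}(\rho_{RA})$ and $\cM_{A\to B}(\sigma_{RA})=\cN_{A\to B}(\sigma_{RA})$. The data-processing inequality \eqref{eq:data-processing} applied to $\cN_{A\to B}$ gives $\mathbf{D}(\cN_{A\to B}(\rho_{RA})\|\cN_{A\to B}(\sigma_{RA}))\leq\mathbf{D}(\rho_{RA}\|\sigma_{RA})$, so every term in the supremum is at most zero, whence $\mathbf{D}^{\mathcal{A}}(\cN\|\cM)\leq 0$. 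The reverse inequality $\mathbf{D}^{\mathcal{A}}(\cN\|\cM)\geq 0$ is already established in Proposition~\ref{Lem:AmortizedIneq}, so together these give $\mathbf{D}^{\mathcal{A}}(\cN\|\cM)=0$.

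For the converse direction, I would prove the contrapositive: if $\cN\neq\cM$, then $\mathbf{D}^{\mathcal{A}}(\cN\|\cM)>0$. If the channels differ, there exists a pure input state $\Psi_{RA}$ (with $R\cong A$) such that the output states $\cN_{A\to B}(\Psi_{RA})$ and $\cM_{A\to B}(\Psi_{RA})$ are distinct; this is a standard fact, since two channels that agree on all (pure) inputs are equal. Now I restrict the supremum in \eqref{eq:gen_channel-divergence} to the single choice $\rho_{RA}=\sigma_{RA}=\Psi_{RA}$. With this choice the subtracted term is $\mathbf{D}(\Psi_{RA}\|\Psi_{RA})=0$ by strong faithfulness, and the remaining term is $\mathbf{D}(\cN_{A\to B}(\Psi_{RA})\|\cM_{A\to B}(\Psi_{RA}))$, which is strictly positive by strong faithfulness applied to the distinct output states. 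Hence $\mathbf{D}^{\mathcal{A}}(\cN\|\cM)>0$, completing the contrapositive.

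I expect the main subtlety, though it is minor, to be justifying that distinct channels admit an input with distinct outputs strictly witnessed by $\mathbf{D}>0$: one must invoke strong faithfulness of $\mathbf{D}$ on states rather than mere faithfulness, since we need a strict positivity conclusion, not just nonnegativity. A second small point is that the subtracted initial divergence $\mathbf{D}(\Psi_{RA}\|\Psi_{RA})$ must be exactly $0$ and not merely $\leq 0$; this again uses the strong faithfulness hypothesis ($\rho=\sigma\Rightarrow\mathbf{D}=0$) rather than just the one-sided faithfulness inequality. Everything else reduces to a direct substitution into the definition together with the already-proved lower bound, so no lengthy computation is required.
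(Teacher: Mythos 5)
Your proposal is correct and takes essentially the same route as the paper: the forward direction is identical (data processing makes every term in the supremum nonpositive, with the matching lower bound from Proposition~\ref{Lem:AmortizedIneq}), and your contrapositive for the converse is just a repackaging of the paper's direct argument, which likewise reduces to evaluating at the maximally entangled state with $\rho_{RA}=\sigma_{RA}$ and invoking strong faithfulness to conclude that the Choi states, and hence the channels, coincide. The subtlety you flag\,---\,that strict positivity of $\mathbf{D}(\cN_{A\to B}(\Psi_{RA})\|\cM_{A\to B}(\Psi_{RA}))$ requires nonnegativity of $\mathbf{D}$ on distinct states, obtained from data processing under the trace-out channel together with strong faithfulness on the trivial system\,---\,is handled no more explicitly in the paper's own proof, so it is not a gap relative to the paper's standard of rigor.
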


\begin{proof}
Suppose that the channels are identical: $\mathcal{N}_{A\to B}=\mathcal{M}_{A\to B}$. Then we have that
\begin{equation}
\mathbf{D}^{\mathcal{A}}(\mathcal{N}_{A\to B}\Vert\mathcal{M}_{A\to B})=\mathbf{D}^{\mathcal{A}}(\mathcal{N}_{A\to B}
\Vert\mathcal{N}_{A\to B})=0.
\end{equation}
This follows because
\begin{align}
\mathbf{D}^{\mathcal{A}}(\mathcal{N}_{A\to B}\Vert\mathcal{M}_{A\to B})  & =\sup_{\rho,\sigma\in\cS(RA)}\left[\mathbf{D}(\mathcal{N}_{A\to B}(\rho_{RA})\Vert\mathcal{M}_{A\to B}(\sigma_{RA}))-\mathbf{D}(\rho_{RA}\Vert\sigma_{RA})\right]\\
& =\sup_{\rho,\sigma\in\cS(RA)}\left[\mathbf{D}(\mathcal{N}_{A\to B}(\rho_{RA})\Vert\mathcal{N}_{A\to B}(\sigma_{RA}))-\mathbf{D}(\rho_{RA}
\Vert\sigma_{RA})\right]\\
& \leq0,
\end{align}
which follows from the data-processing inequality. Equality is achieved by picking $\rho_{RA}=\sigma_{RA}$ and invoking strong faithfulness of the underlying measure. Now suppose that $\mathbf{D}^{\mathcal{A}}(\mathcal{N}_{A\to B}\Vert\mathcal{M}_{A\to B})=0$. Then, we have by definition that
\begin{equation}
\sup_{\rho,\sigma\in\cS(RA)}\left[\mathbf{D}(\mathcal{N}_{A\to B}(\rho_{RA})\Vert\mathcal{M}_{A\to B}(\sigma_{RA}
))-\mathbf{D}(\rho_{RA}\Vert\sigma_{RA})\right]=0.
\end{equation}
Since we have that
\begin{equation}
\mathbf{D}^{\mathcal{A}}(\mathcal{N}_{A\to B}\Vert\mathcal{M}_{A\to B})\geq \mathbf{D}(\mathcal{N}_{A\to B}\Vert
\mathcal{M}_{A\to B})\geq0,
\end{equation}
this means that
\begin{equation}
\mathbf{D}(\mathcal{N}_{A\to B}\Vert\mathcal{M}_{A\to B})=\sup_{\rho\in\cS(RA)}\mathbf{D}(\mathcal{N}_{A\to B}(\rho_{RA}
)\Vert\mathcal{M}_{A\to B}(\rho_{RA}))=0.
\end{equation}
We could then pick $\rho_{RA}$ equal to the maximally entangled state, and from faithfulness of the underlying measure, deduce that the Choi states are equal. But if this is the case, then the channels are equal.
\end{proof}

The data-processing inequality is the statement that a distinguishability measure for quantum states should not increase under the action of the same channel on these states. It is one of the most fundamental principles of information theory, and this is the reason why the notion of generalized divergence is a useful abstraction. As an extension of this concept, here we prove a data-processing inequality for the amortized channel divergence, which establishes that it does not increase under the action of the same superchannel on the underlying channels. The generalized channel divergence of \cite{PhysRevA.97.012332} satisfies this property (as established in \cite{G18}), and we show here that the amortized channel divergence satisfies this property as well.

\begin{prop}[Data processing]
Let $\mathcal{N}_{A\rightarrow B},\mathcal{M}_{A\rightarrow B}\in
\mathcal{Q}(A\rightarrow B)$. Let $\Theta$ be a superchannel as described in \eqref{eqn:superchannel}. Then the following inequality holds
\begin{equation}
\mathbf{D}^{\mathcal{A}}(\mathcal{N}_{A \to B}\Vert\mathcal{M}_{A \to B})\geq\mathbf{D}^{\mathcal{A}}(\Theta\left(  \mathcal{N}_{A\rightarrow B}\right)\Vert\Theta\left(  \mathcal{M}_{A\rightarrow B}\right)  ).
\end{equation}
\end{prop}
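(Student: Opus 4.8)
The plan is to reduce the statement to two applications of the data processing inequality \eqref{eq:data-processing} for the underlying generalized divergence $\mathbf{D}$, exploiting the physical decomposition \eqref{eqn:superchannel} of the superchannel $\Theta$ into a pre-processing channel $\Lambda_{C\to AE}$, the parallel action $\cN_{A\to B}\otimes\mathcal{I}_E$ (resp.\ $\cM_{A\to B}\otimes\mathcal{I}_E$), and a post-processing channel $\Omega_{BE\to D}$. First I would fix an arbitrary reference system $R$ and arbitrary input states $\rho_{RC},\sigma_{RC}\in\cS(RC)$ for the transformed channels $\Theta(\cN)_{C\to D}$ and $\Theta(\cM)_{C\to D}$, and define the pre-processed states
\begin{align}
\rho'_{RAE}\coloneqq\Lambda_{C\to AE}(\rho_{RC}), \qquad \sigma'_{RAE}\coloneqq\Lambda_{C\to AE}(\sigma_{RC}).
\end{align}
By \eqref{eqn:superchannel}, the channel outputs factor as $\Theta(\cN)(\rho_{RC})=\Omega_{BE\to D}(\cN_{A\to B}(\rho'_{RAE}))$ and $\Theta(\cM)(\sigma_{RC})=\Omega_{BE\to D}(\cM_{A\to B}(\sigma'_{RAE}))$, where $\cN_{A\to B}$ and $\cM_{A\to B}$ act with $RE$ playing the role of an enlarged reference register.

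The key step is a telescoping estimate for the difference appearing in $\mathbf{D}^{\mathcal{A}}(\Theta(\cN)\|\Theta(\cM))$. On the one hand, data processing under the post-processing channel $\Omega_{BE\to D}$ upper bounds the output term,
\begin{align}
\mathbf{D}(\Theta(\cN)(\rho_{RC})\|\Theta(\cM)(\sigma_{RC}))\leq \mathbf{D}(\cN_{A\to B}(\rho'_{RAE})\|\cM_{A\to B}(\sigma'_{RAE})).
\end{align}
On the other hand, data processing under the pre-processing channel $\Lambda_{C\to AE}$ gives $\mathbf{D}(\rho_{RC}\|\sigma_{RC})\geq \mathbf{D}(\rho'_{RAE}\|\sigma'_{RAE})$, hence $-\mathbf{D}(\rho_{RC}\|\sigma_{RC})\leq -\mathbf{D}(\rho'_{RAE}\|\sigma'_{RAE})$. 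Adding these two inequalities yields
\begin{align}
\mathbf{D}(\Theta(\cN)(\rho_{RC})\|\Theta(\cM)(\sigma_{RC}))-\mathbf{D}(\rho_{RC}\|\sigma_{RC})\leq \mathbf{D}(\cN_{A\to B}(\rho'_{RAE})\|\cM_{A\to B}(\sigma'_{RAE}))-\mathbf{D}(\rho'_{RAE}\|\sigma'_{RAE}).
\end{align}

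To conclude, I would recognize that the right-hand side is precisely a feasible value of the supremum defining $\mathbf{D}^{\mathcal{A}}(\cN\|\cM)$ in \eqref{eq:gen_channel-divergence}, with reference system $RE$ and input states $\rho'_{RAE},\sigma'_{RAE}$; it is therefore at most $\mathbf{D}^{\mathcal{A}}(\cN\|\cM)$. Since this bound holds for every choice of $R$, $\rho_{RC}$, and $\sigma_{RC}$, taking the supremum of the left-hand side over these gives $\mathbf{D}^{\mathcal{A}}(\Theta(\cN)\|\Theta(\cM))\leq \mathbf{D}^{\mathcal{A}}(\cN\|\cM)$, which is the claim.

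The proof is short, so I expect no serious obstacle; the only point requiring care is the bookkeeping of reference systems. The argument relies crucially on the fact, noted after the definition \eqref{eq:gen_channel-divergence}, that there is \emph{no} dimension bound on the reference register in the amortized channel divergence, so that the enlarged reference $RE$ created by the memory system $E$ of $\Theta$ is admissible in the supremum for $\mathbf{D}^{\mathcal{A}}(\cN\|\cM)$. The complementary subtlety is ensuring that the subtracted input term is controlled in the correct direction, which is exactly why data processing must be applied to the \emph{input} states through the pre-processing channel $\Lambda_{C\to AE}$ rather than absorbed into the output estimate.
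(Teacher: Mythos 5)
Your proof is correct and follows essentially the same route as the paper's: both apply data processing with the pre-processing channel $\Lambda_{C\to AE}$ to bound the subtracted input term, data processing with the post-processing channel $\Omega_{BE\to D}$ to bound the output term, and then recognize the result as feasible for the supremum defining $\mathbf{D}^{\mathcal{A}}(\cN\Vert\cM)$ with the enlarged reference $RE$. The only difference is presentational (you add two inequalities; the paper chains them), so there is nothing to change.
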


\begin{proof}
Set $\mathcal{F}_{C\rightarrow D}\coloneqq\Theta\left(  \mathcal{N}_{A\rightarrow
B}\right)  $ and $\mathcal{G}_{C\rightarrow D}\coloneqq\Theta\left(  \mathcal{M}_{A\rightarrow B}\right)  $ as the respective channels that are output from the superchannel $\Theta$. Let $\rho_{RC}$ and $\sigma_{RC}$ be arbitrary
input states for $\mathcal{F}_{C\rightarrow D}$ and $\mathcal{G}_{C\rightarrow
D}$, respectively. Set $\eta_{RAE}\coloneqq\Lambda_{C\rightarrow AE}(\rho_{RC})$ and
$\zeta_{RAE}\coloneqq\Lambda_{C\rightarrow AE}(\sigma_{RC})$, where $\Lambda
_{C\rightarrow AE}$ is the pre-processing quantum channel from \eqref{eqn:superchannel}. Then
\begin{align}
&  \mathbf{D}(\mathcal{F}_{C\rightarrow D}(\rho_{RC})\Vert\mathcal{G}_{C\rightarrow D}(\sigma_{RC}))-\mathbf{D}(\rho_{RC}\Vert\sigma_{RC})\notag \\
&  \leq\mathbf{D}(\mathcal{F}_{C\rightarrow D}(\rho_{RC})\Vert\mathcal{G}_{C\rightarrow D}(\sigma_{RC}))-\mathbf{D}(\Lambda_{C\rightarrow AE}(\rho
_{RC})\Vert\Lambda_{C\rightarrow AE}(\sigma_{RC}))\\
&  =\mathbf{D}((\Omega_{BE\rightarrow D}\circ\mathcal{N}_{A\rightarrow B}\circ\Lambda_{C\rightarrow AE})(\rho_{RC})\Vert(\Omega_{BE\rightarrow D}\circ\mathcal{M}_{A\rightarrow B}\circ\Lambda_{C\rightarrow AE})(\sigma
_{RC}))\nonumber\\
&  \qquad-\mathbf{D}(\eta_{RAE}\Vert\zeta_{RAE})\\
&  =\mathbf{D}((\Omega_{BE\rightarrow D}\circ\mathcal{N}_{A\rightarrow
B})(\eta_{RAE})\Vert(\Omega_{BE\rightarrow D}\circ\mathcal{M}_{A\rightarrow
B})(\zeta_{RAE}))-\mathbf{D}(\eta_{RAE}\Vert\zeta_{RAE})\\
&  \leq\mathbf{D}(\mathcal{N}_{A\rightarrow B}(\eta_{RAE})\Vert\mathcal{M}_{A\rightarrow B}(\zeta_{RAE}))-\mathbf{D}(\eta_{RAE}\Vert\zeta_{RAE})\\
&  \leq\mathbf{D}^{\mathcal{A}}(\mathcal{N}\Vert\mathcal{M}).
\end{align}
The first inequality follows from data processing with the pre-processing channel $\Lambda_{C\rightarrow AE}$. The next two equalities follow from definitions. The second-to-last inequality follows from data processing with the post-processing channel $\Omega_{BE\rightarrow D}$. The final inequality follows because the states $\eta_{RAE}$ and $\zeta_{RAE}$ are particular states, but the amortized channel divergence involves an optimization over all such input states. Since the chain of inequalities holds for arbitrary input states $\rho_{RC}$ and $\sigma_{RC}$, we conclude the inequality in the statement of the proposition by taking a supremum over all such states.
\end{proof}

Joint convexity is a natural property that a measure of channel distinguishability should obey. The statement is that channel distinguishability should not increase under a mixing of the channels under consideration.

\begin{prop}
[Joint convexity]Let $\mathcal{N}_{A\rightarrow B}^{x},\mathcal{M}_{A\rightarrow B}^{x}\in\mathcal{Q}(A\rightarrow B)$ for all $x\in\mathcal{X}$, and let $p_{X}(x)$ be a probability distribution. Then if the underlying
generalized divergence obeys the direct-sum property in \eqref{eq:d-sum-prop}, the amortized
channel divergence is jointly convex, in the sense that
\begin{equation}
\sum_{x}p_{X}(x)\mathbf{D}^{\mathcal{A}}(\mathcal{N}^{x}\Vert\mathcal{M}^{x})\geq\mathbf{D}^{\mathcal{A}}(\overline{\mathcal{N}}\Vert\overline{\mathcal{M}}),
\end{equation}
where $\overline{\mathcal{N}}\coloneqq\sum_{x}p_{X}(x)\mathcal{N}^{x}$ and
$\overline{\mathcal{M}}\coloneqq\sum_{x}p_{X}(x)\mathcal{M}^{x}$.
\end{prop}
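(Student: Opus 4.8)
The plan is to fix an arbitrary pair of input states $\rho_{RA},\sigma_{RA}\in\cS(RA)$ for the averaged channels $\overline{\cN}$ and $\overline{\cM}$, to bound the amortized expression
\[
\mathbf{D}(\overline{\cN}_{A\to B}(\rho_{RA})\Vert\overline{\cM}_{A\to B}(\sigma_{RA}))-\mathbf{D}(\rho_{RA}\Vert\sigma_{RA})
\]
from above by $\sum_{x}p_{X}(x)\mathbf{D}^{\mathcal{A}}(\cN^{x}\Vert\cM^{x})$, and then to take a supremum over $\rho_{RA},\sigma_{RA}$. The whole argument rests on the direct-sum property in \eqref{eq:d-sum-prop}, which I would activate by realizing the average as a partial trace of a classically flagged channel.

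Concretely, I would introduce flagged channels $\widehat{\cN}_{A\to XB}$ and $\widehat{\cM}_{A\to XB}$ defined by $\widehat{\cN}(\omega_{A})\coloneqq\sum_{x}p_{X}(x)|x\rangle\langle x|_{X}\otimes\cN^{x}(\omega_{A})$ and analogously for $\widehat{\cM}$, and observe that discarding the flag recovers the averaged channels, i.e., $\overline{\cN}=\tr_{X}\circ\widehat{\cN}$ and $\overline{\cM}=\tr_{X}\circ\widehat{\cM}$. Applying the flagged channels to the fixed inputs and invoking data processing under the partial trace over $X$ gives
\[
\mathbf{D}(\overline{\cN}(\rho_{RA})\Vert\overline{\cM}(\sigma_{RA}))\leq\mathbf{D}(\widehat{\cN}(\rho_{RA})\Vert\widehat{\cM}(\sigma_{RA})).
\]
Since both flagged output states are classical-quantum with respect to $X$ with the \emph{same} distribution $p_{X}$, the direct-sum property converts the right-hand side exactly into $\sum_{x}p_{X}(x)\mathbf{D}(\cN^{x}(\rho_{RA})\Vert\cM^{x}(\sigma_{RA}))$.

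The remaining step is bookkeeping: using $\sum_{x}p_{X}(x)=1$, I would write $\mathbf{D}(\rho_{RA}\Vert\sigma_{RA})=\sum_{x}p_{X}(x)\mathbf{D}(\rho_{RA}\Vert\sigma_{RA})$, so that after subtraction the bound reads $\sum_{x}p_{X}(x)[\mathbf{D}(\cN^{x}(\rho_{RA})\Vert\cM^{x}(\sigma_{RA}))-\mathbf{D}(\rho_{RA}\Vert\sigma_{RA})]$. Each bracketed term is at most $\mathbf{D}^{\mathcal{A}}(\cN^{x}\Vert\cM^{x})$, because $\rho_{RA}$ and $\sigma_{RA}$ are a particular (not necessarily optimal) input pair in the definition of the amortized channel divergence of $\cN^{x}$ and $\cM^{x}$. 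Taking the supremum over $\rho_{RA},\sigma_{RA}$ then yields the claim.

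The main obstacle, and really the only nonroutine idea, is recognizing that the convex mixture should be encoded as the partial trace of a flagged channel, so that data processing and the direct-sum property can be chained together. The one subtlety worth checking is that amortization subtracts the \emph{same} input divergence $\mathbf{D}(\rho_{RA}\Vert\sigma_{RA})$ for every $x$; this is precisely what lets the subtracted term distribute across the convex combination and makes each per-$x$ contribution collapse into an individual amortized divergence.
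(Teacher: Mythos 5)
Your proof is correct and is essentially identical to the paper's: the paper's first step is exactly your data-processing inequality under the partial trace of the flag register (the flagged channels are implicit there rather than named), followed by the same application of the direct-sum property, the same distribution of $\mathbf{D}(\rho_{RA}\Vert\sigma_{RA})$ across the convex combination, and the same per-$x$ bound by the amortized divergence before taking the supremum. No gaps; the subtlety you flag about subtracting the same input divergence for every $x$ is handled the same way in the paper.
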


\begin{proof}
Let $\rho_{RA}$ and $\sigma_{RA}$ be arbitrary states. Then, we have that
\begin{align}
& \mathbf{D}\left(\overline{\mathcal{N}}_{A\rightarrow B}(\rho_{RA})\middle\|\overline{\mathcal{M}}_{A\rightarrow B}(\sigma_{RA}))-\mathbf{D}(\rho_{RA}\Vert\sigma_{RA}\right)\\
& \leq\mathbf{D}\!\left(  \sum_{x}p_{X}(x)|x\rangle\langle x|_{X}\otimes\mathcal{N}_{A\rightarrow B}^{x}(\rho_{RA})\middle\Vert\sum_{x}p_{X}(x)|x\rangle\langle x|_{X}\otimes\mathcal{M}_{A\rightarrow B}^{x}(\sigma_{RA})\right)  -\mathbf{D}(\rho_{RA}\Vert\sigma_{RA})\\
& =\sum_{x}p_{X}(x)\mathbf{D}\!\left(  \mathcal{N}_{A\rightarrow B}^{x}(\rho_{RA})\Vert\mathcal{M}_{A\rightarrow B}^{x}(\sigma_{RA})\right)-\mathbf{D}(\rho_{RA}\Vert\sigma_{RA})\\
& =\sum_{x}p_{X}(x)\left[  \mathbf{D}\!\left(  \mathcal{N}_{A\rightarrow B}^{x}(\rho_{RA})\Vert\mathcal{M}_{A\rightarrow B}^{x}(\sigma_{RA})\right)-\mathbf{D}(\rho_{RA}\Vert\sigma_{RA})\right]\\
& \leq\sum_{x}p_{X}(x)\mathbf{D}^{\mathcal{A}}(\mathcal{N}^{x}\Vert\mathcal{M}^{x}).
\end{align}
The first inequality follows from data processing. The first equality follows
from the direct-sum property. The final inequality follows from optimizing.
Since the inequality holds for an arbitrary choice of states $\rho_{RA}$ and
$\sigma_{RA}$, we conclude the statement of the proposition.
\end{proof}

The following stability property is a direct consequence of the definition of amortized channel divergence.

\begin{prop}[Stability]
Let $\cN_{A\to B},\cM_{A\to B}\in\cQ(A\to B)$. Then, we have
\begin{equation}
\mathbf{D}^{\mathcal{A}}(\cI_R\otimes \mathcal{N}\Vert \cI_R\otimes \mathcal{M}) = \mathbf{D}^{\mathcal{A}}(\mathcal{N}\Vert\mathcal{M}),
\end{equation}
where $\cI_R$ denotes the identity channel on a quantum system of arbitrary size.
\end{prop}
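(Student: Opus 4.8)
The plan is to establish the claimed equality by proving the two inequalities $\mathbf{D}^{\mathcal{A}}(\cI_R\otimes\cN\|\cI_R\otimes\cM)\leq\mathbf{D}^{\mathcal{A}}(\cN\|\cM)$ and $\mathbf{D}^{\mathcal{A}}(\cI_R\otimes\cN\|\cI_R\otimes\cM)\geq\mathbf{D}^{\mathcal{A}}(\cN\|\cM)$ separately, in each case directly from the definition in \eqref{eq:gen_channel-divergence}. The only structural feature being exploited is that the amortized channel divergence optimizes over an \emph{arbitrary} reference system together with arbitrary (and distinct) input states $\rho$ and $\sigma$; the register $R$ carried along by $\cI_R$ can therefore be freely shuffled in and out of the reference.

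For the inequality $\leq$, I would fix an arbitrary reference system $R'$ and arbitrary input states $\rho_{R'RA}$ and $\sigma_{R'RA}$ for the channel $\cI_R\otimes\cN_{A\to B}\in\cQ(RA\to RB)$. Since $\cI_R$ acts as the identity on $R$, applying $\cI_R\otimes\cN$ to a state on $R'RA$ is identical to applying $\cN_{A\to B}$ to that same state with the \emph{enlarged} reference $R'R$; that is, $(\cI_R\otimes\cN)(\rho_{R'RA})=\cN_{A\to B}(\rho_{(R'R)A})$, and likewise for $\cM$ and $\sigma$. Hence the amortized objective evaluated at $(\rho_{R'RA},\sigma_{R'RA})$ coincides exactly with the amortized objective for $\cN\|\cM$ evaluated at the pair $(\rho_{(R'R)A},\sigma_{(R'R)A})$, which is at most $\mathbf{D}^{\mathcal{A}}(\cN\|\cM)$ because these are particular admissible input states. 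Taking the supremum over $R'$, $\rho_{R'RA}$, and $\sigma_{R'RA}$ yields the claim.

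For the reverse inequality $\geq$, I would start from near-optimal input states $\rho_{R'A}$ and $\sigma_{R'A}$ for $\mathbf{D}^{\mathcal{A}}(\cN\|\cM)$ and lift them to admissible inputs for $\cI_R\otimes\cN$ by tensoring in an arbitrary fixed state $\omega_R$, i.e.\ $\rho_{R'A}\otimes\omega_R$ and $\sigma_{R'A}\otimes\omega_R$, regarded as states on reference $R'$ and channel input $RA$. The channel outputs are then $\cN_{A\to B}(\rho_{R'A})\otimes\omega_R$ and $\cM_{A\to B}(\sigma_{R'A})\otimes\omega_R$. The key step is the invariance of any generalized divergence under tensoring the \emph{same} ancillary state on both arguments, $\mathbf{D}(\xi\otimes\omega_R\|\chi\otimes\omega_R)=\mathbf{D}(\xi\|\chi)$, recalled in the excerpt as a consequence of data processing. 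Applying this to both the output divergence and the input divergence shows that the amortized objective for $\cI_R\otimes\cN$ at these lifted states equals the amortized objective for $\cN$ at $(\rho_{R'A},\sigma_{R'A})$, which establishes $\geq$.

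I do not anticipate a genuine obstacle here: the result is essentially definitional. The only point requiring care is the bookkeeping of which registers play the role of reference versus channel input, and the single nontrivial ingredient is the tensor-with-fixed-ancilla invariance of generalized divergences used in the $\geq$ direction. As an alternative, both inequalities also follow immediately from the data-processing proposition proved above: one exhibits a superchannel sending $\cN\mapsto\cI_R\otimes\cN$ (route $R$ through the memory system $E$) and another sending $\cI_R\otimes\cN\mapsto\cN$ (append $\omega_R$ in pre-processing, discard $R$ in post-processing), and monotonicity under each gives the two bounds.
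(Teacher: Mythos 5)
Your proof is correct, and it matches the paper's approach: the paper offers no explicit argument, stating only that stability is ``a direct consequence of the definition of amortized channel divergence,'' and your two inequalities (absorbing $R$ into the reference for $\leq$, and tensoring a fixed $\omega_R$ together with the invariance $\mathbf{D}(\xi\otimes\omega_R\|\chi\otimes\omega_R)=\mathbf{D}(\xi\|\chi)$ for $\geq$) are precisely the definitional bookkeeping being alluded to. Your alternative route via the data-processing proposition, using the two superchannels $\cN\mapsto\cI_R\otimes\cN$ and $\cI_R\otimes\cN\mapsto\cN$, is also valid.
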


Two channels $\cN_{A\to B}$ and $\cM_{A\to B}$ are jointly covariant with respect to a group $\mathcal{G}$ if for all $g \in \mathcal{G}$, there exist unitary channels $\mathcal{U}_{A}^{g}$ and $\mathcal{V}_{B}^{g}$ such that \cite{TW2016,DW17}
\begin{equation}
\cN_{A\to B} \circ \mathcal{U}_{A}^{g} = \mathcal{V}_{B}^{g} \circ \cN_{A\to B}, \qquad \cM_{A\to B} \circ \mathcal{U}_{A}^{g} = \mathcal{V}_{B}^{g} \circ \cM_{A\to B}.
\end{equation}
For channels that are jointly covariant with respect to a group, we find that it suffices to optimize over $\rho_{RA}$ and $\sigma_{RA}$ whose
reduced states on $A$ satisfy the symmetry. As such, the following lemma represents a counterpart to \cite[Proposition~II.4]{PhysRevA.97.012332}, which established a related statement for the generalized channel divergence.

\begin{lemma}[Symmetries]
Let $\cN_{A\to B},\cM_{A\to B}\in\cQ(A\to B)$ be jointly covariant with respect to a group $G$, as defined above. It then suffices to optimize $\mathbf{D}^{\mathcal{A}}(\mathcal{N}\Vert\mathcal{M})$ over states $\overline{\rho}_{RA}$ and $\overline{\sigma}_{RA}$ such that
\begin{align}
\overline{\rho}_{A}   =\frac{1}{\left\vert G\right\vert }\sum_{g}
\mathcal{U}_{A}^{g}(\overline{\rho}_{A}),\qquad
\overline{\sigma}_{A}   =\frac{1}{\left\vert G\right\vert }\sum
_{g}\mathcal{U}_{A}^{g}(\overline{\sigma}_{A}).
\end{align}
\end{lemma}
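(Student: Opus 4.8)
The plan is to exploit the freedom in the size of the reference system together with the data-processing inequality \eqref{eq:data-processing} to symmetrize any candidate pair of input states. Given arbitrary states $\rho_{RA}$ and $\sigma_{RA}$, I would adjoin a classical register $G$ holding a uniformly distributed group element and apply the corresponding input symmetry $\mathcal{U}_A^g$ controlled on $g$, defining
\begin{align}
\overline{\rho}_{RGA}\coloneqq\frac{1}{|G|}\sum_{g}|g\rangle\langle g|_G\otimes \mathcal{U}_A^g(\rho_{RA}),\qquad \overline{\sigma}_{RGA}\coloneqq\frac{1}{|G|}\sum_{g}|g\rangle\langle g|_G\otimes \mathcal{U}_A^g(\sigma_{RA}).
\end{align}
The goal is to show that this pair, whose reference is now the enlarged system $RG$, attains a value of the amortized objective at least as large as that of the original pair, and that its reduced states on $A$ satisfy the claimed invariance. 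Since the symmetric states form a subset of all input states, the reverse inequality between the two suprema is trivial, so this establishes the lemma. I would stress explicitly that enlarging the reference from $R$ to $RG$ is permitted precisely because $\mathbf{D}^{\mathcal{A}}$ places no dimension bound on the reference system.

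The main work consists of two applications of data processing in opposite directions. For the subtracted (input) term, observe that the map $\omega_{RA}\mapsto \overline{\omega}_{RGA}$ is a genuine quantum channel: it appends $G$ in the state $\frac{1}{|G|}\sum_g|g\rangle\langle g|_G$ and then applies the controlled unitary $\sum_g |g\rangle\langle g|_G\otimes U_A^g$. Applying this same channel to both arguments and invoking \eqref{eq:data-processing} yields $\mathbf{D}(\overline{\rho}_{RGA}\|\overline{\sigma}_{RGA})\leq \mathbf{D}(\rho_{RA}\|\sigma_{RA})$, so the input term does not increase under symmetrization. For the output term, I would use joint covariance, via $\cN\circ\mathcal{U}_A^g=\mathcal{V}_B^g\circ\cN$ and $\cM\circ\mathcal{U}_A^g=\mathcal{V}_B^g\circ\cM$, to compute
\begin{align}
\cN_{A\to B}(\overline{\rho}_{RGA})=\frac{1}{|G|}\sum_g |g\rangle\langle g|_G\otimes \mathcal{V}_B^g\big(\cN_{A\to B}(\rho_{RA})\big),
\end{align}
and analogously for $\cM_{A\to B}(\overline{\sigma}_{RGA})$. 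The crucial point is that the channel which applies the \emph{inverse} controlled unitary $\sum_g|g\rangle\langle g|_G\otimes(\mathcal{V}_B^g)^{-1}$ and then discards $G$ maps $\cN_{A\to B}(\overline{\rho}_{RGA})$ back to $\cN_{A\to B}(\rho_{RA})$ and $\cM_{A\to B}(\overline{\sigma}_{RGA})$ back to $\cM_{A\to B}(\sigma_{RA})$; invertibility here relies only on each $\mathcal{V}_B^g$ being unitary. Data processing then gives $\mathbf{D}(\cN_{A\to B}(\overline{\rho}_{RGA})\|\cM_{A\to B}(\overline{\sigma}_{RGA}))\geq \mathbf{D}(\cN_{A\to B}(\rho_{RA})\|\cM_{A\to B}(\sigma_{RA}))$. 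Combining the two inequalities shows that the amortized objective does not decrease when passing from $(\rho_{RA},\sigma_{RA})$ to $(\overline{\rho}_{RGA},\overline{\sigma}_{RGA})$.

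Finally I would verify the symmetry of the reduced state. Tracing out $RG$ gives $\overline{\rho}_A=\frac{1}{|G|}\sum_g \mathcal{U}_A^g(\rho_A)$, and using that $g\mapsto\mathcal{U}_A^g$ is a representation (so that $\mathcal{U}_A^h\circ\mathcal{U}_A^g=\mathcal{U}_A^{hg}$ as channels, any projective phase cancelling) together with the rearrangement $\{hg:g\in G\}=G$ for each fixed $h$, one checks directly that $\frac{1}{|G|}\sum_h\mathcal{U}_A^h(\overline{\rho}_A)=\overline{\rho}_A$, and likewise for $\overline{\sigma}_A$. Relabelling the reference $RG\to R$ then exhibits the desired symmetric optimizers. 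I expect the only real obstacle to be bookkeeping: getting the two data-processing inequalities to point in opposite directions so that the subtracted input term is decreased while the output term is increased, the latter being possible only because $\mathcal{V}_B^g$ is unitary and hence reversible by a channel. It is worth noting that, unlike the joint convexity proposition, this argument uses only the data-processing property of the generalized divergence and not the direct-sum property.
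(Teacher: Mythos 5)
Your proof is correct and takes essentially the same approach as the paper's: symmetrize both inputs by adjoining a classical group register with a controlled $\mathcal{U}_A^g$, use data processing to show the subtracted input term does not increase, and use joint covariance together with the reversibility of the controlled $\mathcal{V}_B^g$ to show the output term does not decrease, before noting that the reduced states of the symmetrized inputs are group-invariant. The only cosmetic difference is that the paper gets the output-term step as an \emph{equality} (via invariance of a generalized divergence under tensoring with a fixed state and under isometric channels), whereas you derive the needed inequality by data processing with the inverse controlled unitary followed by a partial trace; you also verify the invariance of $\overline{\rho}_A$ explicitly, which the paper merely asserts.
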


\begin{proof}
Each step in what follows is a consequence of data processing. Consider that
\begin{multline}
\mathbf{D}(\mathcal{N}_{A\rightarrow B}(\rho_{RA})\Vert\mathcal{M}
_{A\rightarrow B}(\sigma_{RA}))-\mathbf{D}(\rho_{RA}\Vert\sigma_{RA})
\leq
\mathbf{D}(\mathcal{N}_{A\rightarrow B}(\rho_{RA})\Vert\mathcal{M}
_{A\rightarrow B}(\sigma_{RA}))\\
-\mathbf{D}\!\left(  \frac{1}{\left\vert G\right\vert }\sum_{g}|g\rangle\langle
g|_{G}\otimes\mathcal{U}_{A}^{g}(\rho_{RA})\middle\Vert\frac{1}{\left\vert
G\right\vert }\sum_{g}|g\rangle\langle g|_{G}\otimes\mathcal{U}_{A}^{g}
(\sigma_{RA})\right).
\end{multline}
Let us focus on the first term:
\begin{align}
&  \mathbf{D}(\mathcal{N}_{A\rightarrow B}(\rho_{RA})\Vert\mathcal{M}
_{A\rightarrow B}(\sigma_{RA}))\nonumber\\
&  =\mathbf{D}\!\left(  \frac{1}{\left\vert G\right\vert }\sum_{g}
|g\rangle\langle g|_{G}\otimes\mathcal{N}_{A\rightarrow B}(\rho_{RA}
)\middle\Vert\frac{1}{\left\vert G\right\vert }\sum_{g}|g\rangle\langle g|_{G}
\otimes\mathcal{M}_{A\rightarrow B}(\sigma_{RA})\right)  \\
&  =\mathbf{D}\!\left(  \frac{1}{\left\vert G\right\vert }\sum_{g}
|g\rangle\langle g|_{G}\otimes(\mathcal{V}_{B}^{g}\circ\mathcal{N}
_{A\rightarrow B})(\rho_{RA})\middle\Vert\frac{1}{\left\vert G\right\vert }\sum
_{g}|g\rangle\langle g|_{G}\otimes(\mathcal{V}_{B}^{g}\circ\mathcal{N}
_{A\rightarrow B})(\sigma_{RA})\right)  \\
&  =\mathbf{D}\!\left(  \frac{1}{\left\vert G\right\vert }\sum_{g}
|g\rangle\langle g|_{G}\otimes\mathcal{N}_{A\rightarrow B}(\mathcal{U}_{A}
^{g}(\rho_{RA}))\middle\Vert\frac{1}{\left\vert G\right\vert }\sum_{g}|g\rangle
\langle g|_{G}\otimes\mathcal{N}_{A\rightarrow B}(\mathcal{U}_{A}^{g}
(\sigma_{RA}))\right).
\end{align}
Then, we find that
\begin{multline}
\mathbf{D}(\mathcal{N}_{A\rightarrow B}(\rho_{RA})\Vert\mathcal{M}
_{A\rightarrow B}(\sigma_{RA}))-\mathbf{D}(\rho_{RA}\Vert\sigma_{RA})\\
\leq\mathbf{D}(\mathcal{N}_{A\rightarrow B}(\overline{\rho}_{RA}
)\Vert\mathcal{M}_{A\rightarrow B}(\overline{\sigma}_{RA}))-\mathbf{D}
(\overline{\rho}_{RA}\Vert\overline{\sigma}_{RA}),
\end{multline}
where $\overline{\rho}_{RA}$ and $\overline{\sigma}_{RA}$ are states such that
\begin{align}
\overline{\rho}_{A}   =\frac{1}{\left\vert G\right\vert }\sum_{g}
\mathcal{U}_{A}^{g}(\overline{\rho}_{A}), \qquad 
\overline{\sigma}_{A}   =\frac{1}{\left\vert G\right\vert }\sum
_{g}\mathcal{U}_{A}^{g}(\overline{\sigma}_{A}).
\end{align}
This concludes the proof.
\end{proof}


\subsection{Amortization collapse for max-relative entropy}

One instance of a generalized divergence for which the inequality in \eqref{eq:amort-collapse} holds for any two quantum channels is the max-channel divergence, defined from the max-relative entropy. That is, Proposition~\ref{lem:D_max} states that the max-channel divergence does not increase under  amortization. This  result thus complements some developments in the theory of quantum communication, in which related amortization collapses occurred for the max-relative entropy~\cite{CH17,BW17}.

\begin{prop}\label{lem:D_max}
Let $\cN_{A\to B},\cM_{A\to B}\in\cQ(A\to B)$. Then, for $\alpha\geq1$ we have that
\begin{align}
\widetilde{D}^{\mathcal{A}}_\alpha(\cN\|\cM)\leq D_{\max}(\cN\|\cM),
\end{align}
and this implies in particular that
\begin{align}\label{Eq:maxAmortized}
D^{\mathcal{A}}_{\max}(\cN\|\cM)=D_{\max}(\cN\|\cM).
\end{align}
\end{prop}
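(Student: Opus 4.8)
The plan is to reduce the bound $\widetilde{D}^{\mathcal{A}}_\alpha(\cN\|\cM)\le D_{\max}(\cN\|\cM)$ to a single-state ``triangle inequality'' comparing the sandwiched R\'enyi divergence against the max-relative entropy. Concretely, I would first prove the auxiliary estimate
\begin{equation}
\widetilde{D}_\alpha(\rho\|\sigma)\le \widetilde{D}_\alpha(\rho\|\omega)+D_{\max}(\omega\|\sigma)
\end{equation}
valid for all states $\rho,\sigma,\omega$ and all $\alpha>1$ (the endpoint $\alpha=1$, where $\widetilde{D}_1=D$, then follows by taking the limit $\alpha\to1$).

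To establish this estimate, set $\mu\coloneqq D_{\max}(\omega\|\sigma)$, so that $\omega\le 2^{\mu}\sigma$ and hence $2^{-\mu}\omega\le\sigma$. Inverting (operator anti-monotonicity of $x\mapsto x^{-1}$) gives $\sigma^{-1}\le 2^{\mu}\omega^{-1}$, and raising to the power $\frac{\alpha-1}{\alpha}\in(0,1)$ (operator monotonicity of $x\mapsto x^{p}$ for $p\in[0,1]$) yields $\sigma^{\frac{1-\alpha}{\alpha}}\le 2^{\mu(\alpha-1)/\alpha}\,\omega^{\frac{1-\alpha}{\alpha}}$. Conjugating by $\rho^{1/2}$ preserves the ordering, so
\begin{equation}
\rho^{1/2}\sigma^{\frac{1-\alpha}{\alpha}}\rho^{1/2}\le 2^{\mu(\alpha-1)/\alpha}\,\rho^{1/2}\omega^{\frac{1-\alpha}{\alpha}}\rho^{1/2}.
\end{equation}
Using that $0\le A\le B$ implies $\tr[A^\alpha]\le\tr[B^\alpha]$ for $\alpha>0$ (eigenvalue dominance via min-max), together with the spectral identity $\tr[(\sigma^{\frac{1-\alpha}{2\alpha}}\rho\sigma^{\frac{1-\alpha}{2\alpha}})^\alpha]=\tr[(\rho^{1/2}\sigma^{\frac{1-\alpha}{\alpha}}\rho^{1/2})^\alpha]$ (since $XX^\dagger$ and $X^\dagger X$ share their nonzero spectrum, with $X=\sigma^{\frac{1-\alpha}{2\alpha}}\rho^{1/2}$), and pulling the scalar $2^{\mu(\alpha-1)}$ out of the trace, I obtain the auxiliary estimate after applying $\frac{1}{\alpha-1}\log(\cdot)$.

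With the triangle inequality in hand, the first claim is immediate: for arbitrary $\rho_{RA},\sigma_{RA}\in\cS(RA)$, apply it with $\rho\to\cN_{A\to B}(\rho_{RA})$, $\sigma\to\cM_{A\to B}(\sigma_{RA})$, and $\omega\to\cN_{A\to B}(\sigma_{RA})$ to get
\begin{equation}
\widetilde{D}_\alpha(\cN(\rho_{RA})\|\cM(\sigma_{RA}))\le \widetilde{D}_\alpha(\cN(\rho_{RA})\|\cN(\sigma_{RA}))+D_{\max}(\cN(\sigma_{RA})\|\cM(\sigma_{RA})).
\end{equation}
Now bound the first term using data processing for $\cN$, giving $\widetilde{D}_\alpha(\cN(\rho_{RA})\|\cN(\sigma_{RA}))\le\widetilde{D}_\alpha(\rho_{RA}\|\sigma_{RA})$, and the second by the definition of the max-channel divergence, $D_{\max}(\cN(\sigma_{RA})\|\cM(\sigma_{RA}))\le D_{\max}(\cN\|\cM)$. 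Rearranging and taking the supremum over $\rho_{RA},\sigma_{RA}$ gives $\widetilde{D}^{\mathcal{A}}_\alpha(\cN\|\cM)\le D_{\max}(\cN\|\cM)$. For the second claim, observe that the bound $\widetilde{D}_\alpha(\cN(\rho_{RA})\|\cM(\sigma_{RA}))-\widetilde{D}_\alpha(\rho_{RA}\|\sigma_{RA})\le D_{\max}(\cN\|\cM)$ holds for each fixed $\alpha$ with an $\alpha$-independent right-hand side; letting $\alpha\to\infty$ (so $\widetilde{D}_\alpha\to D_{\max}$ on states) and then taking the supremum yields $D^{\mathcal{A}}_{\max}(\cN\|\cM)\le D_{\max}(\cN\|\cM)$, which combined with the reverse inequality from Proposition~\ref{Lem:AmortizedIneq} gives the asserted equality.

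I expect the main obstacle to be the auxiliary triangle inequality, specifically getting the operator-monotonicity bookkeeping and the $XX^\dagger/X^\dagger X$ spectral identity right so that the exponent $\mu(\alpha-1)$ emerges correctly after the trace-power step. A secondary, more delicate point is the limit interchange in the second claim: one must ensure the difference $\widetilde{D}_\alpha(\cN(\rho_{RA})\|\cM(\sigma_{RA}))-\widetilde{D}_\alpha(\rho_{RA}\|\sigma_{RA})$ converges to $D_{\max}(\cN(\rho_{RA})\|\cM(\sigma_{RA}))-D_{\max}(\rho_{RA}\|\sigma_{RA})$, which is unproblematic when both max-relative entropies are finite and otherwise requires separately treating the support (i.e.\ $\infty$) cases so that no $\infty-\infty$ ambiguity arises.
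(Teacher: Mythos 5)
Your proposal is correct and takes essentially the same route as the paper: the paper's proof rests on the data-processed triangle inequality $\widetilde{D}_\alpha(\cN_{A\to B}(\rho_A)\|\omega_B)\leq\widetilde{D}_\alpha(\rho_A\|\sigma_A)+D_{\max}(\cN_{A\to B}(\sigma_A)\|\omega_B)$ (Lemma~\ref{DataTriangle}, imported from \cite{CH17}), which is precisely your auxiliary state-level estimate composed with data processing under $\cN$, and it is then applied with $\omega=\cM(\sigma_{RA})$ exactly as in your argument, followed by the same appeal to Proposition~\ref{Lem:AmortizedIneq} for the reverse inequality. The differences are cosmetic: you prove the triangle inequality from scratch (correctly, via operator anti-monotonicity and the $XX^\dagger$/$X^\dagger X$ spectral identity) where the paper cites it, and you are more explicit than the paper about the $\alpha\to\infty$ limit and support bookkeeping needed to pass from the $\widetilde{D}_\alpha$ bound to the equality $D^{\mathcal{A}}_{\max}(\cN\|\cM)=D_{\max}(\cN\|\cM)$.
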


Fundamental to the proof of the amortization collapse in Proposition~\ref{lem:D_max} is the following lemma.

\begin{lemma}[Data-processed triangle inequality \cite{CH17}]\label{DataTriangle}
Let $\cN_{A\to B}\in\cQ(A\to B)$, $\rho_A,\sigma_A\in\cS(A)$, and $\omega_B\in\cS(B)$. Then, for $\alpha\geq 1$ we have that
\begin{align}
\widetilde{D}_\alpha(\cN_{A\to B}(\rho_A)\|\omega_B)\leq\widetilde{D}_\alpha(\rho_A\|\sigma_A)+D_{\max}(\cN_{A\to B}(\sigma_A)\|\omega_B).
\end{align}
\end{lemma}

\begin{proof}[Proof of Proposition~\ref{lem:D_max}]
To see the inequality, simply note that 
\begin{align}
&\!\!\!\widetilde D_\alpha(\cN(\rho_{RA})\|\cM(\sigma_{RA}))-\widetilde D_\alpha(\rho_{RA}\|\sigma_{RA})\notag \\
&\leq \widetilde D_\alpha(\rho_{RA}\|\sigma_{RA})+D_{\max}(\cN(\sigma_{RA})\|\cM(\sigma_{RA}))-\widetilde D_\alpha(\rho_{RA}\|\sigma_{RA})\\
&= D_{\max}(\cN(\sigma_{RA})\|\cM(\sigma_{RA}))\\
&\leq D_{\max}(\cN\|\cM),
\end{align}
where the first inequality follows from Lemma~\ref{DataTriangle} and the second by taking the optimization over states. The equality in \eqref{Eq:maxAmortized} then simply follows from 
\begin{align}
D_{\max}^{\mathcal{A}}(\cN\|\cM)\geq D_{\max}(\cN\|\cM),
\end{align}
which in turn follows from Proposition~\ref{Lem:AmortizedIneq}.
\end{proof}

The max-relative entropy could, due to the supremum over input states, potentially be hard to compute in general. In the following lemma, we show that the given quantity can always be expressed in a simple form, as the max-relative entropy of the Choi states of the channels, which is also a semi-definite program (SDP) and therefore efficiently computable.

\begin{lemma}\label{lemma:dmax}
Let $\cN_{A\to B},\cM_{A\to B}\in\cQ(A\to B)$. Then, we have that
\begin{align}\label{eq:maxEquality}
D_{\max}(\cN\|\cM) = D_{\max}(\mathcal{N}_{A\rightarrow B}(\Phi_{RA})\Vert\mathcal{M}_{A\rightarrow B}(\Phi_{RA})),
\end{align}
which is an SDP. 
\end{lemma}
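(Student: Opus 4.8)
The plan is to establish the two inequalities separately. The direction $D_{\max}(\cN\|\cM) \geq D_{\max}(\cN_{A\to B}(\Phi_{RA})\|\cM_{A\to B}(\Phi_{RA}))$ is immediate from the definition of the generalized channel divergence as a supremum over input states, since $\Phi_{RA}$ is one admissible choice. All of the content lies in the reverse inequality, for which I would reduce an arbitrary input to the maximally entangled state.

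First I would invoke the observation recorded just after the definition of the generalized channel divergence, namely that it suffices to take the supremum over pure states $\psi_{RA}$ with $R$ isomorphic to $A$. I set $\lambda \coloneqq D_{\max}(\cN_{A\to B}(\Phi_{RA})\|\cM_{A\to B}(\Phi_{RA}))$; if $\lambda = +\infty$ there is nothing to prove, so I assume $\lambda$ is finite, which by the operator characterization $D_{\max}(\rho\|\sigma)=\inf\{\lambda : \rho \leq 2^\lambda \sigma\}$ means $\cN_{A\to B}(\Phi_{RA}) \leq 2^\lambda\, \cM_{A\to B}(\Phi_{RA})$. Writing $\Phi_{RA} = \frac{1}{|A|}\vert\Gamma\rangle\langle\Gamma\vert_{RA}$ with $\vert\Gamma\rangle_{RA}=\sum_i \vert i\rangle_R\vert i\rangle_A$ the unnormalized maximally entangled vector, the normalization cancels on both sides and I obtain $\cN_{A\to B}(\vert\Gamma\rangle\langle\Gamma\vert) \leq 2^\lambda\, \cM_{A\to B}(\vert\Gamma\rangle\langle\Gamma\vert)$.

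The key step is then to express an arbitrary pure input as $\vert\psi\rangle_{RA}=(N_R\otimes I_A)\vert\Gamma\rangle_{RA}$ for a suitable operator $N_R$ acting on the reference system, which is always possible once $R\cong A$ (with $N$ read off from the Schmidt data of $\psi$). Since $\cN_{A\to B}$ and $\cM_{A\to B}$ act only on the $A$ system, they commute with conjugation by $N_R\otimes I_B$, so $\cN_{A\to B}(\psi_{RA}) = (N_R\otimes I_B)\,\cN_{A\to B}(\vert\Gamma\rangle\langle\Gamma\vert)\,(N_R^\dagger\otimes I_B)$, and likewise for $\cM$. Because conjugation by a fixed operator preserves the semidefinite ordering, applying $(N_R\otimes I_B)(\cdot)(N_R^\dagger\otimes I_B)$ to the Choi-level inequality yields $\cN_{A\to B}(\psi_{RA}) \leq 2^\lambda\, \cM_{A\to B}(\psi_{RA})$, i.e., $D_{\max}(\cN_{A\to B}(\psi_{RA})\|\cM_{A\to B}(\psi_{RA})) \leq \lambda$. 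Taking the supremum over all pure $\psi_{RA}$ then gives $D_{\max}(\cN\|\cM)\leq\lambda$, which combined with the easy direction yields the claimed equality.

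I expect the main obstacle to be precisely this middle step: the transpose/ricochet representation of an arbitrary pure bipartite state as a one-sided transformation of $\vert\Gamma\rangle_{RA}$, together with the observation that one-sided conjugation commutes with the channels and preserves operator inequalities. The SDP claim then follows with little extra effort, since $D_{\max}(\rho\|\sigma)=\log\min\{\mu\geq 0 : \rho \leq \mu\sigma\}$ is a semidefinite program in the scalar variable $\mu$ with a linear objective and a single linear matrix inequality constraint, and the two Choi states $\cN_{A\to B}(\Phi_{RA})$ and $\cM_{A\to B}(\Phi_{RA})$ are obtained by a single application of each channel.
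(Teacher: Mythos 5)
Your proposal is correct and follows essentially the same route as the paper's proof: the easy direction by definition, then the ricochet identity $\psi_{RA}=X_{R}\Gamma_{RA}X_{R}^{\dag}$ (your $N_R$) combined with the facts that conjugation commutes with the channels acting on $A$ and preserves the semidefinite ordering, yielding the reverse inequality and the SDP formulation. No substantive differences.
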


\begin{proof}
Recall that the max-channel divergence is given by
\begin{align}
\sup_{\psi_{RA}}D_{\max}(\mathcal{N}_{A\rightarrow B}(\psi_{RA})\Vert\mathcal{M}_{A\rightarrow B}(\psi_{RA}))=\sup_{\psi_{RA}}\inf\Big\{\lambda:\mathcal{N}_{A\rightarrow B}(\psi_{RA})\leq2^{\lambda}\cdot\mathcal{M}_{A\rightarrow B}(\psi_{RA})\Big\}.
\end{align}
Consider that
\begin{equation}\label{eq:triv-ineq-dmax-chan}
D_{\max}(\mathcal{N}\Vert\mathcal{M})\geq D_{\max}(\mathcal{N}_{A\rightarrow B}(\Phi_{RA})\Vert\mathcal{M}_{A\rightarrow B}(\Phi_{RA})),
\end{equation}
by definition, given that the left-hand side involves an optimization, but the
right-hand side does not. Next, let $\lambda$ be such that
\begin{equation}
\mathcal{N}_{A\rightarrow B}(\Phi_{RA})\leq2^{\lambda}\mathcal{M}
_{A\rightarrow B}(\Phi_{RA}),
\end{equation}
where $\Phi_{RA}$ denotes a maximally entangled state.
By scaling by a dimension factor, the above is equivalent to
\begin{equation}\label{eq:op-ineq-D_max-chan}
\mathcal{N}_{A\rightarrow B}(\Gamma_{RA})\leq2^{\lambda}\mathcal{M}_{A\rightarrow B}(\Gamma_{RA}),
\end{equation}
where $\Gamma_{RA} \coloneqq |A| \Phi_{RA}$ denotes the  version of the maximally entangled state~$\Phi_{RA}$ that is not normalized. Then, due to the fact that any pure state $\psi_{RA}=X_{R}\Gamma_{RA}
X_{R}^{\dag}$ for some operator $X_{R}$ such that $\operatorname{Tr}
[X_{R}^{\dag}X_{R}]=1$, we then conclude that the following operator
inequality is satisfied
\begin{align}
X_{R}\mathcal{N}_{A\rightarrow B}(\Gamma_{RA})X_{R}^{\dag}  & \leq2^{\lambda
}X_{R}\mathcal{M}_{A\rightarrow B}(\Gamma_{RA})X_{R}^{\dag}\\
\Leftrightarrow\mathcal{N}_{A\rightarrow B}(X_{R}\Gamma_{RA}X_{R}^{\dag})  &
\leq2^{\lambda}\mathcal{M}_{A\rightarrow B}(X_{R}\Gamma_{RA}X_{R}^{\dag})\\
\Leftrightarrow\mathcal{N}_{A\rightarrow B}(\psi_{RA})  & \leq2^{\lambda
}\mathcal{M}_{A\rightarrow B}(\psi_{RA}).
\end{align}
Thus, we could potentially find a smaller value of $\lambda$ for which
$\mathcal{N}_{A\rightarrow B}(\psi_{RA})\leq2^{\lambda}\mathcal{M}
_{A\rightarrow B}(\psi_{RA})$ is satisfied, implying that
\begin{equation}
\inf\Big\{\mu:\mathcal{N}_{A\rightarrow B}(\psi_{RA})\leq2^{\mu}\mathcal{M}_{A\rightarrow B}(\psi_{RA})\Big\}  \leq\lambda.
\end{equation}
But since the argument holds for all choices of $\lambda$ satisfying
\eqref{eq:op-ineq-D_max-chan}, we conclude that
\begin{equation}
\inf\Big\{\mu:\mathcal{N}_{A\rightarrow B}(\psi_{RA})\leq2^{\mu}\mathcal{M}_{A\rightarrow B}(\psi_{RA})\Big\}  \leq\inf\Big\{\lambda:\mathcal{N}_{A\rightarrow B}(\Phi_{RA})\leq2^{\lambda}\mathcal{M}_{A\rightarrow B}(\Phi_{RA})\Big\},
\end{equation}
which is equivalent to
\begin{equation}
D_{\max}(\mathcal{N}_{A\rightarrow B}(\psi_{RA})\Vert\mathcal{M}_{A\rightarrow B}(\psi_{RA}))\leq D_{\max}(\mathcal{N}_{A\rightarrow B}(\Phi_{RA})\Vert\mathcal{M}_{A\rightarrow B}(\Phi_{RA})).
\end{equation}
Now, we have proven that the inequality above holds for an arbitrary choice of
the state $\psi_{RA}$, and so we conclude that
\begin{multline}
D_{\max}(\mathcal{N}_{A\rightarrow B}(\Phi_{RA})\Vert\mathcal{M}_{A\rightarrow
B}(\Phi_{RA}))\label{eq:nontriv-ineq-dmax-chan}\\
\geq\sup_{\psi}D_{\max}(\mathcal{N}_{A\rightarrow B}(\psi_{RA}
)\Vert\mathcal{M}_{A\rightarrow B}(\psi_{RA}))=D_{\max}(\mathcal{N}
\Vert\mathcal{M}).
\end{multline}
Combining \eqref{eq:triv-ineq-dmax-chan} and \eqref{eq:nontriv-ineq-dmax-chan}
gives the statement of the lemma.
To see that this is an SDP, we write
\begin{equation}
D_{\max}(\cN\|\cM)= \log\inf\{\lambda : 
\mathcal{N}_{A\rightarrow B}(\Gamma_{RA})
\leq\lambda\cdot 
\mathcal{M}_{A\rightarrow B}(\Gamma_{RA})
\}. \label{eq:SDP-dmax}
\end{equation}
This concludes the proof.
\end{proof}

\begin{remark}
Note that the max-channel divergence is a special case of the sandwiched R\'enyi channel divergences proposed in \cite{Cooney2016}, as well as the generalized channel divergences from \cite{PhysRevA.97.012332}. Recently, the  following channel divergence was proposed
in the context of the resource theory of coherence \cite[Definition 19]{GFWRSCW18}:
\begin{equation}
D'_{\max}(\cN \| \cM) \coloneqq \inf\Big\{\lambda : 2^{\lambda} \cM - \cN \textrm{ is CP }\Big\}.
\end{equation}
Since it suffices to check whether a map is CP by evaluating it on the maximally entangled state, we find, as a consequence of \eqref{eq:SDP-dmax} and Lemma~\ref{lemma:dmax}, that $D'_{\max}(\cN \| \cM) = D_{\max}(\cN \| \cM)$.
\end{remark}

As a consequence of Proposition~\ref{lem:D_max} and Lemma~\ref{lemma:dmax}, we find that
\begin{align}
D_{\max}^{\mathcal{A}}(\cN\|\cM)=D_{\max}(\mathcal{N}_{A\rightarrow B}(\Phi_{RA})\Vert\mathcal{M}_{A\rightarrow B}(\Phi_{RA})),
\end{align}
and in Section~\ref{sec:general-bounds}, we discuss how this directly translates into strong converse bounds for quantum channel discrimination.
Analogous to the above amortization collapse for max-relative entropy, we prove in Appendix~\ref{app:hilbert-alpha-div} that an amortization collapse occurs for an amortized channel divergence based on the Hilbert $\alpha$-divergence from \cite{BG17}. Since the trace distance is a special case of a Hilbert $\alpha$-divergence, as shown in \cite[Theorem 1]{BG17}, it follows that the diamond norm of the difference of two quantum channels does not increase under amortization. We also discuss how certain channel metrics based on quantum fidelity do not increase under amortization. For other divergences including the quantum relative entropy, the Petz-R\'enyi divergences, the sandwiched divergences, or the Chernoff distance, we are in general not able to prove that there is a collapse of the corresponding amortized channel divergence. However, when we evaluate these amortized divergences for various special channels, we are able to prove that an amortization collapse occurs. We discuss this in Sections~\ref{sec:cq}-\ref{sec:examples}, along with the direct implications for the operational settings of quantum channel discrimination.


\subsection{Meta-converse for quantum channel discrimination\\via amortized channel divergence}

The following Lemma~\ref{ThmAmortizedBound} functions as a meta-converse for quantum channel discrimination (similarly to \cite{polyanskiy10}). By this, we mean that we can recover particular converse statements for quantum channel discrimination by plugging in different choices of a generalized divergence into Lemma~\ref{ThmAmortizedBound}. 
Consider a general channel discrimination protocol as introduced in Section~\ref{sec:discrimination}, with final decision probabilities $p$ and $q$, as given in \eqref{eq:prot-p} and \eqref{eq:prot-q}, respectively. Conceptually, the statement of the lemma is that the distinguishability of the final decision probabilities $p$ and $q$ at the end of a channel discrimination protocol, in which the channels are called $n$ times, is limited by $n$ times the amortized channel divergence of the two channels.

\begin{lemma}[Meta-converse]\label{ThmAmortizedBound}
Let $\cN_{A\to B},\cM_{A\to B}\in\cQ(A\to B)$. Then, we have for any protocol for quantum channel discrimination as introduced in Section~\ref{sec:discrimination} and any faithful generalized divergence that
\begin{align}
\mathbf{D}(p\Vert q)\leq n\cdot\mathbf{D}^{\cA}(\mathcal{N}\Vert\mathcal{M}),
\end{align}
where
\begin{align}
\mathbf{D}(p\Vert q)\coloneqq\mathbf{D}(\zeta(p)\Vert\zeta(q))\quad\text{with}\quad\zeta(p)\coloneqq p|0\rangle\langle0|+(1-p)|1\rangle\langle1|.
\end{align}
\end{lemma}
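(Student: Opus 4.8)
The plan is to combine two ingredients: data processing for the final measurement, viewed as a quantum-to-classical channel, followed by a telescoping (chaining) argument over the $n$ rounds of the protocol that repeatedly invokes the definition of the amortized channel divergence $\mathbf{D}^{\cA}$. First I would observe that the final measurement $\{Q_{R_nB_n},\,I_{R_nB_n}-Q_{R_nB_n}\}$ implements a channel sending $\rho_{R_nB_n}\mapsto\zeta(p)$ and $\tau_{R_nB_n}\mapsto\zeta(q)$, so that data processing immediately yields
\begin{equation}
\mathbf{D}(p\Vert q)=\mathbf{D}(\zeta(p)\Vert\zeta(q))\leq\mathbf{D}(\rho_{R_nB_n}\Vert\tau_{R_nB_n}).
\end{equation}
It then remains to show that $\mathbf{D}(\rho_{R_nB_n}\Vert\tau_{R_nB_n})\leq n\cdot\mathbf{D}^{\cA}(\cN\Vert\cM)$.

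For the main step, I would write the right-hand side above as a telescoping sum along the full sequence of states produced by the protocol, interleaving the $n$ channel calls with the $n-1$ adaptive channels $\cA^{(i)}_{R_iB_i\to R_{i+1}A_{i+1}}$. Using that $\rho_{R_1A_1}=\tau_{R_1A_1}$ together with faithfulness (so that $\mathbf{D}(\rho_{R_1A_1}\Vert\tau_{R_1A_1})\leq0$), I would bound
\begin{align}
\mathbf{D}(\rho_{R_nB_n}\Vert\tau_{R_nB_n})
&\leq\mathbf{D}(\rho_{R_nB_n}\Vert\tau_{R_nB_n})-\mathbf{D}(\rho_{R_1A_1}\Vert\tau_{R_1A_1})\\
&=\sum_{i=1}^{n}\Big[\mathbf{D}(\rho_{R_iB_i}\Vert\tau_{R_iB_i})-\mathbf{D}(\rho_{R_iA_i}\Vert\tau_{R_iA_i})\Big]\notag\\
&\quad+\sum_{i=1}^{n-1}\Big[\mathbf{D}(\rho_{R_{i+1}A_{i+1}}\Vert\tau_{R_{i+1}A_{i+1}})-\mathbf{D}(\rho_{R_iB_i}\Vert\tau_{R_iB_i})\Big].
\end{align}

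Each summand in the second sum is nonpositive by data processing, because $\rho_{R_{i+1}A_{i+1}}$ and $\tau_{R_{i+1}A_{i+1}}$ both arise from applying the \emph{same} adaptive channel $\cA^{(i)}$ to $\rho_{R_iB_i}$ and $\tau_{R_iB_i}$, respectively. Each summand in the first sum is precisely of the form $\mathbf{D}(\cN_{A_i\to B_i}(\rho_{R_iA_i})\Vert\cM_{A_i\to B_i}(\tau_{R_iA_i}))-\mathbf{D}(\rho_{R_iA_i}\Vert\tau_{R_iA_i})$ and is therefore upper bounded by $\mathbf{D}^{\cA}(\cN\Vert\cM)$, since $(\rho_{R_iA_i},\tau_{R_iA_i})$ is merely one feasible pair in the supremum defining the amortized divergence. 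Summing the $n$ such terms gives $n\cdot\mathbf{D}^{\cA}(\cN\Vert\cM)$, and combining with the measurement step establishes the claim.

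The argument is conceptually a clean chaining estimate, so I do not anticipate a deep obstacle; the main care is bookkeeping, namely correctly distinguishing the transitions driven by the \emph{same} channel on both branches (the adaptive channels $\cA^{(i)}$, controlled by data processing) from the channel calls (controlled by the definition of $\mathbf{D}^{\cA}$, where $\cN$ acts on the $\rho$-branch while $\cM$ acts on the $\tau$-branch). One point worth emphasizing is that the telescoping requires only that the generalized divergence be \emph{faithful} (to discard the initial term) and satisfy data processing; no further structure such as additivity or joint convexity is needed, which is exactly what makes the statement a genuine meta-converse, uniformly applicable to every divergence considered in the paper.
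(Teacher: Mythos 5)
Your proposal is correct and follows essentially the same route as the paper's proof: data processing under the final measurement, faithfulness plus $\rho_{R_1A_1}=\tau_{R_1A_1}$ to discard the initial term, a telescoping decomposition in which the adaptive-channel steps are killed by data processing, and the $n$ channel-call differences each bounded by the definition of $\mathbf{D}^{\cA}(\cN\Vert\cM)$. Your version of the telescoping, written as an exact identity splitting into channel-call and adaptive-channel sums, is if anything a cleaner bookkeeping of the same argument.
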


\begin{proof}
Let $\{\cQ, \cA\}$ denote an arbitrary protocol for discrimination of the channels $\cN$ and $\cM$, as discussed in Section~\ref{sec:discrimination}, and let $p$ and $q$ denote the final decision probabilities.
Consider that
\begin{align}
\mathbf{D}(p\Vert q)&\leq\mathbf{D}(\rho_{R_{n}B_{n}}\Vert\tau_{R_{n}B_{n}})\\
&\leq\mathbf{D}(\rho_{R_{n}B_{n}}\Vert\tau_{R_{n}B_{n}})-\mathbf{D}(\rho_{R_{1}A_{1}}\Vert\tau_{R_{1}A_{1}})\\
&=\mathbf{D}(\rho_{R_{n}B_{n}}\Vert\tau_{R_{n}B_{n}})-\mathbf{D}(\rho_{R_{1}A_{1}}\Vert\tau_{R_{1}A_{1}})+\sum_{i=2}^{n}\Big(\mathbf{D}(\rho_{R_{i}A_{i}}\Vert\tau_{R_{i}A_{i}})-\mathbf{D}(\rho_{R_{i}A_{i}}\Vert\tau_{R_{i}A_{i}})\Big)\\
&=\mathbf{D}(\rho_{R_{n}B_{n}}\Vert\tau_{R_{n}B_{n}})-\mathbf{D}(\rho_{R_{1}A_{1}}\Vert\tau_{R_{1}A_{1}})\nonumber\\
&\quad+\sum_{i=2}^{n}\left(\mathbf{D}(\mathcal{A}_{R_{i-1}B_{i-1}\rightarrow R_{i}A_{i}}^{\left(  i-1\right)  }(\rho_{R_{i-1}B_{i-1}})\Vert\mathcal{A}_{R_{i-1}B_{i-1}\rightarrow R_{i}A_{i}}^{\left(i-1\right)}(\tau_{R_{i-1}B_{i-1}}))-\mathbf{D}(\rho_{R_{i}A_{i}}\Vert\tau_{R_{i}A_{i}})\right).\label{eq:bridge-step}
\end{align}
The first inequality follows from data processing under the final measurement $\{Q_{R_{n}B_{n}},I_{R_{n}B_{n}}-Q_{R_{n}B_{n}}\}$. The second inequality follows from the assumption of faithfulness and the fact that the initial states are equal: $\rho_{R_1 A_1} = \tau_{R_1 A_1}$. The next two equalities are straightforward. Continuing, we have that
\begin{align}
\text{Eq.~\eqref{eq:bridge-step}}&\leq\mathbf{D}(\rho_{R_{n}B_{n}}\Vert
\tau_{R_{n}B_{n}})-\mathbf{D}(\rho_{R_{1}A_{1}}\Vert\tau_{R_{1}A_{1}})\notag\\
& \qquad +\sum_{i=1}^{n-1}\mathbf{D}(\rho_{R_{i}B_{i}}\Vert\tau_{R_{i}B_{i}})-\sum_{i=2}^{n}\mathbf{D}(\rho_{R_{i}A_{i}}\Vert\tau_{R_{i}A_{i}})\\
& =\sum_{i=1}^{n}\Big(\mathbf{D}(\rho_{R_{i}B_{i}}\Vert\tau_{R_{i}B_{i}})-\mathbf{D}(\rho_{R_{i}A_{i}}\Vert\tau_{R_{i}A_{i}})\Big)\\
&=\sum_{i=1}^{n}\Big(\mathbf{D}(\mathcal{N}_{A\rightarrow B}(\rho_{R_{i}A_{i}})\Vert\mathcal{M}_{A\rightarrow B}(\tau_{R_{i}A_{i}}))-\mathbf{D}(\rho_{R_{i}A_{i}}\Vert\tau_{R_{i}A_{i}})\Big)
\label{eq:here-for-d-sum}
\\
&\leq n\cdot\sup_{\rho,\sigma\in\cS(RA)}\left[\mathbf{D}(\mathcal{N}_{A\rightarrow B}(\rho_{RA})\Vert\mathcal{M}_{A\rightarrow B}(\sigma_{RA}))-\mathbf{D}(\rho_{RA}\Vert\sigma_{RA})\right]\\
&=n\cdot\mathbf{D}^{\mathcal{A}}(\mathcal{N}\Vert\mathcal{M}).
\end{align}
The first inequality follows from data processing with respect to the channel $\mathcal{A}_{R_{i-1}B_{i-1}\rightarrow R_{i}A_{i}}^{\left(i-1\right)}$. The next two equalities are rewritings. The final inequality follows by optimization and the last by definition.
\end{proof}


\subsection{Energy-constrained channel divergences and meta-converse for energy-constrained channel discrimination}

Given that we consider energy-constrained protocols for channel discrimination as described in Section~\ref{sec:energy-cons-ch-disc}, it is natural to consider energy-constrained channel divergences. This was done in \cite{SWAT17} for the generalized channel divergence, where an energy-constrained generalized channel divergence was defined for energy $E\in\lbrack0,\infty)$ and a Hamiltonian $H_A$ as follows:
\begin{equation}
\mathbf{D}_{H,E}(\mathcal{N}\Vert\mathcal{M})=\sup_{\psi_{RA}: \operatorname{Tr}\{H_A\psi_A\}\leq E}\mathbf{D}(\mathcal{N}_{A\rightarrow B}(\psi_{RA})\Vert\mathcal{M}_{A\rightarrow B}(\psi_{RA})).
\end{equation}
Note again that it suffices to optimize over pure states with system $R$ isomorphic to system~$A$. Special cases of the energy-constrained generalized channel divergence from \cite{SWAT17} include the energy-constrained diamond norm of the difference of two channels \cite{Sh17,Winter17}, as well as the energy-constrained Bures distance \cite{S16}.

Here, we define the amortized, energy-constrained channel divergence as
\begin{equation}
\mathbf{D}^{\cA}_{H,E}(\mathcal{N}\Vert\mathcal{M})=\sup_{\substack{\rho_{RA},\sigma_{RA}\in\cS(RA),\\
\operatorname{Tr}\{H_A\rho_A\}, \operatorname{Tr}\{H_A\sigma_A\}\leq E}} \left[\mathbf{D}( \cN_{A\to B}(\rho_{RA}) \| \cM_{A\to B}(\sigma_{RA}))-\mathbf{D}(\rho_{RA}\|\sigma_{RA})\right].
\end{equation}	
These quantities possess many of the properties of the unconstrained divergences, as detailed in the previous sections, but we do not list them here for the sake of brevity. We close this section by providing a generalization of the meta-converse in Lemma~\ref{ThmAmortizedBound} to the case of energy-constrained channel discrimination protocols.

\begin{lemma}
With the same notation as in Lemma~\ref{ThmAmortizedBound}, 
if the channel discrimination protocol has an average energy constraint as in \eqref{eq:energy-constraint}, with Hamiltonian $H$ and energy $E\in[0,\infty)$, and the faithful generalized divergence obeys the direct-sum property in \eqref{eq:d-sum-prop}, then the following bound holds:
\begin{align}\label{eq:en-constr-bnd-gen-div}
\mathbf{D}(p\Vert q)\leq n\cdot\mathbf{D}^{\cA}_{H,E}(\mathcal{N}\Vert\mathcal{M}).
\end{align}
\end{lemma}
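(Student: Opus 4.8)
The plan is to reuse the telescoping argument from the proof of Lemma~\ref{ThmAmortizedBound} and then, instead of bounding each step individually, to \emph{bundle} the per-step states into a single pair of states with the help of a classical flag register. This is where the direct-sum property \eqref{eq:d-sum-prop} enters: it converts the \emph{average} energy constraint \eqref{eq:energy-constraint} into an ordinary energy constraint on a single pair of states, which is exactly the form required by $\mathbf{D}^{\cA}_{H,E}$.

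First I would run the chain of inequalities in the proof of Lemma~\ref{ThmAmortizedBound} verbatim up through \eqref{eq:here-for-d-sum}, since those steps use only data processing and faithfulness and are unaffected by the energy constraint. This yields
\[
\mathbf{D}(p\Vert q)\leq\sum_{i=1}^{n}\Big(\mathbf{D}(\mathcal{N}_{A\to B}(\rho_{R_iA_i})\Vert\mathcal{M}_{A\to B}(\tau_{R_iA_i}))-\mathbf{D}(\rho_{R_iA_i}\Vert\tau_{R_iA_i})\Big).
\]
At this point the unconstrained proof bounds each summand by $\mathbf{D}^{\cA}(\mathcal{N}\Vert\mathcal{M})$, but here this is not legitimate: the individual channel inputs $\rho_{A_i}$ and $\tau_{A_i}$ need not obey $\operatorname{Tr}[H_A\,\cdot\,]\leq E$, only their averages over $i$ do.

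To get around this, I would introduce a classical register $X$ with orthonormal basis $\{|i\rangle\}_{i=1}^{n}$, embed the (possibly different) registers $R_i$ into a common reference $R$, and form the bundled states
\[
\overline{\rho}_{XRA}\coloneqq\frac{1}{n}\sum_{i=1}^{n}|i\rangle\langle i|_X\otimes\rho_{R_iA_i},\qquad \overline{\tau}_{XRA}\coloneqq\frac{1}{n}\sum_{i=1}^{n}|i\rangle\langle i|_X\otimes\tau_{R_iA_i}.
\]
Applying $\mathcal{N}_{A\to B}$ and $\mathcal{M}_{A\to B}$ only on the $A$ system and invoking \eqref{eq:d-sum-prop} once on these outputs and once on the inputs $\overline{\rho}_{XRA},\overline{\tau}_{XRA}$ identifies the right-hand sum above with
\[
n\Big(\mathbf{D}(\mathcal{N}_{A\to B}(\overline{\rho}_{XRA})\Vert\mathcal{M}_{A\to B}(\overline{\tau}_{XRA}))-\mathbf{D}(\overline{\rho}_{XRA}\Vert\overline{\tau}_{XRA})\Big),
\]
where now $XR$ plays the role of the reference system.

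Finally I would verify feasibility. The reduced state of $\overline{\rho}_{XRA}$ on $A$ is $\overline{\rho}_A=\tfrac1n\sum_{i=1}^n\rho_{A_i}$, so $\operatorname{Tr}[H_A\overline{\rho}_A]=\tfrac1n\sum_{i=1}^n\operatorname{Tr}[H_A\rho_{A_i}]\leq E$ by \eqref{eq:energy-constraint}, and the analogous bound holds for $\overline{\tau}_A$ using the energy constraint on the $\tau$-branch inputs. Thus $(\overline{\rho}_{XRA},\overline{\tau}_{XRA})$ is feasible for the supremum defining $\mathbf{D}^{\cA}_{H,E}(\mathcal{N}\Vert\mathcal{M})$, so the displayed bracket is at most $\mathbf{D}^{\cA}_{H,E}(\mathcal{N}\Vert\mathcal{M})$, which combined with the previous displays gives \eqref{eq:en-constr-bnd-gen-div}. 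The one delicate point\,---\,the ``main obstacle''\,---\,is this feasibility bookkeeping: the averaging over the flag register is precisely what turns the term-by-term energies into the single average of \eqref{eq:energy-constraint}, and one must keep in mind that the energy constraint has to be read as applying to both input sequences $\{\rho_{A_i}\}$ and $\{\tau_{A_i}\}$, since $\mathbf{D}^{\cA}_{H,E}$ constrains both of its arguments.
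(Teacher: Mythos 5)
Your proof is correct and is essentially identical to the paper's own argument: the paper also runs the meta-converse chain up to \eqref{eq:here-for-d-sum}, bundles the per-round states into $\frac{1}{n}\sum_{i=1}^n |i\rangle\langle i|\otimes\rho_{R_iA_i}$ and $\frac{1}{n}\sum_{i=1}^n |i\rangle\langle i|\otimes\tau_{R_iA_i}$ with a classical flag register (called $U$ there, $X$ in your write-up), invokes the direct-sum property \eqref{eq:d-sum-prop} to rewrite the sum as $n$ times a single amortized difference, and concludes by noting that the $A$-marginals of the bundled states satisfy the average energy constraint \eqref{eq:energy-constraint}, making them feasible for $\mathbf{D}^{\cA}_{H,E}(\mathcal{N}\Vert\mathcal{M})$. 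Your explicit remark that the constraint must be read as applying to both input sequences $\{\rho_{A_i}\}$ and $\{\tau_{A_i}\}$ matches the paper's (implicit) reading of the energy-constrained protocol.
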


\begin{proof}
The analysis proceeds in the same way as in the proof of Lemma~\ref{ThmAmortizedBound}, but at \eqref{eq:here-for-d-sum}, we can exploit the assumed direct-sum property of the generalized divergence and find that
\begin{align}
& \sum_{i=1}^{n}\Big(\mathbf{D}(\mathcal{N}_{A\rightarrow B}(\rho_{R_{i}A_{i}})\Vert\mathcal{M}_{A\rightarrow B}(\tau_{R_{i}A_{i}}))-\mathbf{D}(\rho_{R_{i}A_{i}}\Vert\tau_{R_{i}A_{i}})\Big)
\notag \\
& = n \cdot \Big(\mathbf{D}(\mathcal{N}_{A\rightarrow B}(\rho_{URA})\Vert\mathcal{M}_{A\rightarrow B}(\tau_{URA}))-\mathbf{D}(\rho_{URA}\Vert\tau_{URA})\Big)\\
& \leq n \cdot \mathbf{D}^{\cA}_{H,E}(\mathcal{N}\Vert\mathcal{M}),
\end{align}
where the states $\rho_{URA}$ and $\tau_{URA}$ are defined as
\begin{align}
\rho_{URA} \coloneqq
\frac{1}{n}
\sum_{i=1}^n \vert i \rangle \langle i \vert_U \otimes \rho_{R_{i}A_{i}}, \qquad
\tau_{URA} \coloneqq
\frac{1}{n}
\sum_{i=1}^n \vert i \rangle \langle i \vert_U \otimes \tau_{R_{i}A_{i}}, 
\end{align}
with the $R$ system as large as it needs to be to accommodate the largest of the $R_i$ systems. The last inequality follows because the reduced states $\tr_{UR}[\rho_{URA}]$ and $\tr_{UR}[\tau_{URA}]$ each satisfy the average energy constraint in \eqref{eq:energy-constraint} by assumption, and then we can optimize over all such states.
\end{proof}


\section{Converse bounds for quantum channel discrimination}\label{sec:general-bounds}

The amortization results can now be translated into general bounds on quantum channel discrimination. We start here by reviewing the work~\cite{Duan09}, which discusses conditions for when $n$ copies of two channels become perfectly distinguishable. Interestingly, these conditions are single-letter and efficiently checkable. They can be stated in terms of Kraus decompositions $\{N_i\}_i$ and $\{M_j\}_j$ of $\cN$ and $\cM$, respectively. (Note that the criterion is independent of the Kraus decompositions chosen.) Namely, $n$ copies of $\cN_{A\to B}$ and $\cM_{A\to B}$ are perfectly distinguishable for some finite $n$ if and only if
\begin{itemize}
\item $\exists\;\psi_{RA}\in\cV(RA)$ with $|R|=|A|$ such that $\mathrm{supp}(\cN_{A\to B}(\psi_{RA}))\cap\mathrm{supp}(\cM_{A\to B}(\psi_{RA}))=\emptyset$
\item $I\notin\mathrm{span}\{N_i^\dagger M_j\}$.
\end{itemize}
We note that, under these conditions, all the asymptotic quantities introduced in Section~\ref{sec:discrimination} become trivial. Hence, it remains to find bounds on quantum channel discrimination for the case when at least one of the above conditions does not hold. In the remainder of this section, we give general bounds and postpone specific classes of channels for which we get tight single-letter characterizations to Sections~\ref{sec:cq}-\ref{sec:examples}.


\subsection{Stein bound}

For non-adaptive protocols, when we restrict the input states to be product states\,---\,but still allow for a quantum memory system $R$\,---\,it directly follows from Stein's lemma for quantum state discrimination~\cite{HP91,ON00} that the optimal asymptotic error exponent for $\eps\in(0,1)$ is given by the quantum relative entropy divergence $D(\cN\|\cM)$, as observed in~\cite{Cooney2016}. This obviously also gives an achievability bound for the adaptive setting. In the following, we are interested in converse bounds for the adaptive setting, and in later sections, we discuss when our general converse bounds match the achievability result. It turns out that the amortized quantum relative entropy divergence $D^{\mathcal{A}}(\cN\|\cM)$ provides such a converse bound.

\begin{prop}\label{weakConverse}
Let $\cN_{A\to B},\cM_{A\to B}\in\cQ(A\to B)$. Then, we have for $n\in\mathbb{N}$ and $\eps\in[0,1)$ that
\begin{align}\label{eq:weak-conv-stein-unconstrained}
\zeta_n(\eps,\cN,\cM)\leq \frac{1}{1-\eps}
\left(D^{\mathcal{A}}(\cN\|\cM)+\frac{h_2(\varepsilon)}{n}\right),
\end{align}
where $h_2(\varepsilon) \coloneqq -\varepsilon \log(\varepsilon) - (1-\varepsilon) \log(1-\varepsilon)$ denotes the binary entropy.
If the channel discrimination protocol is energy constrained, as discussed in Section~\ref{sec:energy-cons-ch-disc}, with Hamiltonian~$H_A$ and energy constraint $E\in[0,\infty)$, then the following bound holds:
\begin{align}\label{eq:weak-conv-stein-constrained}
\zeta_n(\eps,\cN,\cM,H,E)\leq \frac{1}{1-\eps}
\left(D^{\mathcal{A}}_{H,E}(\cN\|\cM)+\frac{h_2(\varepsilon)}{n}\right).
\end{align}
\end{prop}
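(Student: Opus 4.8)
The statement to prove is Proposition~\ref{weakConverse}, a Stein-type converse bound. The key tool is the meta-converse of Lemma~\ref{ThmAmortizedBound}, which says that for \emph{any} faithful generalized divergence $\mathbf{D}$ and any $n$-call discrimination protocol with final decision probabilities $p$ and $q$, one has $\mathbf{D}(p\Vert q)\leq n\cdot\mathbf{D}^{\cA}(\cN\Vert\cM)$. The plan is to instantiate this with the ordinary quantum relative entropy $\mathbf{D}=D$, which is faithful and thus admissible, and then translate the resulting bound on $D(p\Vert q)$ into the operational Stein quantity $\zeta_n$.

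\textbf{First step: relate the protocol to a binary hypothesis test.} Fix an arbitrary protocol $\{Q,\cA\}$ satisfying the type-I constraint $\alpha_n(\{Q,\cA\})\leq\eps$. Writing $p\coloneqq\tr[Q_{R_nB_n}\rho_{R_nB_n}]$ and $q\coloneqq\tr[Q_{R_nB_n}\tau_{R_nB_n}]$ as in \eqref{eq:prot-p}--\eqref{eq:prot-q}, the constraint reads $1-p=\alpha_n\leq\eps$ (so $p\geq 1-\eps$), while $q=\beta_n$ is exactly the type-II error we wish to bound from below. Applying Lemma~\ref{ThmAmortizedBound} with $\mathbf{D}=D$ gives $D(p\Vert q)\leq n\cdot D^{\mathcal{A}}(\cN\Vert\cM)$, where $D(p\Vert q)$ is the classical binary relative entropy between the distributions $(p,1-p)$ and $(q,1-q)$.

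\textbf{Second step: lower-bound the binary relative entropy.} The crux is a standard one-shot bound: expanding
\[
D(p\Vert q)=p\log\frac{p}{q}+(1-p)\log\frac{1-p}{1-q}\geq p\log\frac{1}{q}-h_2(p),
\]
where I use $\log p\geq -h_2(p)/\cdots$—more precisely $p\log p+(1-p)\log(1-p)=-h_2(p)$ and $(1-p)\log\frac{1}{1-q}\geq 0$. This yields $D(p\Vert q)\geq -p\log q - h_2(p)$, hence $-\log q\leq \frac{1}{p}\bigl(D(p\Vert q)+h_2(p)\bigr)$. Using $p\geq 1-\eps$ and the monotonicity $h_2(p)=h_2(1-p)\leq h_2(\eps)$ (valid because $1-p=\alpha_n\leq\eps\leq 1/2$ in the regime of interest; one argues $h_2$ is increasing on $[0,1/2]$), combined with the meta-converse bound on $D(p\Vert q)$, gives
\[
-\frac{1}{n}\log\beta_n=-\frac{1}{n}\log q\leq\frac{1}{1-\eps}\Bigl(D^{\mathcal{A}}(\cN\Vert\cM)+\frac{h_2(\eps)}{n}\Bigr).
\]
Taking the supremum over all admissible protocols yields \eqref{eq:weak-conv-stein-unconstrained}.

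\textbf{Energy-constrained case and the main obstacle.} For \eqref{eq:weak-conv-stein-constrained} I repeat the identical argument but invoke the energy-constrained meta-converse (the final unnamed Lemma of Section~\ref{sec:technical-tools}) in place of Lemma~\ref{ThmAmortizedBound}; this requires the quantum relative entropy to satisfy the direct-sum property \eqref{eq:d-sum-prop}, which it does, so $D(p\Vert q)\leq n\cdot D^{\mathcal{A}}_{H,E}(\cN\Vert\cM)$ holds under the energy constraint, and the same binary-entropy manipulation closes the bound. I expect the main subtlety to be the careful handling of the binary relative entropy estimate—specifically getting exactly the constant $h_2(\eps)$ rather than a looser bound, and justifying the direction of the inequality $h_2(1-p)\leq h_2(\eps)$ via monotonicity on $[0,1/2]$. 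This is purely a one-shot classical estimate, so it is routine, but it is where one must be precise to land the stated prefactor $\tfrac{1}{1-\eps}$ and the $h_2(\eps)/n$ correction term exactly.
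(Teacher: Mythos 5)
Your overall strategy coincides with the paper's: instantiate the meta-converse (Lemma~\ref{ThmAmortizedBound}) with the quantum relative entropy, then convert the resulting bound on the classical binary divergence $D(p\Vert q)$ into a bound on $-\frac{1}{n}\log\beta_n(\{Q,\cA\})$; the energy-constrained case is handled identically via the direct-sum property. However, there is a genuine gap in the step where you discharge the type-I constraint. You only know $\alpha_n(\{Q,\cA\})\leq\eps$, and your estimate yields $-\log\beta_n\leq\frac{1}{p}\left(D(p\Vert q)+h_2(p)\right)$ with $h_2(p)=h_2(\alpha_n)$. To land the claimed constant you need $h_2(\alpha_n)\leq h_2(\eps)$, which you justify by monotonicity of $h_2$ on $[0,1/2]$ ``in the regime of interest.'' But the proposition is asserted for all $\eps\in[0,1)$, not only $\eps\leq 1/2$. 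For $\eps\in(1/2,1)$ the inequality $h_2(\alpha_n)\leq h_2(\eps)$ simply fails: for instance $\alpha_n=1/2\leq\eps=0.9$ gives $h_2(\alpha_n)=1>h_2(0.9)$. In that regime your argument only produces the weaker bound $\zeta_n(\eps,\cN,\cM)\leq\frac{1}{1-\eps}\left(D^{\mathcal{A}}(\cN\Vert\cM)+\frac{1}{n}\right)$, not the stated one.

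The paper closes exactly this hole with a normalization trick that your proof is missing: given any protocol with $\alpha_n(\{Q,\cA\})<\eps$, replace the measurement operator $Q$ by $\lambda Q$ with $\lambda\in(0,1)$ chosen so that $\alpha_n(\{\lambda Q,\cA\})=\eps$ exactly. This is possible because $\alpha_n(\{\lambda Q,\cA\})=1-\lambda\operatorname{Tr}[Q_{R_nB_n}\rho_{R_nB_n}]$ varies continuously from $\alpha_n$ to $1$ as $\lambda$ decreases from $1$ to $0$, and $0\leq\lambda Q\leq I$ remains a valid test operator; moreover the type-II error only decreases, $\beta_n\to\lambda\beta_n$, so it suffices to prove the bound for protocols saturating the constraint. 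With $\alpha_n=\eps$ one has $h_2(\alpha_n)=h_2(\eps)$ identically, and no monotonicity claim is needed. This repair does not touch the channel inputs, so it applies verbatim in the energy-constrained setting; with it in place, the remainder of your argument goes through and coincides with the paper's proof.
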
 

\begin{proof}
Let $\{\cQ, \cA\}$ denote an arbitrary protocol for discrimination of the channels $\cN$ and~$\cM$, as discussed in Section~\ref{sec:discrimination}, and let $p$ and $q$ denote the final decision probabilities.
As observed previously (e.g., in \cite{KW17}), if the constraint $\alpha_n(\{ Q, \cA\}) \leq\varepsilon$ is not satisfied with equality, then one can modify the measurement operator as $ Q \to \lambda Q$ for some $\lambda \in (0,1)$ such that $\alpha_n(\{ \lambda Q, \cA\}) =\varepsilon$, whereas the type~II error probability only decreases under this modification. Since we are interested in 
the optimized quantity $\zeta_n(\eps,\cN,\cM)$, we always perform this modification to any channel discrimination protocol if necessary. 
Now following the original proof of Stein's lemma for quantum states~\cite{HP91}, we find that
\begin{align}
D(p\| q)  
&= (1-p) \log\frac{1-p}{1-q} + p \log(p/q)\\
& = \alpha_n(\{ Q, \cA\}) \log\frac{\alpha_n(\{ Q, \cA\})}{1 - \beta_n(\{ Q, \cA\}) } + (1-\alpha_n(\{ Q, \cA\}))\log\frac{1-\alpha_n(\{ Q, \cA\})}{\beta_n(\{ Q, \cA\})} \\
& = \varepsilon\log\frac{\varepsilon}{1 - \beta_n(\{ Q, \cA\}) } + (1-\varepsilon)\log\frac{1-\varepsilon}{\beta_n(\{ Q, \cA\})}\\
& = - h_2(\varepsilon) -
\varepsilon\log(1 - \beta_n(\{ Q, \cA\})) -
(1-\varepsilon)\log\beta_n(\{Q,\cA\}) \\
&\geq - h_2(\varepsilon) - (1-\varepsilon)\log\beta_n(\{Q,\cA\}). 
\end{align}
By rearranging the above equation, it follows that
\begin{align}
-\log\beta_n(\{ Q, \cA\}) \leq \frac{1}{1-\varepsilon}\Big(D(p\| q) +h_2(\varepsilon)\Big).
\end{align}
Now we can apply Lemma~\ref{ThmAmortizedBound}, choosing the generalized channel divergence to be the quantum relative entropy. We then find that
\begin{align}
-\log\beta_n(\{ Q, \cA\}) \leq \frac{1}{1-\varepsilon}\Big(n \cdot D^{\mathcal{A}}(\cN\|\cM)) +h_2(\varepsilon)\Big).
\end{align}
This is a universal upper bound, holding for an arbitrary channel discrimination protocol, and so now dividing by $n$ and taking the supremum over $\{Q,\cA\}$ on the left-hand side, we conclude the inequality in \eqref{eq:weak-conv-stein-unconstrained}.

For the energy-constrained case, the proof goes in the same way, but we apply \eqref{eq:en-constr-bnd-gen-div}, choosing again the generalized divergence to be the quantum relative entropy and noting that it obeys the direct-sum property in \eqref{eq:d-sum-prop}. Hence, \eqref{eq:weak-conv-stein-constrained} follows.
\end{proof}

We next obtain a strong converse bound in terms of the max-channel divergence.

\begin{prop}\label{LemmaStrongConverseDmax}
Let $\cN_{A\to B},\cM_{A\to B}\in\cQ(A\to B)$. For $\eps\in[0,1)$, the following bound holds
\begin{align}
\zeta_n(\eps,\cN,\cM)
\leq D_{\max}(\cN\|\cM)
 + \frac{1}{n}\log\!\left(\frac{1}{1-\varepsilon}\right).
\end{align}
\end{prop}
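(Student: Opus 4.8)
The plan is to specialize the meta-converse of Lemma~\ref{ThmAmortizedBound} to the max-relative entropy $D_{\max}$ and then remove the amortization via the collapse established in Proposition~\ref{lem:D_max}. First I would check that $D_{\max}$ is indeed a faithful generalized divergence: it obeys data processing and satisfies $D_{\max}(\rho\|\rho)=0$, so Lemma~\ref{ThmAmortizedBound} is applicable. Instantiating it with $\mathbf{D}=D_{\max}$ and invoking $D_{\max}^{\cA}(\cN\|\cM)=D_{\max}(\cN\|\cM)$ from Proposition~\ref{lem:D_max} gives, for any protocol $\{Q,\cA\}$ with final decision probabilities $p$ and $q$,
\begin{equation}
D_{\max}(p\|q)\leq n\cdot D_{\max}(\cN\|\cM).
\end{equation}

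The next step is to lower bound $D_{\max}(p\|q)$ by the error probabilities. Recalling that $D_{\max}(p\|q)=D_{\max}(\zeta(p)\|\zeta(q))$ is just the classical max-relative entropy of $(p,1-p)$ over $(q,1-q)$, it equals $\log\max\{p/q,(1-p)/(1-q)\}$ and hence is at least $\log(p/q)$. Since $p=1-\alpha_n(\{Q,\cA\})$ and $q=\beta_n(\{Q,\cA\})$, and the Stein constraint $\alpha_n(\{Q,\cA\})\leq\eps$ forces $p\geq 1-\eps$, I would conclude
\begin{equation}
D_{\max}(p\|q)\geq\log\frac{1-\eps}{\beta_n(\{Q,\cA\})}.
\end{equation}

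Combining the two bounds, rearranging to isolate $-\tfrac{1}{n}\log\beta_n(\{Q,\cA\})$, and taking the supremum over all protocols meeting the type~I constraint would then give the stated inequality. A small point to note is that, in contrast to the weak-converse proof of Proposition~\ref{weakConverse}, no rescaling $Q\to\lambda Q$ of the measurement is required here, because I use $p\geq 1-\eps$ directly rather than enforcing $\alpha_n=\eps$. I do not anticipate a genuine obstacle: the substantive work\,---\,the amortization collapse for the max-relative entropy\,---\,is already carried out in Proposition~\ref{lem:D_max}, so the remaining effort is the routine evaluation of the binary max-relative entropy and the verification that $D_{\max}$ meets the hypotheses of Lemma~\ref{ThmAmortizedBound}.
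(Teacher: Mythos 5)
Your proof is correct and follows essentially the same route as the paper's: instantiate the meta-converse (Lemma~\ref{ThmAmortizedBound}) with $\mathbf{D}=D_{\max}$, invoke the amortization collapse of Proposition~\ref{lem:D_max}, and lower bound the binary max-relative entropy by $\log\!\left((1-\eps)/\beta_n\right)$ before rearranging and optimizing over protocols. The only (immaterial) difference is that the paper first rescales the measurement to enforce $\alpha_n(\{Q,\cA\})=\eps$ exactly, whereas you correctly observe that the inequality $p\geq 1-\eps$ suffices directly, since $\log(p/q)$ is monotone in $p$.
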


\begin{proof}
Let $\{\cQ, \cA\}$ denote an arbitrary protocol for discrimination of the channels $\cN$ and $\cM$, as discussed in Section~\ref{sec:discrimination}, and let $p$ and $q$ denote the final decision probabilities.
As discussed in the previous proof, we can take $\alpha_n(\{Q,\mathcal{A}\})=\varepsilon$, leading to
\begin{align}
D_{\max}(p\Vert q) & = \log \max\{(1-\varepsilon) / q, \varepsilon / (1-q)\}\geq \log(1-\varepsilon) - \log q.
\end{align}
By applying the meta-converse in Lemma~\ref{ThmAmortizedBound}, as well as the amortization collapse for max-relative entropy from Proposition~\ref{lem:D_max}, we conclude that
\begin{equation}
-\frac{1}{n}\log q \leq D_{\max}(\cN\|\cM)+\frac{1}{n}\log\frac{1}{1-\varepsilon}.
\end{equation}
Since the bound is a uniform bound applying to any channel discrimination protocol, we conclude the statement of the proposition.
\end{proof}

Combining the bounds in Proposition~\ref{weakConverse} and Proposition~\ref{lem:D_max}, we arrive at the following asymptotic statements for the Stein setting.

\begin{corollary}\label{cor:all-stein}
Let $\cN_{A\to B},\cM_{A\to B}\in\cQ(A\to B)$ and $\eps\in(0,1)$. Then, we have that
\begin{align}\label{Eq:AllStein}
D(\cN\|\cM)\leq\underline{\zeta}(\eps,\cN,\cM)\leq\overline{\zeta}(\eps,\cN,\cM)\leq D_{\max}(\cN\|\cM). 
\end{align}
\end{corollary}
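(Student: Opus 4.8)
The statement consists of three inequalities. The middle one, $\underline{\zeta}(\eps,\cN,\cM)\leq\overline{\zeta}(\eps,\cN,\cM)$, is immediate from the definitions, since $\liminf_{n\to\infty}a_n\leq\limsup_{n\to\infty}a_n$ for any real sequence $(a_n)_n$. Thus the content lies in the two outer bounds: a converse (upper) bound and an achievability (lower) bound, which I would establish separately.

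For the rightmost inequality, I would invoke the strong converse of Proposition~\ref{LemmaStrongConverseDmax}, which asserts that $\zeta_n(\eps,\cN,\cM)\leq D_{\max}(\cN\|\cM)+\tfrac{1}{n}\log\tfrac{1}{1-\eps}$ for every $n\in\mathbb{N}$ and $\eps\in[0,1)$. Taking the limit superior as $n\to\infty$ on both sides, the correction term $\tfrac{1}{n}\log\tfrac{1}{1-\eps}$ tends to zero, and one obtains $\overline{\zeta}(\eps,\cN,\cM)\leq D_{\max}(\cN\|\cM)$ directly. One could instead combine Proposition~\ref{weakConverse} with the bound of Proposition~\ref{lem:D_max} (which at $\alpha=1$ reads $D^{\mathcal{A}}(\cN\|\cM)\leq D_{\max}(\cN\|\cM)$), but that route carries the prefactor $\tfrac{1}{1-\eps}$ and so yields a strictly weaker constant; hence I would favour the strong converse for this inequality.

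For the leftmost inequality, the plan is to lower bound $\underline{\zeta}$ by the exponent achieved by a fixed \emph{parallel} (non-adaptive) strategy. Fix an arbitrary input state $\psi_{RA}\in\cS(RA)$ and write $\omega\coloneqq\cN_{A\to B}(\psi_{RA})$ and $\tau\coloneqq\cM_{A\to B}(\psi_{RA})$. Preparing $\psi_{RA}^{\otimes n}$, applying the $n$ channel uses in parallel, and then running the optimal Stein test on the i.i.d.\ output states $\omega^{\otimes n}$ versus $\tau^{\otimes n}$ is a legitimate special case of the adaptive protocols of Section~\ref{sec:discrimination}, as spelled out in Remark~\ref{rmk:block-coding}. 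Hence $\zeta_n(\eps,\cN,\cM)$ is at least the type~II exponent of this particular protocol, and the quantum Stein's lemma~\cite{HP91,ON00} guarantees that the latter converges to $D(\omega\|\tau)=D(\cN_{A\to B}(\psi_{RA})\|\cM_{A\to B}(\psi_{RA}))$ for every $\eps\in(0,1)$. Passing to the limit inferior gives $\underline{\zeta}(\eps,\cN,\cM)\geq D(\cN_{A\to B}(\psi_{RA})\|\cM_{A\to B}(\psi_{RA}))$; since $\psi_{RA}$ was arbitrary, taking the supremum over input states and recalling the definition of the relative-entropy channel divergence in \eqref{eq:channel-divergence} yields $\underline{\zeta}(\eps,\cN,\cM)\geq D(\cN\|\cM)$. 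No step presents a genuine analytic obstacle: the only point requiring care is that the supremum over inputs be taken \emph{outside} the limit inferior, which is guaranteed by fixing $\psi_{RA}$ before letting $n\to\infty$. The whole argument is essentially a recombination of the quantum Stein's lemma with the previously established strong converse.
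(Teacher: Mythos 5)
Your proof is correct and follows essentially the paper's own route: the lower bound via a fixed product-state (parallel) strategy justified by Remark~\ref{rmk:block-coding} together with quantum Stein's lemma for states, and the upper bound by taking the limit in the strong-converse bound of Proposition~\ref{LemmaStrongConverseDmax}. Your remark about the $\tfrac{1}{1-\eps}$ prefactor is well taken: although the paper's proof sentence cites Propositions~\ref{weakConverse} and~\ref{lem:D_max}, the clean bound $\overline{\zeta}(\eps,\cN,\cM)\leq D_{\max}(\cN\|\cM)$ indeed requires Proposition~\ref{LemmaStrongConverseDmax} (itself assembled from Lemma~\ref{ThmAmortizedBound} and Proposition~\ref{lem:D_max}), exactly as you argue.
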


Note that the $D^{\mathcal{A}}(\cN\|\cM)$ bound from Proposition~\ref{weakConverse} is \textit{a priori} an unbounded optimization problem. However, as stated in Lemma~\ref{lemma:dmax}, the $D_{\max}(\cN\|\cM)$ bound in Corollary~\ref{cor:all-stein} can be written as a semi-definite program and is thus efficiently computable. Note also that it is in general unclear if the amortized quantity $D^{\mathcal{A}}(\cN\|\cM)$ can be achieved by an adaptive protocol. However, in Sections~\ref{sec:cq}-\ref{sec:examples} we present various examples for which we find that $D^{\mathcal{A}}(\cN\|\cM)=D(\cN\|\cM)$. This will then also allow us to discuss refinements in terms of the error exponent and the strong converse exponent.

\begin{remark}
The result stated in Corollary~\ref{cor:all-stein} allows us to conclude a ``faithfulness'' statement for the Stein setting, similar to that made in \cite{Yu17} for the Chernoff setting. Namely, the asymptotic Type~II error exponent $\overline{\zeta}(\eps,\cN,\cM)$ is finite if and only if the support condition
\begin{equation}\label{eq:sup-cond-faith-stein}
\operatorname{supp}(\cN_{A \to B}(\Phi_{RA})) \subseteq \operatorname{supp}(\cM_{A \to B}(\Phi_{RA}))
\end{equation}
holds. To see this, suppose that the support condition in \eqref{eq:sup-cond-faith-stein} holds. Then $D_{\max}(\cN\|\cM)$ is finite and so is $\overline{\zeta}(\eps,\cN,\cM)$, by the upper bound in \eqref{Eq:AllStein}. Now suppose that the support condition in \eqref{eq:sup-cond-faith-stein} does not hold. Then $D(\cN\|\cM)$ is infinite and so is $\overline{\zeta}(\eps,\cN,\cM)$, by the lower bound in \eqref{Eq:AllStein}.
\end{remark}

\bigskip
\textbf{Example.}
In general, the upper bound in \eqref{Eq:AllStein} in terms of $D_{\max}(\mathcal{N}\|\mathcal{M})$ can be rather different from the lower bound. In the following, we study this difference between the bounds for a physically interesting class of channels. 

The generalized amplitude damping channel with parameters $(\eta,p)$ is a qubit channel, modeling dissipation to the environment at a finite temperature \cite{NC00}. A set of Kraus operators for it is as follows: 
\begin{align}
A_1 & =
\sqrt{p}\begin{bmatrix}
    1       & 0 \\
    0       & \sqrt{\eta}
\end{bmatrix}
, & 
A_2 & =
\sqrt{p}\begin{bmatrix}
    0       & \sqrt{1-\eta} \\
    0       & 0
\end{bmatrix},
\\
A_3 & =
\sqrt{1-p}\begin{bmatrix}
    \sqrt{\eta}      & 0 \\
    0       & 1
\end{bmatrix}
, & 
A_4 & =
\sqrt{1-p}\begin{bmatrix}
    0       & 0 \\
    \sqrt{1-\eta}      & 0
\end{bmatrix},
\end{align}
where $p\in[0,1]$ represents the dissipation to the environment and $\eta \in [0,1]$ is related to how much the input qubit mixes with  the environment qubit. This channel can intuitively be understood as the ``qubit version'' of the bosonic thermal channel \cite{S17}. Indeed, the generalized amplitude damping channel arises by preparing an environment qubit in the state $p\vert 0 \rangle \langle 0 \vert + (1-p) \vert 1 \rangle \langle 1 \vert$, interacting it with the qubit channel input via a beamsplitter of transmissivity $\eta$, and tracing over the environment qubit.  

Let $\mathcal{N}$ denote a generalized amplitude damping channel with parameters $(\eta_1, p_1)$, and 
let~$\mathcal{M}$ denote a generalized amplitude damping channel with parameters $(\eta_2, p_2)$.
In \eqref{Eq:AllStein}, observe that the Stein quantity is bounded from below by $D\left(\mathcal{N}\|\mathcal{M}\right)$ and from above by~$D_{\max}(\mathcal{N}\|\mathcal{M})\, $.  

To evaluate $D\left(\mathcal{N}\|\mathcal{M}\right)$, observe that the generalized amplitude damping channel is covariant with respect to $I$ and $Z$. We can therefore apply \cite[Proposition II.4]{PhysRevA.97.012332}, which states that it suffices to restrict the optimization to input states $\psi_{RA}$ such that their reduced state is of the form $\rho_{A}= z\mathinner{|0\rangle\langle 0|}+(1-z)\mathinner{|1\rangle\langle 1|}$, where $z\in[0,1]$. Let us choose a purification of $\rho_A$ as $\mathinner{|\psi(z)\rangle}_{RA}= \sqrt{z}\mathinner{|00\rangle}_{RA}+\sqrt{1-z}\mathinner{|11\rangle}_{RA}$. Since all purifications of a state are related by a unitary acting on the purifying system, we can invoke the unitary invariance of the relative entropy to find that
\begin{align}
D\!\left(\mathcal{N}\|\mathcal{M}\right) &= \max_{\psi_{RA}}D\!\left( \mathcal{N}_{A\rightarrow B}(\psi_{RA})\|
\mathcal{M}_{A\rightarrow B}(\psi_{RA})\right)\\
 &=\max_{z\in [0,1]}D\!\left( \mathcal{N}_{A\rightarrow B}(\psi(z)_{RA})\|\mathcal{M}_{A\rightarrow B}(\psi(z)_{RA})\right).
\end{align}
The latter quantity is straightforward to calculate numerically.

We also know from Lemma~\ref{lemma:dmax} that $D_{\max}(\mathcal{N}\|\mathcal{M})\,=D_{\max}(\mathcal{N}_{A\rightarrow B}(\Phi_{RA})\Vert\mathcal{M}_{A\rightarrow B}(\Phi_{RA}))$, the latter of which we said previously could be calculated via an SDP.  

If $\eta_1=\eta_2$, the generalized amplitude channels are environment-parametrized channels as defined in \eqref{eq:environ-param-1}--\eqref{eq:environ-param-2}, with $\theta_E^{\mathcal{N}}=p_1\vert 0 \rangle \langle 0 \vert + (1-p_1) \vert 1 \rangle \langle 1 \vert$ and $\theta_E^{\mathcal{M}}=p_2\vert 0 \rangle \langle 0 \vert + (1-p_2) \vert 1 \rangle \langle 1 \vert$. Therefore, by  Proposition~\ref{thm:env-param-simple-bounds-div}, $D(\theta_E^{\mathcal{N}}\|\theta_E^{\mathcal{M}})$ gives an upper bound in the Stein setting, in addition to the bound given by $D_{\max}(\mathcal{N}\|\mathcal{M})$.
We plot the difference of these bounds for particular generalized amplitude damping channels in Figures~\ref{fig:gen_amp_damp} and \ref{fig:gen_amp_damp1}.

\begin{figure}
\centering
\includegraphics[scale=0.5]{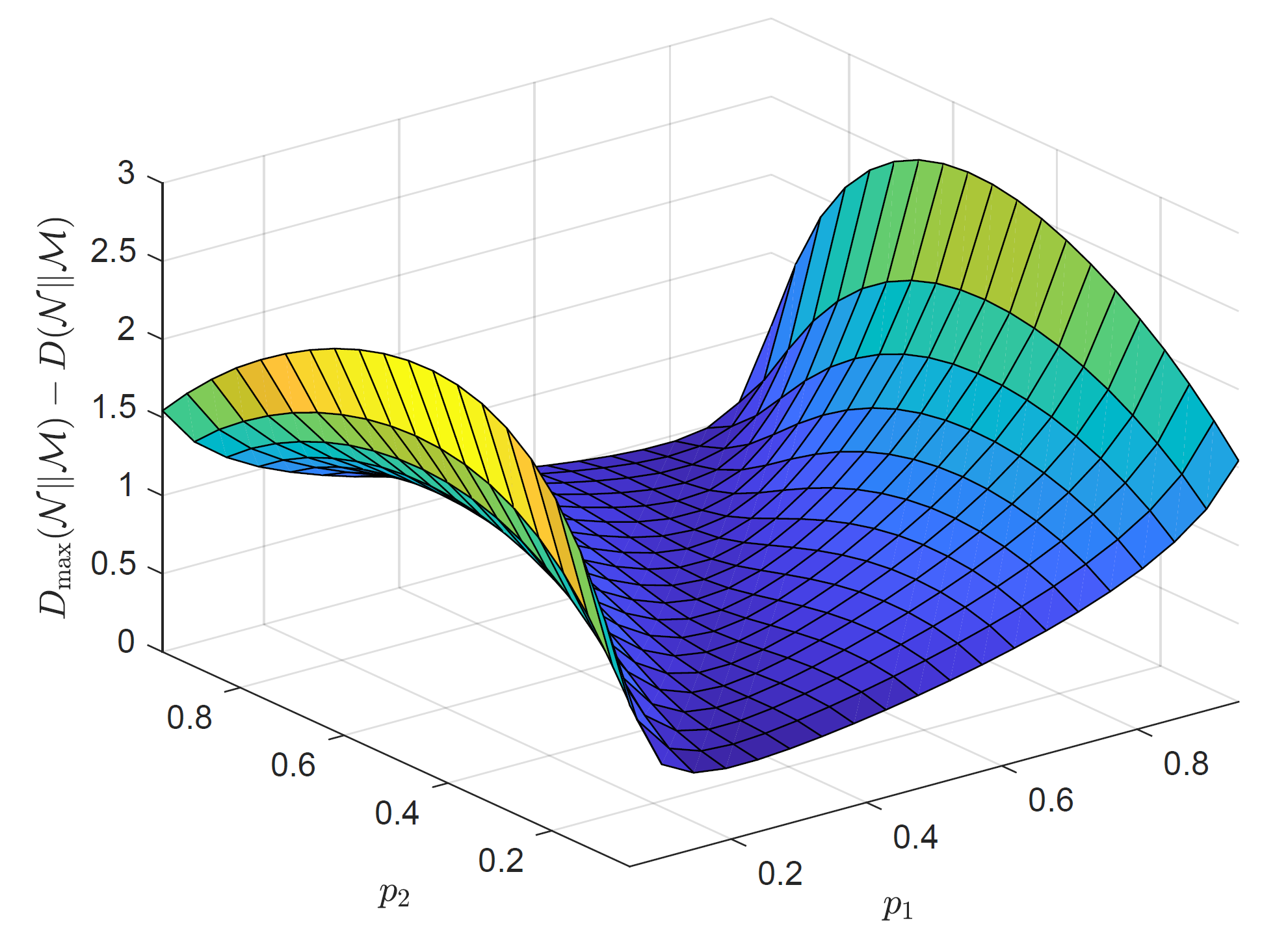}
\caption{This figure displays the difference between the upper and lower bounds in the Stein setting for the generalized amplitude damping channels with parameters $(\eta_1,p_1)$ and $(\eta_2,p_2)$. We vary the parameters $p_1$ and $p_2$ and fix the parameters $\eta_1=0.2$ and $\eta_2=0.3$.}
\label{fig:gen_amp_damp}
\end{figure}

\begin{figure}
\centering
\includegraphics[scale=0.5]{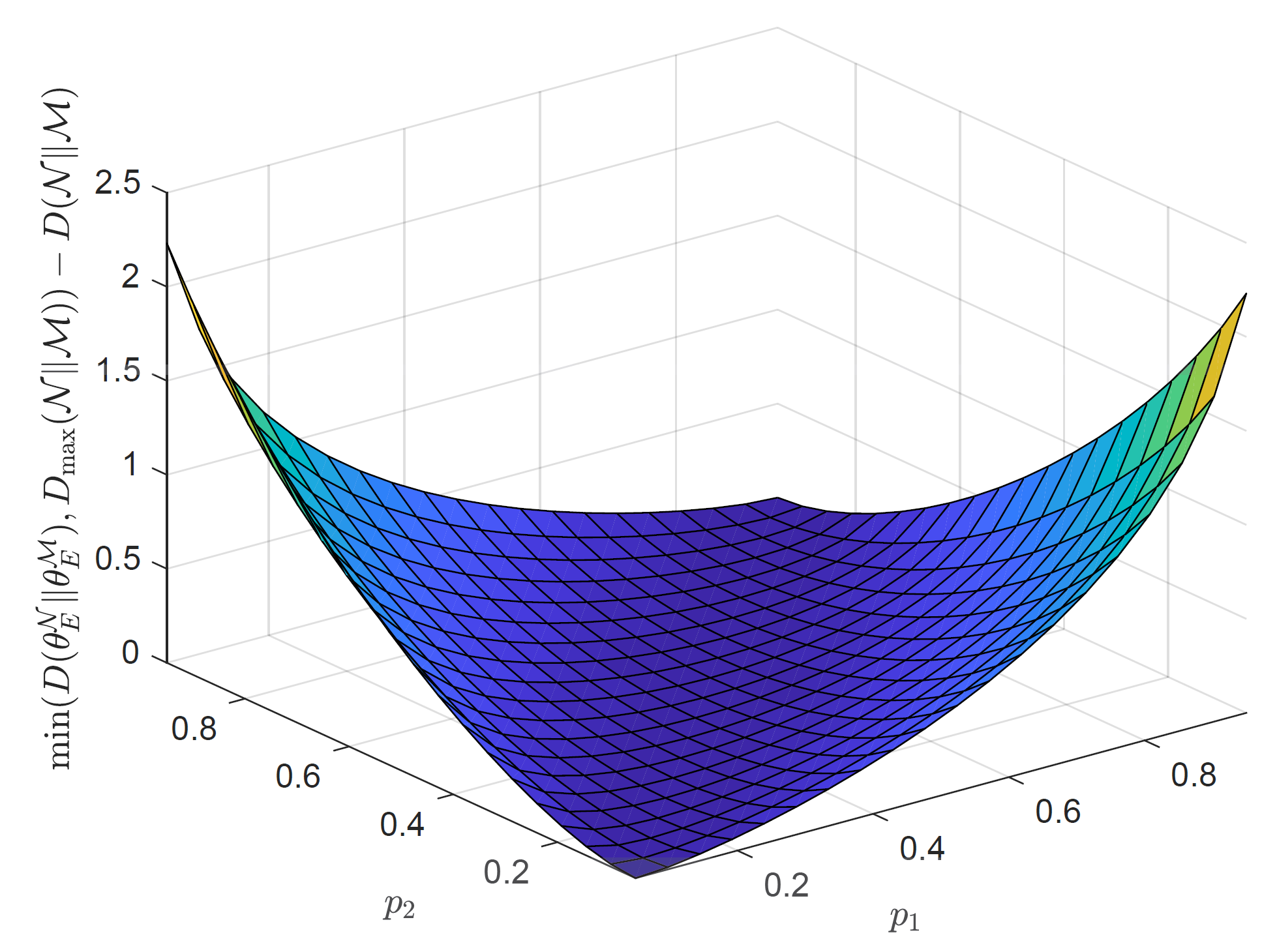}
\caption{This figure displays the difference between the upper and lower bounds in the Stein setting for the generalized amplitude damping channels with parameters $(\eta_1,p_1)$ and $(\eta_2,p_2)$. We vary the parameters $p_1$ and $p_2$ and fix the parameters $\eta_1=0.5$ and $\eta_2=0.5$.}
\label{fig:gen_amp_damp1}
\end{figure}


\subsection{Strong converse exponent} 

A bound on the achievability part of the strong converse exponent can, as before, simply be given by considering a
product-state channel discrimination strategy. Following the result from the state discrimination setting \cite{Mosonyi2015}, the achievable rate is given by a quantity involving the sandwiched R\'enyi divergence. It is not clear whether such a strategy is optimal, and so we also consider the optimality part. In the following theorem, we give a lower bound on the strong converse exponent involving the amortized sandwiched R\'enyi channel divergence. As discussed in later sections, for some channels it can be shown that the amortized sandwiched R\'enyi channel divergence collapses, such that the lower and upper bounds match. 

\begin{prop}\label{thm:converseexponent}
Let $\cN_{A\to B},\cM_{A\to B}\in\cQ(A\to B)$. Then, for $r>0$ we have that
\begin{align}
\overline{H}_n(r,\cN,\cM)
& \geq\sup_{\alpha>1}\frac{\alpha-1}{\alpha}\left(r-\widetilde{D}_\alpha^{\mathcal{A}}(\cN\|\cM)\right)\geq r - D_{\max}(\cN\Vert\cM).
\end{align}
\end{prop}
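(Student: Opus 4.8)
The plan is to run the standard state-discrimination strong-converse-exponent argument at the level of the final binary decision distributions, and then lift it to channels by invoking the meta-converse of Lemma~\ref{ThmAmortizedBound} with the generalized divergence taken to be the sandwiched R\'enyi divergence $\widetilde{D}_\alpha$ at a fixed $\alpha>1$. Fix such an $\alpha$ and an arbitrary protocol $\{Q,\cA\}$ meeting the type-II constraint $\beta_n(\{Q,\cA\})\leq 2^{-rn}$. Recalling from \eqref{eq:prot-p}--\eqref{eq:ChannelErrorsI+II} that the final decision probabilities are $p=1-\alpha_n(\{Q,\cA\})$ and $q=\beta_n(\{Q,\cA\})$, I would first apply Lemma~\ref{ThmAmortizedBound} to get $\widetilde{D}_\alpha(p\Vert q)\leq n\,\widetilde{D}_\alpha^{\cA}(\cN\Vert\cM)$; this is legitimate since $\widetilde{D}_\alpha$ is a faithful generalized divergence for $\alpha>1$.

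Next I would lower bound the left-hand binary divergence. Because $\zeta(p)$ and $\zeta(q)$ commute, $\widetilde{D}_\alpha(p\Vert q)$ reduces to the classical expression $\tfrac{1}{\alpha-1}\log[\,p^\alpha q^{1-\alpha}+(1-p)^\alpha(1-q)^{1-\alpha}\,]$. Discarding the nonnegative second summand (valid since $\tfrac{1}{\alpha-1}>0$ for $\alpha>1$) and substituting $p=1-\alpha_n$, $q=\beta_n$ gives $\widetilde{D}_\alpha(p\Vert q)\geq \tfrac{\alpha}{\alpha-1}\log(1-\alpha_n)-\log\beta_n$. Combining this with the meta-converse bound and the constraint $\log\beta_n\leq -rn$, then rearranging (the prefactor $\tfrac{\alpha}{\alpha-1}$ being positive), yields $-\tfrac{1}{n}\log(1-\alpha_n)\geq \tfrac{\alpha-1}{\alpha}\bigl(r-\widetilde{D}_\alpha^{\cA}(\cN\Vert\cM)\bigr)$. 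Since this holds for every admissible protocol, taking the infimum defining $H_n$ and then the supremum over $\alpha>1$ proves the first inequality; as the right-hand side is independent of $n$, the bound transfers verbatim to the liminf $\underline{H}$ and the limsup $\overline{H}$.

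For the second inequality I would invoke the amortization collapse of Proposition~\ref{lem:D_max}, $\widetilde{D}_\alpha^{\cA}(\cN\Vert\cM)\leq D_{\max}(\cN\Vert\cM)$ for all $\alpha\geq 1$. This gives the term-by-term bound $\tfrac{\alpha-1}{\alpha}\bigl(r-\widetilde{D}_\alpha^{\cA}(\cN\Vert\cM)\bigr)\geq \tfrac{\alpha-1}{\alpha}\bigl(r-D_{\max}(\cN\Vert\cM)\bigr)$, and since the supremum over $\alpha>1$ is at least the limit $\alpha\to\infty$, in which $\tfrac{\alpha-1}{\alpha}\to 1$, I conclude $\sup_{\alpha>1}\tfrac{\alpha-1}{\alpha}(r-\widetilde{D}_\alpha^{\cA})\geq r-D_{\max}(\cN\Vert\cM)$, valid regardless of the sign of $r-D_{\max}$.

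I expect the only delicate point to be bookkeeping rather than any conceptual obstacle: correctly identifying the final decision probabilities as $1-\alpha_n$ and $\beta_n$, tracking the sign of $\tfrac{\alpha}{\alpha-1}$ when rearranging, and justifying the classical reduction of $\widetilde{D}_\alpha$ on the commuting pair $\zeta(p),\zeta(q)$ together with the term-dropping estimate. The two substantive inputs\,---\,the meta-converse and the max-relative-entropy amortization collapse\,---\,are already established, so no further channel-level analysis is required.
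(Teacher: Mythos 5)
Your proof of the first inequality is correct and essentially identical to the paper's: both reduce $\widetilde{D}_\alpha(p\Vert q)$ to the classical two-point expression, drop the nonnegative second summand, insert the type~II constraint $\beta_n \leq 2^{-rn}$, invoke the meta-converse (Lemma~\ref{ThmAmortizedBound}) with the sandwiched R\'enyi divergence, and then rearrange and optimize over $\alpha>1$ and over protocols. The only (minor) difference is in the second inequality: the paper re-runs the protocol-level argument with $D_{\max}$ itself\,---\,using $D_{\max}(p\Vert q)\geq\log(p/q)$, the meta-converse, and the collapse $D_{\max}^{\cA}(\cN\Vert\cM)=D_{\max}(\cN\Vert\cM)$\,---\,thereby proving $H_n(r,\cN,\cM)\geq r-D_{\max}(\cN\Vert\cM)$ directly, whereas you deduce the middle-to-right inequality purely at the divergence level from $\widetilde{D}_\alpha^{\cA}(\cN\Vert\cM)\leq D_{\max}(\cN\Vert\cM)$ (Proposition~\ref{lem:D_max}) together with the limit $\frac{\alpha-1}{\alpha}\to 1$ as $\alpha\to\infty$. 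Both routes rest on Proposition~\ref{lem:D_max}; yours has the small advantage of literally establishing the displayed chain (the second inequality becomes a statement about divergences alone, independent of any protocol), and your remark that it holds regardless of the sign of $r-D_{\max}(\cN\Vert\cM)$ is correct, since the supremum over $\alpha>1$ dominates the $\alpha\to\infty$ limit in either case.
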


\begin{proof}
We follow the proof strategy from~\cite[Lemma 4.7]{Mosonyi2015}, combined with involving the amortized  sandwiched R\'enyi divergence.
Let $\{\cQ, \cA\}$ denote an arbitrary protocol for discrimination of the channels $\cN$ and $\cM$, as discussed in Section~\ref{sec:discrimination}, and let $p$ and $q$ denote the final decision probabilities.
By evaluating the sandwiched R\'enyi divergence, we have for $\alpha>1$ that
\begin{align}
\widetilde{D}_\alpha(p\|q)&=\frac{1}{\alpha-1} \log\Big(p^\alpha q^{1-\alpha} + (1-p)^\alpha(1-q)^{1-\alpha}\Big)\label{eq:classical-Renyi-1}\\
&\geq\frac{1}{\alpha-1} \log\Big(p^\alpha q^{1-\alpha}\Big)\\
&=\frac{\alpha}{\alpha-1}\log\Big(1-\alpha_n(\{Q,\mathcal{A}\})\Big)-\log\beta_n(\{Q,\mathcal{A}\})\\
&\geq\frac{\alpha}{\alpha-1}\log\Big(1-\alpha_n(\{Q,\mathcal{A}\})\Big)+nr,\label{eq:classical-Renyi-last}
\end{align}
where the last inequality follows from the constraint $\beta_n(\{Q,\mathcal{A}\})\leq 2^{-nr}$, the latter of which is taken as an assumption in this setting, as discussed in Section~\ref{sec:han-kob}. We then have that 
\begin{align}
-\frac{1}{n}\log\Big(1-\alpha_n(\{Q,\mathcal{A}\})\Big) & \geq\frac{\alpha-1}{\alpha}\left(r-\frac{1}{n}\widetilde{D}_\alpha(p\|q)\right)\\
& \geq\frac{\alpha-1}{\alpha}\left(r-\widetilde{D}_\alpha^{\mathcal{A}}(\cN\|\cM)\right).
\end{align}
The first inequality is a rewriting of \eqref{eq:classical-Renyi-1}--\eqref{eq:classical-Renyi-last}, and the second follows from the meta-converse  from Lemma~\ref{ThmAmortizedBound}, with the divergence chosen to be  the sandwiched R\'enyi divergence. Since the above holds for all $\alpha>1$, we obtain the desired result by taking the supremum over all such $\alpha$. 

The statement for the max-relative entropy follows because
\begin{align}
D_{\max}(p\Vert q) & \geq
\log(p/q) \\
&  = \log(1-\alpha_n(\{Q,\mathcal{A}\})) - \log (\beta_n(\{Q,\mathcal{A}\})) \\
& \geq \log(1-\alpha_n(\{Q,\mathcal{A}\})) + nr,
\end{align}
and then applying the meta-converse in Lemma~\ref{ThmAmortizedBound}, as well as the amortization collapse for max-relative entropy from Proposition~\ref{lem:D_max}. This gives
\begin{align}
-\frac{1}{n}\log(1-\alpha_n(\{Q,\mathcal{A}\})) & \geq r-\frac{1}{n}D_{\max}(p\|q) \\
& \geq r-D_{\max}(\cN\|\cM),
\end{align}
concluding the proof.
\end{proof}


\subsection{Chernoff bound}\label{sec:chernoff-general}

For non-adaptive protocols, when we restrict the input states to be product states\,---\,but still allow for a quantum memory reference system $R$\,---\,it directly follows from the Chernoff bound for quantum state discrimination~\cite{ACMBMAV07,Audenaert2008,NS09} that the symmetric error exponent is given by the Chernoff channel divergence $C(\cN\|\cM)$. This obviously already gives an achievability bound for the adaptive setting as well. However, note that the results from~\cite{Duan09,Harrow10,Yu17} establish that the Chernoff channel divergence $C(\cN\|\cM)$ does not generally quantify the symmetric error exponent for quantum channel discrimination.

In what follows, we are interested in converse bounds for the general adaptive setting. We begin by establishing a bound on the non-asymptotic symmetric error exponent $\xi_{n}(\mathcal{N},\mathcal{M},p)$. Aspects of the proof approach are related to that from \cite[Theorem 9]{Yu17} and \cite[Proposition~1]{CE18}.

\begin{prop}\label{thm:non-asym-chern}
Given quantum channels $\mathcal{N}_{A\rightarrow B}$ and $\mathcal{M}
_{A\rightarrow B}$ and $p\in(0,1)$, the following bound holds for all $n \in \mathbb{N}$:
\begin{equation}
\xi_{n}(\mathcal{N},\mathcal{M},p)\leq\min\Big\{\widetilde{D}_{1/2}^{\mathcal{A}}(\mathcal{N}\Vert\mathcal{M}),D_{\max}(\cN \Vert \cM),D_{\max}(\cM \Vert \cN )\Big\}-\frac{1}{n}\log\left[p\left(1-p\right)\right].
\end{equation}
\end{prop}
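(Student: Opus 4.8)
The plan is to bound the symmetric error exponent $\xi_n$ by relating the trace-distance expression in the Chernoff setting to the three divergences appearing in the minimum, and then invoke the meta-converse (Lemma~\ref{ThmAmortizedBound}) or the amortization collapse for $D_{\max}$ (Proposition~\ref{lem:D_max}) to pass from the $n$-fold protocol states to single-letter channel quantities. Concretely, I would work from the representation
\begin{equation}
\xi_n(\mathcal{N},\mathcal{M},p)=\sup_{\{\mathcal{A}\}}-\frac{1}{n}\log\!\left(\tfrac{1}{2}\!\left(1-\|p\rho_{R_nB_n}-(1-p)\tau_{R_nB_n}\|_1\right)\right),
\end{equation}
so the task reduces to lower-bounding the error probability $\tfrac{1}{2}(1-\|p\rho_{R_nB_n}-(1-p)\tau_{R_nB_n}\|_1)$ in terms of the relevant divergence of the final states $\rho_{R_nB_n}$ and $\tau_{R_nB_n}$.

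For the $\widetilde{D}_{1/2}^{\mathcal{A}}$ bound I would use the fidelity. Since $\widetilde{D}_{1/2}(\rho\|\tau)=-\log F(\rho,\tau)$, the key ingredient is a lower bound on the symmetric error probability in terms of fidelity, namely the standard inequality relating the optimal Bayesian error to $F(\rho,\tau)$ (one form being $\tfrac{1}{2}(1-\|p\rho-(1-p)\tau\|_1)\geq p(1-p)F(\rho,\tau)$, which follows from the Fuchs--van~de~Graaf-type bound and the multiplicativity/joint-concavity structure; this is exactly the step related to \cite[Theorem~9]{Yu17} and \cite[Proposition~1]{CE18}). Taking $-\tfrac{1}{n}\log$ of this gives $\xi_n\leq\tfrac{1}{n}\widetilde{D}_{1/2}(\rho_{R_nB_n}\|\tau_{R_nB_n})-\tfrac{1}{n}\log[p(1-p)]$, and then the meta-converse in Lemma~\ref{ThmAmortizedBound}, applied with the generalized divergence chosen to be $\widetilde{D}_{1/2}$, yields $\tfrac{1}{n}\widetilde{D}_{1/2}(\rho_{R_nB_n}\|\tau_{R_nB_n})\leq\widetilde{D}_{1/2}^{\mathcal{A}}(\mathcal{N}\|\mathcal{M})$, completing this branch.

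For the two $D_{\max}$ bounds I would instead start from the operator inequality that $D_{\max}(\rho\|\tau)=\lambda$ gives $\rho\leq 2^{\lambda}\tau$, use this to lower-bound the success (and hence error) probability directly — for instance, the type of estimate that bounds $\tfrac{1}{2}(1-\|p\rho-(1-p)\tau\|_1)$ below by $p(1-p)\,2^{-D_{\max}(\rho\|\tau)}$, and symmetrically with the roles of $\rho$ and $\tau$ exchanged to get the $D_{\max}(\mathcal{M}\|\mathcal{N})$ version. Taking $-\tfrac{1}{n}\log$ and then applying the amortization collapse $D_{\max}^{\mathcal{A}}(\mathcal{N}\|\mathcal{M})=D_{\max}(\mathcal{N}\|\mathcal{M})$ from Proposition~\ref{lem:D_max} together with Lemma~\ref{ThmAmortizedBound} converts the final-state $D_{\max}$ into the single-letter channel $D_{\max}$, giving the two remaining terms. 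Since all three bounds hold for every protocol, taking the supremum over $\{\mathcal{A}\}$ and then the minimum over the three estimates yields the claim.

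The main obstacle I anticipate is establishing the sharp constant $p(1-p)$ in the lower bounds on the symmetric error probability, particularly in the fidelity branch. The inequality $\tfrac{1}{2}(1-\|p\rho-(1-p)\tau\|_1)\geq p(1-p)F(\rho,\tau)$ is not the naive Fuchs--van~de~Graaf bound (which would give a square-root and a worse constant); getting the clean $p(1-p)$ factor requires the more careful argument used in \cite{Yu17,CE18}, likely passing through a variational or operator-monotone representation of the fidelity and exploiting that one only needs a one-sided estimate. Once that inequality is in hand, everything else is a direct application of the already-established meta-converse and amortization-collapse machinery.
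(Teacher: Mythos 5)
Your proposal is correct, and its overall architecture---lower-bound the symmetric error probability of the final protocol states by a divergence, take $-\frac{1}{n}\log$, then invoke the meta-converse (Lemma~\ref{ThmAmortizedBound}) and the amortization collapse of Proposition~\ref{lem:D_max}---is the same as the paper's. The difference is in how the two $D_{\max}$ terms are obtained. The paper derives all three bounds from the single fidelity inequality $\frac{1}{2}(1-\|p\rho-(1-p)\tau\|_1)\geq p(1-p)F(\rho,\tau)$: the $\widetilde{D}_{1/2}^{\mathcal{A}}$ term follows exactly as in your first branch, and the two $D_{\max}$ terms follow by applying the state-level inequality $\widetilde{D}_{1/2}(\rho\|\tau)\leq D_{\max}(\rho\|\tau)$ to the final states and then Proposition~\ref{lem:D_max}, with the $D_{\max}(\cM\|\cN)$ version coming from the symmetry $F(\rho,\tau)=F(\tau,\rho)$. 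Your route for the $D_{\max}$ terms is genuinely different and it does work: writing the error probability as $\min_{0\leq Q\leq I}\big\{p\tr[(I-Q)\rho]+(1-p)\tr[Q\tau]\big\}$ and using $\tau\geq 2^{-\lambda}\rho$ with $\lambda=D_{\max}(\rho\|\tau)$, one gets for every $Q$, with $x\coloneqq\tr[Q\rho]\in[0,1]$,
\begin{equation}
p\tr[(I-Q)\rho]+(1-p)\tr[Q\tau]\geq p(1-x)+(1-p)2^{-\lambda}x\geq\min\big\{p,(1-p)2^{-\lambda}\big\}\geq p(1-p)2^{-\lambda},
\end{equation}
and the swapped bound follows symmetrically from $\rho\geq 2^{-D_{\max}(\tau\|\rho)}\tau$. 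This direct operator-inequality argument is self-contained and more elementary for those two branches (it never touches the fidelity), whereas the paper's route is more economical, powering all three bounds with one inequality. The only place your write-up is incomplete is the fidelity inequality itself, which you correctly flag as the main obstacle and outsource to \cite{Yu17,CE18}: the paper proves it from scratch via a generalized Fuchs--van de Graaf inequality for positive semi-definite operators (Lemma~\ref{lem:FvdG-PSD}), applied with $A=p\rho$ and $B=(1-p)\tau$ to obtain $\|p\rho-(1-p)\tau\|_1\leq\sqrt{1-4p(1-p)F(\rho,\tau)}$, followed by the elementary estimate $\frac{1}{2}\left(1-\sqrt{1-x}\right)\geq x/4$ for $x\in[0,1]$. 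If you want a self-contained proof, that is the missing ingredient; otherwise your cited sources do cover it.
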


\begin{proof}
Invoking Lemma~\ref{lem:FvdG-PSD} from Appendix~\ref{app:gen-fuchs-vdgrf} (see also \cite[Supplementary~Lemma~3]{CKW14}), the following inequality holds for positive
semi-definite $A$ and~$B$:
\begin{equation}
\left\Vert A-B\right\Vert _{1}^{2}+4\left\Vert \sqrt{A}\sqrt{B}\right\Vert
_{1}^{2}\leq\left(  \operatorname{Tr}\{A+B\}\right)  ^{2}.
\end{equation}
For $p\in(0,1)$, and $\rho$ and $\sigma$ density operators, we then find that
\begin{equation}
\left\Vert p\rho-\left(  1-p\right)  \sigma\right\Vert _{1}^{2}+4p\left(
1-p\right)  F(\rho,\sigma)\leq1,
\end{equation}
where $F(\rho,\sigma)\coloneqq\Vert\sqrt{\rho}\sqrt{\sigma}\Vert_{1}^{2}$ is the
quantum fidelity. Note that $4p\left(  1-p\right)  \in\left[  0,1\right]  $
for $p\in\left[  0,1\right]  $.\ Rewriting the above expression, we find that
\begin{equation}
\left\Vert p\rho-\left(  1-p\right)  \sigma\right\Vert _{1}\leq\sqrt
{1-4p\left(  1-p\right)  F(\rho,\sigma)},
\end{equation}
which is equivalent to
\begin{align}
\frac{1}{2}\left(  1-\left\Vert p\rho-\left(  1-p\right)  \sigma\right\Vert
_{1}\right)   &  \geq\frac{1}{2}\left(  1-\sqrt{1-4p\left(  1-p\right)
F(\rho,\sigma)}\right)  \\
&  \geq p\left(  1-p\right)  F(\rho,\sigma),
\end{align}
where we have employed the inequality $\frac{1}{2}\left(  1-\sqrt{1-x}\right)
\geq\frac{x}{4}$, which holds for $x\in\left[  0,1\right]  $. Taking a
negative logarithm, this can be rewritten as
\begin{equation}\label{eq:fidelity-to-trace-gen}
-\log\left(\frac{1}{2}\left(  1-\left\Vert p\rho-\left(  1-p\right)\sigma\right\Vert _{1}\right)  \right)  \leq-\log\left(p\left(  1-p\right)\right)+\widetilde{D}_{1/2}(\rho\Vert\sigma).
\end{equation}
Picking $\rho$ and $\sigma$ to be the final states in an adaptive protocol for
distinguishing $n$ uses of the channels $\mathcal{N}$ and $\mathcal{M}$, and
applying the meta-converse in Lemma~\ref{ThmAmortizedBound}, we then find that
\begin{equation}
\xi_{n}(\mathcal{N},\mathcal{M},p)\leq-\frac{1}{n}\log\left(p\left(
1-p\right)  \right)  +\widetilde{D}_{1/2}^{\mathcal{A}}(\mathcal{N}
\Vert\mathcal{M}).
\end{equation}

To establish the bound in terms of max-channel divergence, we start from \eqref{eq:fidelity-to-trace-gen}, and employ the inequality 
$\widetilde{D}_{1/2}(\rho\Vert\sigma) \leq D_{\max}(\rho\Vert\sigma)$ \cite{muller2013quantum}, along with 
the meta-converse in Lemma~\ref{ThmAmortizedBound} and the amortization collapse for max-relative entropy from Proposition~\ref{lem:D_max}, giving that
\begin{equation}
\xi_{n}(\mathcal{N},\mathcal{M},p)\leq-\frac{1}{n}\log\left(p\left(
1-p\right)  \right)  +D_{\max}(\mathcal{N}
\Vert\mathcal{M}).
\end{equation}
Due to the symmetry $F(\rho,\sigma) = F(\sigma,\rho)$, the same bound with the quantum channels $\cN$ and $\cM$ interchanged holds as well.
\end{proof}

\begin{remark}\label{rem:fid-div}
Recently and in independent work, the following case of Proposition \ref{thm:non-asym-chern} was established in~\cite[Proposition 1]{CE18}:
\begin{equation}
\xi_{n}(\mathcal{N},\mathcal{M},1/2)\leq\frac{2}{n}+\widetilde{D}_{1/2}^{\mathcal{A}}(\mathcal{N}\Vert\mathcal{M}).
\end{equation}
The authors of \cite{CE18} also defined the concept of fidelity divergence of quantum channels, which is equal to $\widetilde{D}_{1/2}^{\mathcal{A}}(\mathcal{N} \Vert\mathcal{M})$ after taking a negative logarithm. They remark that it is sufficient to restrict the optimization over states in $\widetilde{D}_{1/2}^{\mathcal{A}}(\mathcal{N}\Vert\mathcal{M})$ to pure states, but unfortunately this restriction does not imply a limit on the states' dimension. 
\end{remark}

Proposition \ref{thm:non-asym-chern} implies the following bounds on the asymptotic symmetric error exponent.

\begin{corollary}\label{cor:chernoff}
Let $\cN_{A\to B},\cM_{A\to B}\in\cQ(A\to B)$. Then, we have that
\begin{equation}
\overline{\xi}(\cN,\cM)\leq\min\Big\{\widetilde{D}_{1/2}^{\mathcal{A}}(\mathcal{N}\Vert\mathcal{M}),D_{\max}(\cN \Vert \cM), D_{\max}(\cM \Vert \cN )\Big\}.
\end{equation}
\end{corollary}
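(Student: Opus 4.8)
The plan is to obtain Corollary~\ref{cor:chernoff} as an immediate asymptotic consequence of the non-asymptotic bound already established in Proposition~\ref{thm:non-asym-chern}. The essential observation is that the only $n$-dependent term on the right-hand side of Proposition~\ref{thm:non-asym-chern}, namely $-\frac{1}{n}\log[p(1-p)]$, vanishes in the limit $n\to\infty$ for any fixed $p\in(0,1)$, while the minimum over the three channel divergences is a constant independent of $n$.

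Concretely, I would first fix an arbitrary $p\in(0,1)$ and invoke Proposition~\ref{thm:non-asym-chern} to write, for every $n\in\mathbb{N}$,
\begin{equation}
\xi_{n}(\mathcal{N},\mathcal{M},p)\leq\min\Big\{\widetilde{D}_{1/2}^{\mathcal{A}}(\mathcal{N}\Vert\mathcal{M}),D_{\max}(\cN \Vert \cM), D_{\max}(\cM \Vert \cN )\Big\}-\frac{1}{n}\log\left[p\left(1-p\right)\right].
\end{equation}
Next I would take the limit superior as $n\to\infty$ on both sides. Since $\log[p(1-p)]$ is a finite constant for fixed $p\in(0,1)$, we have $\lim_{n\to\infty}\left(-\frac{1}{n}\log[p(1-p)]\right)=0$, so the limsup of the right-hand side equals the (constant) minimum term. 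Finally, recalling from Section~\ref{sec:syn-chernoff} that $\overline{\xi}(\cN,\cM)=\limsup_{n\to\infty}\xi_n(p,\cN,\cM)$ holds for any choice of $p\in(0,1)$ (the asymptotic symmetric error exponent being independent of the prior probability), applying $\limsup_{n\to\infty}$ to the left-hand side yields exactly $\overline{\xi}(\cN,\cM)$, giving the claimed inequality.

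There is essentially no substantive obstacle here, as the heavy lifting\,---\,the fidelity-to-trace-distance conversion via Lemma~\ref{lem:FvdG-PSD}, the reduction to $\widetilde{D}_{1/2}$, and the amortization collapse for $D_{\max}$\,---\,has already been carried out in the proof of Proposition~\ref{thm:non-asym-chern}. The only point warranting care is the bookkeeping of the limit: one must note that the bound holds uniformly in $n$ with an $n$-independent right-hand side up to the vanishing $-\frac{1}{n}\log[p(1-p)]$ correction, and that the definition of $\overline{\xi}$ as a limit superior (rather than a limit) is compatible with passing directly from the per-$n$ inequality to the asymptotic statement.
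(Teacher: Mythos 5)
Your proposal is correct and coincides with the paper's own treatment: the paper derives Corollary~\ref{cor:chernoff} directly from Proposition~\ref{thm:non-asym-chern}, noting only that the $-\frac{1}{n}\log[p(1-p)]$ term vanishes asymptotically, exactly as you spell out. Your additional remark that $\overline{\xi}(\cN,\cM)=\limsup_{n\to\infty}\xi_n(p,\cN,\cM)$ for any fixed $p\in(0,1)$ is the same justification the paper relies on (via the prior-independence noted in Section~\ref{sec:syn-chernoff}).
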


Note that \cite[Lemma 10]{Yu17} gave a non-explicit upper bound on $\widetilde{D}_{1/2}^{\mathcal{A}}(\mathcal{N}\Vert\mathcal{M})$, that is provably finite in the case that the channels are not perfectly distinguishable by an adaptive channel discrimination protocol. This upper bound establishes that it indeed makes sense to define the symmetric error exponent for general quantum channel discrimination (even though it is not generally given by the Chernoff channel divergence).


\section{Classical-quantum channel discrimination}\label{sec:cq}

In this section, we extend results from the classical setting~\cite{Hayashi09} to classical-quantum channel discrimination. We consider classical-quantum channels that act as
\begin{align}
\mathcal{N}_{X\rightarrow B}(\cdot)  &  =\sum_{x}\langle x|\cdot|x\rangle\nu_{B}^{x}\label{eq:cq1},\\
\mathcal{M}_{X\rightarrow B}(\cdot)  &  =\sum_{x}\langle x|\cdot|x\rangle\mu_{B}^{x},\label{eq:cq2}
\end{align}
where $\{ | x\rangle\}_x$ is an orthonormal basis and $\{\nu^x_B\}_x$ and $\{\mu^x_B\}_x$ are sets of states.

We find in several cases that the optimal classical-quantum channel discrimination protocol is to pick the best possible input and then to apply a tensor-power strategy. This result implies that adaptive strategies, quantum memories, and entangled inputs are of no use in some of the asymptotic settings. Note that this slightly extends the classical setting as well, in the sense that it was previously unclear if quantum memories could be of any help.


\subsection{Stein bound}\label{sec:stein-cq}

We start with the Stein's lemma for classical-quantum channels.

\begin{theorem}\label{thm:cqAmortized}
Let $\cN_{X\to B},\cM_{X\to B}\in\cQ(X\to B)$ be classical-quantum channels, as defined in Eqs.~\eqref{eq:cq1} and \eqref{eq:cq2}. Then, we have that
\begin{align}
\zeta(\cN,\cM)\coloneqq\lim_{\eps\to 0}\lim_{n\to\infty}\zeta_n(\eps,\cN,\cM)=\max_{x}D(\nu_{B}^{x}\Vert\mu_{B}^{x}). \label{eq:cqFull}
\end{align}
\end{theorem}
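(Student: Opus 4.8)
The plan is to prove the two inequalities separately, establishing that the limit equals $\max_x D(\nu_B^x \| \mu_B^x)$ by sandwiching.

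For the \textbf{achievability direction} (lower bound $\underline{\zeta}(\eps,\cN,\cM) \geq \max_x D(\nu_B^x \| \mu_B^x)$), I would use a non-adaptive product-state strategy. Let $x^* \coloneqq \arg\max_x D(\nu_B^x \| \mu_B^x)$. The discriminator simply feeds the classical input $x^*$ into each of the $n$ channel uses. This produces the product output states $(\nu_B^{x^*})^{\otimes n}$ under hypothesis $\cN$ and $(\mu_B^{x^*})^{\otimes n}$ under hypothesis $\cM$. Applying the ordinary quantum Stein's lemma~\cite{HP91,ON00} to these product states yields an error exponent of $D(\nu_B^{x^*} \| \mu_B^{x^*}) = \max_x D(\nu_B^x \| \mu_B^x)$ for every $\eps \in (0,1)$. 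This is exactly the achievability observation already recorded in Section~\ref{sec:general-bounds} specialized to this input.

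For the \textbf{converse direction} (upper bound $\overline{\zeta}(\eps,\cN,\cM) \leq \max_x D(\nu_B^x \| \mu_B^x)$), by Corollary~\ref{cor:all-stein} it suffices to show the amortization collapse $D^{\mathcal{A}}(\cN\|\cM) = D(\cN\|\cM)$ together with $D(\cN\|\cM) = \max_x D(\nu_B^x \| \mu_B^x)$ for classical-quantum channels. The key structural fact is that for any input state $\rho_{RX}$, the classical-quantum channel dephases the $X$ register, so $\cN_{X\to B}(\rho_{RX}) = \sum_x p(x)\, \omega_R^x \otimes \nu_B^x$ is a \emph{classical-quantum} state where $p(x) = \langle x|\rho_X|x\rangle$ and $\omega_R^x$ is the corresponding conditional state on $R$. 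The plan is to exploit the direct-sum property of the relative entropy on such states. Concretely, I would take arbitrary inputs $\rho_{RX}, \sigma_{RX}$ in the amortized divergence, write out both $D(\cN(\rho_{RX})\|\cM(\sigma_{RX}))$ and $D(\rho_{RX}\|\sigma_{RX})$ using the chain rule / direct-sum decomposition over the classical index $x$, and show the difference is bounded by $\max_x D(\nu_B^x\|\mu_B^x)$. The technical point is that relative entropy between these block-diagonal states splits as a term measuring the distinguishability of the classical distributions $p,q$ on $X$ (which is shared between numerator and denominator and cancels or is controlled) plus a convex combination of the $D(\nu_B^x\|\mu_B^x)$ terms, and crucially the $R$-register contributions in $D(\cN(\rho)\|\cM(\sigma))$ are bounded below by the corresponding input contributions by data processing, so they net out favorably.

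\textbf{The main obstacle} I anticipate is handling the amortization subtraction cleanly when $\rho_{RX}$ and $\sigma_{RX}$ have \emph{different} classical marginals $p(x)$ and $q(x)$ on $X$, and different conditional states on $R$. Unlike the fully quantum case, the dephasing structure is what saves us, but one must verify that the cross terms involving the $R$-system conditional states do not inflate the bound. The cleanest route is probably to invoke a chain-rule-style decomposition: for classical-quantum states, $D(\cN(\rho_{RX})\|\cM(\sigma_{RX})) = D(p\|q) + \sum_x p(x) D(\omega_R^x \otimes \nu_B^x \| \omega_R'^x \otimes \mu_B^x)$ where the conditional relative entropy further splits additively across the $R$ and $B$ tensor factors only when the conditionals match; when they differ one bounds $D(\omega_R^x\|\omega_R'^x) \geq 0$ and shows it is dominated by the corresponding term in $D(\rho_{RX}\|\sigma_{RX})$. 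I would carefully align the two expansions so that all $R$-dependent and classical-distribution-dependent pieces cancel or are absorbed, leaving only $\sum_x p(x) D(\nu_B^x\|\mu_B^x) \leq \max_x D(\nu_B^x\|\mu_B^x)$. Verifying this cancellation rigorously, possibly by reducing to the case where the supremum is attained and appealing to the covariance/symmetry reduction (Lemma on Symmetries) to further simplify the optimal inputs, is where the real work lies.
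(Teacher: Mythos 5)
Your overall architecture is exactly the paper's: achievability via the best single-letter input and quantum Stein's lemma, converse via the weak-converse bound in terms of the amortized relative entropy, and then an amortization collapse for classical-quantum channels (the paper's Lemma~\ref{lem:cqCollapse}). Two points in your write-up need fixing before it is a proof, though both are repairable and the fixes are precisely what the paper does.

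First, the citation: Corollary~\ref{cor:all-stein} does not suffice for the converse, since its upper bound is $D_{\max}(\cN\|\cM)$, which for classical-quantum channels evaluates to $\max_x D_{\max}(\nu_B^x\|\mu_B^x)$ and is in general strictly larger than $\max_x D(\nu_B^x\|\mu_B^x)$. The tool you actually need is Proposition~\ref{weakConverse}, which bounds $\zeta_n(\eps,\cN,\cM)$ by $\frac{1}{1-\eps}\left(D^{\cA}(\cN\|\cM)+h_2(\eps)/n\right)$; taking $n\to\infty$ and then $\eps\to 0$ reduces the converse to the collapse $D^{\cA}(\cN\|\cM)=\max_x D(\nu_B^x\|\mu_B^x)$. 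Second, your claimed chain-rule \emph{equality} $D(\cN(\rho_{RX})\|\cM(\sigma_{RX}))=D(p\|q)+\sum_x p(x)D(\omega_R^x\otimes\nu_B^x\|\omega_R'^x\otimes\mu_B^x)$ is false as stated: the channel outputs $\sum_x p(x)\,\omega_R^x\otimes\nu_B^x$ carry no orthogonal register recording $x$ (the $\nu_B^x$ and $\omega_R^x$ can overlap for different $x$), so the direct-sum decomposition does not apply to them. What does hold is the \emph{inequality} "$\leq$", obtained by comparing the output with the flagged state $\sum_x p(x)\,|x\rangle\langle x|_X\otimes\omega_R^x\otimes\nu_B^x$ and invoking data processing under partial trace of $X$; dually, the input term satisfies $D(\rho_{RX}\|\sigma_{RX})\geq D(p\|q)+\sum_x p(x)D(\omega_R^x\|\omega_R'^x)$ by dephasing $X$ (again data processing). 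These two one-sided data-processing steps are exactly the two inequalities in the paper's proof of Lemma~\ref{lem:cqCollapse}. Since both point in the favorable direction, your intended cancellation then goes through: the $D(p\|q)$ and $R$-register terms cancel, leaving $\sum_x p(x)D(\nu_B^x\|\mu_B^x)\leq\max_x D(\nu_B^x\|\mu_B^x)$, with the caveat that pairs $(\rho_{RX},\sigma_{RX})$ for which $D(\rho_{RX}\|\sigma_{RX})=\infty$ must be excluded from the supremum to avoid an ill-defined subtraction.
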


\begin{proof}
The achievability part follows directly by employing a product-state discrimination strategy. Therefore, it remains to show the converse direction. We know that the amortized quantum relative entropy divergence $D^{\mathcal{A}}(\cN\|\cM)$ provides a weak converse rate (Proposition~\ref{weakConverse}). The missing step is to evaluate that quantity, which is done in the following Lemma~\ref{lem:cqCollapse}. 
\end{proof}

\begin{lemma}
\label{lem:cqCollapse}
Let $\cN_{X\to B},\cM_{X\to B}\in\cQ(X\to B)$ be classical-quantum channels, as defined in Eqs.~\eqref{eq:cq1} and \eqref{eq:cq2}. Then the following amortization collapse occurs for the quantum relative entropy:
\begin{align}
D^{\mathcal{A}}(\mathcal{N}\Vert\mathcal{M})=\max_{x}D(\nu_{B}^{x}\Vert\mu_{B}^{x}).
\end{align}
\end{lemma}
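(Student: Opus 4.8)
The plan is to sandwich $D^{\cA}(\cN\|\cM)$ between $\max_x D(\nu_B^x\|\mu_B^x)$ from both sides. The lower bound $D^{\cA}(\cN\|\cM) \geq \max_x D(\nu_B^x\|\mu_B^x)$ is immediate: Proposition~\ref{Lem:AmortizedIneq} gives $D^{\cA}(\cN\|\cM) \geq D(\cN\|\cM)$, and feeding the classical input $|x^\ast\rangle\langle x^\ast|_X$ (with $x^\ast$ attaining the maximum and trivial reference $R$) into the generalized channel divergence produces output states $\nu_B^{x^\ast}$ and $\mu_B^{x^\ast}$, so $D(\cN\|\cM) \geq D(\nu_B^{x^\ast}\|\mu_B^{x^\ast}) = \max_x D(\nu_B^x\|\mu_B^x)$. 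The real content of the lemma is therefore the reverse (amortization-collapse) inequality $D^{\cA}(\cN\|\cM) \leq \max_x D(\nu_B^x\|\mu_B^x)$.

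For that upper bound, I would fix arbitrary inputs $\rho_{RX},\sigma_{RX}\in\cS(RX)$. Since both channels act by measuring $X$ in the basis $\{|x\rangle\}$, set $p_x \coloneqq \tr[(I_R\otimes|x\rangle\langle x|_X)\rho_{RX}]$ and $\rho_R^x \coloneqq p_x^{-1}\langle x|_X \rho_{RX}|x\rangle_X$, and analogously $q_x,\sigma_R^x$ for $\sigma_{RX}$, so that
\[
\cN_{X\to B}(\rho_{RX}) = \sum_x p_x\,\rho_R^x\otimes\nu_B^x, \qquad \cM_{X\to B}(\sigma_{RX}) = \sum_x q_x\,\sigma_R^x\otimes\mu_B^x.
\]
The key device is to route both channels through an extended version that also records the measurement outcome in a classical register $X'$, namely $\cN'_{X\to X'B}(\cdot) = \sum_x \langle x|\cdot|x\rangle\,|x\rangle\langle x|_{X'}\otimes\nu_B^x$ and likewise $\cM'_{X\to X'B}$. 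Since $\cN = \tr_{X'}\circ\cN'$, data processing gives $D(\cN(\rho_{RX})\|\cM(\sigma_{RX})) \leq D(\cN'(\rho_{RX})\|\cM'(\sigma_{RX}))$, and because the right-hand states are now block-diagonal in $X'$, the standard formula for the relative entropy of classical-quantum states (the direct-sum property extended to unequal distributions $p\neq q$) together with additivity on tensor products yields
\[
D(\cN'(\rho_{RX})\|\cM'(\sigma_{RX})) = D(p\|q) + \sum_x p_x\big[D(\rho_R^x\|\sigma_R^x) + D(\nu_B^x\|\mu_B^x)\big],
\]
where $D(p\|q)$ is the classical relative entropy of $p$ and $q$. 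On the input side, applying data processing under the $X$-dephasing channel $\Delta_X$ and the same block-diagonal formula gives the matching lower bound $D(\rho_{RX}\|\sigma_{RX}) \geq D(p\|q) + \sum_x p_x D(\rho_R^x\|\sigma_R^x)$. Subtracting, the $D(p\|q)$ and $\sum_x p_x D(\rho_R^x\|\sigma_R^x)$ terms cancel, leaving
\[
D(\cN(\rho_{RX})\|\cM(\sigma_{RX})) - D(\rho_{RX}\|\sigma_{RX}) \leq \sum_x p_x D(\nu_B^x\|\mu_B^x) \leq \max_x D(\nu_B^x\|\mu_B^x),
\]
and taking the supremum over $\rho_{RX},\sigma_{RX}$ completes the collapse.

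The conceptual obstacle is that the channel outputs $\cN(\rho_{RX})$ and $\cM(\sigma_{RX})$ carry no explicit classical $x$-label, because the families $\{\nu_B^x\}$ and $\{\mu_B^x\}$ need not be orthogonal, so one cannot directly invoke the direct-sum property on them. Introducing the copy register $X'$ and using data processing in the favourable direction (tracing out $X'$ only decreases relative entropy, hence the extended quantity is a valid \emph{upper} bound) is precisely what restores the classical structure while preserving the inequality; the amortization subtraction then cancels the reference-system contribution $\sum_x p_x D(\rho_R^x\|\sigma_R^x)$, which is the only place the \emph{a priori} unbounded dimension of $R$ could have caused trouble. I expect the only routine points needing care are the two relative-entropy identities for classical-quantum states with differing distributions $p\neq q$, and checking that the support conditions leave both sides simultaneously finite or infinite.
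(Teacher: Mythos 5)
Your proposal is correct and follows essentially the same route as the paper's proof: both establish the easy lower bound by feeding in the optimal classical letter, and both prove the collapse by using data processing to attach the classical label $x$ to the channel outputs (tracing out the label register goes in the favourable direction) and data processing on the input term to reduce it to classical-quantum form, after which the reference-system contributions cancel in the amortized difference. The only differences are bookkeeping: the paper processes the input pair through a measure-and-append-$\nu_B^x$ channel and then computes the difference of the two labelled relative entropies explicitly via logarithms (exploiting a common first argument), whereas you dephase the input and invoke the block-diagonal decomposition $D(p\Vert q)+\sum_x p_x D(\cdot\Vert\cdot)$ twice together with tensor additivity -- the same cancellation, organized slightly differently.
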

\begin{proof}
We show this as follows. 
The following inequality is a trivial consequence of picking $\rho_{RA}=\sigma_{RA}=|x\rangle\langle x|_{R}\otimes|x\rangle\langle x|_{A}$:
\begin{align}
D^{\mathcal{A}}(\mathcal{N}\Vert\mathcal{M})\geq D(\nu_{B}^{x}\Vert\mu_{B}^{x}).
\end{align}
Since it holds for all $x$, we conclude that
\begin{align}
D^{\mathcal{A}}(\mathcal{N}\Vert\mathcal{M})\geq\max_{x}D(\nu_{B}^{x}\Vert\tau
_{B}^{x}).
\end{align}

To see the other inequality, consider for any states $\rho_{RA}$ and
$\sigma_{RA}$ that
\begin{align}
\mathcal{N}_{A\rightarrow B}(\rho_{RA}) &  =\sum_{x}p(x)\rho_{R}^{x}\otimes\nu_{B}^{x},\\
\mathcal{M}_{A\rightarrow B}(\sigma_{RA}) &  =\sum_{x}q(x)\sigma_{R}^{x}\otimes\mu_{B}^{x},
\end{align}
where $p(x)\rho_{R}^{x}\coloneqq\langle x|_{A}\rho_{RA}|x\rangle_{A}$ and $q(x)\sigma_{R}^{x}\coloneqq\langle x|_{A}\sigma_{RA}|x\rangle_{A}$, with $p(x)$ and $q(x)$ probability distributions and 
$\{\rho^x_R\}_x$ and $\{\sigma^x_R\}_x$ sets of states. Now recall the following property of quantum relative entropy from \cite[Exercise~11.8.8]{W13}:
\begin{equation}
D\left(\sum_{x} p(x) |x\rangle\langle x| \otimes \rho^x_R \middle \| \sum_{x} q(x) |x\rangle\langle x| \otimes \sigma^x_R\right) = D(p\| q) +  \sum_x p(x) D(\rho^x_R \| \sigma^x_R).
\label{eq:rel-ent-direct-sum}
\end{equation}
 Then, we have that
\begin{align}
&D(\mathcal{N}_{A\rightarrow B}(\rho_{RA})\Vert\mathcal{M}_{A\rightarrow B}(\sigma_{RA}))-D(\rho_{RA}\Vert\sigma_{RA})\nonumber\\
&=D\!\left(\sum_{x}p(x)\rho_{R}^{x}\otimes\nu_{B}^{x}\middle\|\sum_{x}q(x)\sigma_{R}^{x}\otimes\mu_{B}^{x}\right)  -D(\rho_{RA}\Vert\sigma_{RA})\\
&\leq D\!\left(\sum_{x}p(x)\rho_{R}^{x}\otimes\nu_{B}^{x}\middle\|\sum_{x}q(x)\sigma_{R}^{x}\otimes\mu_{B}^{x}\right)\notag\\
& \qquad -D\!\left(\sum_{x}p(x)\rho_{R}^{x}\otimes|x\rangle\langle x|_{X}\otimes\nu_{B}^{x}\middle\Vert\sum_{x}q(x)\sigma_{R}^{x}\otimes|x\rangle\langle x|_{X}\otimes\nu_{B}^{x}\right)\\
& \leq D\!\left(\sum_{x}p(x)\rho_{R}^{x}\otimes|x\rangle\langle x|_{X} \otimes\nu_{B}^{x}\middle\Vert\sum_{x}q(x)\sigma_{R}^{x}\otimes|x\rangle\langle x|_{X}\otimes\mu_{B}^{x}\right)\nonumber\\
&\qquad-D\!\left(\sum_{x}p(x)\rho_{R}^{x}\otimes|x\rangle\langle x|_{X}\otimes\nu_{B}^{x}\middle\Vert\sum_{x}q(x)\sigma_{R}^{x}\otimes|x\rangle\langle x|_{X}\otimes\nu_{B}^{x}\right)\\
&= D(p\| q) + \sum_x p(x) D( \rho_{R}^{x}\otimes\nu_{B}^{x} \| \sigma_{R}^{x}\otimes\mu_{B}^{x})
- \left(D(p\| q) + \sum_x p(x) D( \rho_{R}^{x}\otimes\nu_{B}^{x} \| \sigma_{R}^{x}\otimes\nu_{B}^{x})\right)\\
&=  \sum_x p(x) \left[D( \rho_{R}^{x} \| \sigma_{R}^{x}) + 
D( \nu_{B}^{x} \| \mu_{B}^{x})\right]
-  \sum_x p(x) \left[D( \rho_{R}^{x} \| \sigma_{R}^{x})
+D( \nu_{B}^{x} \| \nu_{B}^{x})\right]\\
&=\sum_{x}p(x)D(\nu_{B}^{x}\Vert\mu_{B}^{x})\\
&\leq\max_{x}D(\nu_{B}^{x}\Vert\mu_{B}^{x}).
\end{align}
The first two inequalities follow from data processing: the first from $D(\rho_{RA} \Vert \sigma_{RA}) \geq D(\cN_{A\to B}(\rho_{RA}) \Vert \cN_{A\to B}(\sigma_{RA}))$ and the second from partial trace. The second equality follows from the identity in \eqref{eq:rel-ent-direct-sum}.
Since the above development holds for arbitrary states $\rho_{RA}$ and $\sigma_{RA}$, we find that
\begin{align}
D^{\mathcal{A}}(\mathcal{N}\Vert\mathcal{M})\leq\max_{x}D(\nu_{B}^{x}\Vert\mu_{B}^{x}),
\end{align}
concluding the proof.
\end{proof}


\subsection{Strong converse exponent}

In this section, we prove that for classical-quantum channels the strong converse exponent from the general case in Proposition \ref{thm:converseexponent} matches the exponent achievable with product states. Similar to the proof in the previous section, the collapse of the amortized quantity plays an important role in this proof.

\begin{lemma}\label{lem:RenyiCollapse}
Let $\cN_{X\to B},\cM_{X\to B}\in\cQ(X\to B)$ be classical-quantum channels, as defined in Eqs.~\eqref{eq:cq1} and \eqref{eq:cq2}. Then, we have that
\begin{align}
D_\alpha^{\mathcal{A}}(\mathcal{N}\Vert\mathcal{M})& =\max_{x}D_\alpha(\nu_{B}^{x}\Vert\mu_{B}^{x})\;\text{for $\alpha\in[0,2]$},\quad\text{ as well as,}\\
\widetilde{D}_\alpha^{\mathcal{A}}(\mathcal{N}\Vert\mathcal{M})
& =\max_{x}\widetilde{D}_\alpha(\nu_{B}^{x}\Vert\mu_{B}^{x})\;\text{for $\alpha\geq\frac{1}{2}$}.
\end{align}
\end{lemma}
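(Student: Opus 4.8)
The plan is to follow the proof of Lemma~\ref{lem:cqCollapse} essentially verbatim, replacing the relative entropy by the appropriate R\'enyi quasi-entropy and tracking the range of $\alpha$ for which the data-processing inequality remains valid. The lower bounds $D_\alpha^{\mathcal{A}}(\cN\|\cM)\geq\max_x D_\alpha(\nu_B^x\|\mu_B^x)$ and $\widetilde D_\alpha^{\mathcal{A}}(\cN\|\cM)\geq\max_x \widetilde D_\alpha(\nu_B^x\|\mu_B^x)$ are immediate and hold for every $\alpha$: taking $\rho_{RA}=\sigma_{RA}=|x\rangle\langle x|_R\otimes|x\rangle\langle x|_A$ kills the subtracted term and produces channel outputs $|x\rangle\langle x|_R\otimes\nu_B^x$ and $|x\rangle\langle x|_R\otimes\mu_B^x$, whose divergence is $D_\alpha(\nu_B^x\|\mu_B^x)$ (respectively the sandwiched version); optimizing over $x$ gives the claim.

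For the upper bound, I fix arbitrary $\rho_{RA},\sigma_{RA}$, write $p(x)\rho_R^x=\langle x|_A\rho_{RA}|x\rangle_A$ and $q(x)\sigma_R^x=\langle x|_A\sigma_{RA}|x\rangle_A$, and introduce the three block-diagonal states $P=\sum_x p(x)\rho_R^x\otimes|x\rangle\langle x|_X\otimes\nu_B^x$, $Q=\sum_x q(x)\sigma_R^x\otimes|x\rangle\langle x|_X\otimes\mu_B^x$, and $Q'=\sum_x q(x)\sigma_R^x\otimes|x\rangle\langle x|_X\otimes\nu_B^x$, exactly as in Lemma~\ref{lem:cqCollapse}. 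The range restrictions $\alpha\in[0,2]$ (Petz) and $\alpha\geq\tfrac12$ (sandwiched) enter precisely here, since these are exactly the ranges in which $D_\alpha$, respectively $\widetilde D_\alpha$, obeys the data-processing inequality. Applying data processing under the coherent-copy channel (measure $A$ in the basis $\{|x\rangle\}$, record the outcome in $X$, and prepare $\nu_B^x$ on $B$) gives $D_\alpha(\rho_{RA}\|\sigma_{RA})\geq D_\alpha(P\|Q')$, while data processing under $\tr_X$ gives $D_\alpha(\cN(\rho_{RA})\|\cM(\sigma_{RA}))\leq D_\alpha(P\|Q)$; hence the amortized difference is bounded by $D_\alpha(P\|Q)-D_\alpha(P\|Q')$.

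The one genuinely new ingredient is the evaluation of this difference. Because $P,Q,Q'$ are block-diagonal in $X$ and tensor-product across $R$ and $B$ within each block, the Petz quasi-entropy factorizes as $\tr[P^\alpha Q^{1-\alpha}]=\sum_x c_x f_x$, where $c_x:=p(x)^\alpha q(x)^{1-\alpha}\tr[(\rho_R^x)^\alpha(\sigma_R^x)^{1-\alpha}]\geq0$ and $f_x:=\tr[(\nu_B^x)^\alpha(\mu_B^x)^{1-\alpha}]$; the same computation for $Q'$ gives $\tr[P^\alpha Q'^{1-\alpha}]=\sum_x c_x$, since $\tr[(\nu_B^x)^\alpha(\nu_B^x)^{1-\alpha}]=1$. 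Writing $f_x=2^{(\alpha-1)D_\alpha(\nu_B^x\|\mu_B^x)}$, the difference collapses to $\tfrac{1}{\alpha-1}\log\!\big(\sum_x c_x f_x\big/\sum_x c_x\big)$, i.e.\ $\tfrac{1}{\alpha-1}$ times the logarithm of a $c_x$-weighted average of the $f_x$. The sandwiched case is identical, using the generalized-inverse identity $(\nu_B^x)^{\frac{1-\alpha}{2\alpha}}\nu_B^x(\nu_B^x)^{\frac{1-\alpha}{2\alpha}}=(\nu_B^x)^{1/\alpha}$ to again collapse the $Q'$ term to $\sum_x\widetilde c_x$.

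The proof then closes with a weighted-average-versus-extremum estimate whose direction is governed by the sign of $\tfrac{1}{\alpha-1}$: for $\alpha>1$ the convex combination of the $f_x$ is at most $\max_x f_x$, and multiplying by the positive factor $\tfrac{1}{\alpha-1}$ yields the bound $\max_x D_\alpha(\nu_B^x\|\mu_B^x)$; for $\alpha<1$ the combination is at least $\min_x f_x$, and multiplying by the now-negative factor $\tfrac{1}{\alpha-1}$ flips this into the same conclusion. Taking the supremum over $\rho_{RA},\sigma_{RA}$ finishes the upper bound. I expect the main obstacle to be bookkeeping rather than conceptual: one must verify the quasi-entropy factorization carefully when the $\mu_B^x$ or $\nu_B^x$ are rank-deficient (so that the negative fractional powers are generalized inverses restricted to supports), confirm that data processing holds on exactly the claimed $\alpha$-ranges and no narrower, and dispatch the boundary and limiting values $\alpha\in\{0,\tfrac12,1\}$ by continuity.
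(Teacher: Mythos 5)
Your proposal is correct and follows essentially the same route as the paper's proof: the same choice $\rho_{RA}=\sigma_{RA}=|x\rangle\langle x|_R\otimes|x\rangle\langle x|_A$ for the lower bound, the same two data-processing steps (measure-and-prepare channel producing the block-diagonal states $P,Q'$ for the subtracted term, partial trace over $X$ for the output term), the same factorization of the quasi-entropies into $\sum_x c_x f_x$ with the $Q'$ term collapsing to $\sum_x c_x$, and the same sign-sensitive weighted-average-versus-extremum estimate to reach $\max_x D_\alpha(\nu_B^x\|\mu_B^x)$. Your explicit case split on the sign of $\tfrac{1}{\alpha-1}$ and the remarks on generalized inverses and boundary values of $\alpha$ just make explicit what the paper leaves terse.
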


\begin{proof}
We detail the proof for the Petz-R\'enyi divergences, and note that the proof for the sandwiched R\'enyi divergences is essentially the same. In fact, the key ideas are similar to those used in the proof of Lemma~\ref{lem:cqCollapse}. The following inequality is a trivial consequence of picking $\rho_{RA}=\sigma_{RA}=|x\rangle\langle x|_{R}\otimes|x\rangle\langle x|_{A}$:
\begin{align}
D_\alpha^{\mathcal{A}}(\mathcal{N}\Vert\mathcal{M})\geq D_\alpha(\nu_{B}^{x}\Vert\mu_{B}^{x}).
\end{align}
Since it holds for all $x$, we conclude that
\begin{align}
D_\alpha^{\mathcal{A}}(\mathcal{N}\Vert\mathcal{M})\geq\max_{x}D_\alpha(\nu_{B}^{x}\Vert\mu_{B}^{x}).
\end{align}

To see the other inequality, consider for any states $\rho_{RA}$ and $\sigma_{RA}$ that
\begin{align}
\mathcal{N}_{A\rightarrow B}(\rho_{RA}) &  =\sum_{x}p(x)\rho_{R}^{x}\otimes\nu_{B}^{x}\\
\mathcal{M}_{A\rightarrow B}(\sigma_{RA}) &  =\sum_{x}q(x)\sigma_{R}^{x}\otimes\mu_{B}^{x}.
\end{align}
where $p(x)\rho_{R}^{x}\coloneqq\langle x|_{A}\rho_{RA}|x\rangle_{A}$ and $q(x)\sigma_{R}^{x}\coloneqq\langle x|_{A}\sigma_{RA}|x\rangle_{A}$. Then, we have that
\begin{align}
&D_\alpha(\mathcal{N}_{A\rightarrow B}(\rho_{RA})\Vert\mathcal{M}_{A\rightarrow B}(\sigma_{RA}))-D_\alpha(\rho_{RA}\Vert\sigma_{RA})\nonumber\\
&=D_\alpha\!\left(\sum_{x}p(x)\rho_{R}^{x}\otimes\nu_{B}^{x}\middle\Vert\sum_{x}q(x)\sigma_{R}^{x}\otimes\mu_{B}^{x}\right)-D_\alpha(\rho_{RA}\Vert\sigma_{RA})\\
&\leq D_\alpha\!\left(\sum_{x}p(x)\rho_{R}^{x}\otimes\nu_{B}^{x}\middle\Vert\sum_{x}q(x)\sigma_{R}^{x}\otimes\mu_{B}^{x}\right)\notag \\
& \qquad -D_\alpha\!\left(\sum_{x}p(x)\rho_{R}^{x}\otimes|x\rangle\langle x|_{X}\otimes\nu_{B}^{x}\middle\Vert\sum_{x}q(x)\sigma_{R}^{x}\otimes|x\rangle\langle x|_{X}\otimes\nu_{B}^{x}\right)\\
&\leq D_\alpha\!\left(\sum_{x}p(x)\rho_{R}^{x}\otimes|x\rangle\langle x|_{X}\otimes\nu_{B}^{x}\middle\Vert\sum_{x}q(x)\sigma_{R}^{x}\otimes|x\rangle\langle x|_{X}\otimes\mu_{B}^{x}\right)\nonumber\\
&\qquad-D_\alpha\!\left(\sum_{x}p(x)\rho_{R}^{x}\otimes|x\rangle\langle x|_{X}\otimes\nu_{B}^{x}\middle\Vert\sum_{x}q(x)\sigma_{R}^{x}\otimes|x\rangle\langle x|_{X}\otimes\nu_{B}^{x}\right)\\
&=\frac{1}{\alpha-1}\log\operatorname{Tr}\!\left[\left(\sum_{x}p(x)\rho_{R}^{x}\otimes|x\rangle\langle x|_{X}\otimes\nu_{B}^{x}\right)^\alpha\left(\sum_{x}q(x)\sigma_{R}^{x}\otimes|x\rangle\langle x|_{X}\otimes\mu_{B}^{x}\right)^{1-\alpha}\right]\nonumber\\
&\qquad-\frac{1}{\alpha-1}\log\operatorname{Tr}\!\left[\left(\sum_{x}p(x)\rho_{R}^{x}\otimes|x\rangle\langle x|_{X}\otimes\nu_{B}^{x}\right)^\alpha\left(\sum_{x}q(x)\sigma_{R}^{x}\otimes|x\rangle\langle x|_{X}\otimes\nu_{B}^{x}\right)^{1-\alpha}\right] \\
&=\frac{1}{\alpha-1}\log\frac{\sum_{x}p(x)^\alpha q(x)^{1-\alpha}\operatorname{Tr}\Big[\left(\rho_{R}^{x}\right)^\alpha\left(\sigma_{R}^{x}\right)^{1-\alpha}\Big]\operatorname{Tr}\Big[\left(\nu_{B}^{x}\right)^\alpha\left(\mu_{B}^{x}\right)^{1-\alpha}\Big]}{\sum_{x}p(x)^\alpha q(x)^{1-\alpha}\operatorname{Tr}\Big[\left(\rho_{R}^{x}\right)^\alpha\left(\sigma_{R}^{x}\right)^{1-\alpha}\Big]\operatorname{Tr}\Big[\left(\nu_{B}^{x}\right)^\alpha\left(\nu_{B}^{x}\right)^{1-\alpha}\Big]}\\
&=\frac{1}{\alpha-1}\log\frac{\sum_{x}p(x)^\alpha q(x)^{1-\alpha}\operatorname{Tr}\Big[\left(\rho_{R}^{x}\right)^\alpha\left(\sigma_{R}^{x}\right)^{1-\alpha}\Big]\operatorname{Tr}\Big[\left(\nu_{B}^{x}\right)^\alpha\left(\mu_{B}^{x}\right)^{1-\alpha}\Big]}{\sum_{x}p(x)^\alpha q(x)^{1-\alpha}\operatorname{Tr}\Big[\left(\rho_{R}^{x}\right)^\alpha\left(\sigma_{R}^{x}\right)^{1-\alpha}\Big]}.
\end{align}
Defining the probability distribution
\begin{align}
r(x)=\frac{p(x)^\alpha q(x)^{1-\alpha}\operatorname{Tr}\Big[\left(\rho_{R}^{x}\right)^\alpha\left(\sigma_{R}^{x}\right)^{1-\alpha}\Big]}{\sum_{x}p(x)^\alpha q(x)^{1-\alpha}\operatorname{Tr}\Big[\left(\rho_{R}^{x}\right)^\alpha\left(\sigma_{R}^{x}\right)^{1-\alpha}\Big]},
\end{align}
we see that the above is equal to
\begin{align}
\frac{1}{\alpha-1}\log\sum_{x}r(x)\operatorname{Tr}\Big[\left(\nu_{B}^{x}\right)^\alpha\left(\mu_{B}^{x}\right)^{1-\alpha}\Big]&\leq\max_{x}\frac{1}{\alpha-1}\log\operatorname{Tr}\Big[\left(\nu_{B}^{x}\right)^\alpha\left(\mu_{B}^{x}\right)^{1-\alpha}\Big]\\
&=\max_{x}D_\alpha(\nu_{B}^{x}\Vert\mu_{B}^{x}).
\end{align}
Since the above development holds for arbitrary states $\rho_{RA}$ and $\sigma_{RA}$, we find that
\begin{align}
D_\alpha^{\mathcal{A}}(\mathcal{N}\Vert\mathcal{M})\leq\max_{x}D_\alpha(\nu_{B}^{x}\Vert\mu_{B}^{x}),
\end{align}
concluding the proof.
\end{proof}

We are now ready to state the strong converse exponent for the discrimination of classical-quantum channels. 

\begin{theorem}\label{thm:cq-str-conv-exp}
Let $\cN_{X\to B},\cM_{X\to B}\in\cQ(X\to B)$ be classical-quantum channels, as defined in \eqref{eq:cq1} and \eqref{eq:cq2}. Then, for $r>0$ we have that
\begin{align}
H(r,\cN,\cM)\coloneqq\lim_{n\to\infty}H_n(r,\cN,\cM)=\sup_{\alpha>1}\frac{\alpha-1}{\alpha}\left(r-\max_x\widetilde{D}_\alpha(\nu_{B}^{x}\|\mu_{B}^{x})\right).
\end{align}
\end{theorem}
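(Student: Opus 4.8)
The plan is to sandwich the limiting exponent $H(r,\cN,\cM)$ between a converse (lower) bound and an achievability (upper) bound, and then to close the gap with a minimax argument. Write $E^\ast\coloneqq\sup_{\alpha>1}\frac{\alpha-1}{\alpha}\big(r-\max_x\widetilde{D}_\alpha(\nu_B^x\|\mu_B^x)\big)$ for the claimed value. For the converse I would invoke the general meta-converse of Proposition~\ref{thm:converseexponent}, which is really a bound valid for each fixed $n$, together with the amortization collapse of Lemma~\ref{lem:RenyiCollapse}, namely $\widetilde{D}^{\mathcal{A}}_\alpha(\cN\|\cM)=\max_x\widetilde{D}_\alpha(\nu_B^x\|\mu_B^x)$ for $\alpha>1$. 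Substituting the latter into the former yields $H_n(r,\cN,\cM)\geq E^\ast$ for every $n$, and hence $\liminf_{n}H_n(r,\cN,\cM)\geq E^\ast$. It therefore remains only to prove the matching achievability statement $\limsup_{n}H_n(r,\cN,\cM)\leq E^\ast$.

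For achievability I would use a non-adaptive product-state strategy specified by a probability distribution $\lambda$ on the input alphabet: for each symbol $x$, a fraction $\lambda_x$ of the $n$ channel uses is fed the input $|x\rangle$. The two candidate output states are then the blockwise i.i.d.\ product states $\bigotimes_x(\nu_B^x)^{\otimes\lfloor\lambda_x n\rfloor}$ and $\bigotimes_x(\mu_B^x)^{\otimes\lfloor\lambda_x n\rfloor}$, so the channel-discrimination problem reduces to ordinary state discrimination. Appealing to the strong converse exponent for state discrimination~\cite{Mosonyi2015} and to additivity of the sandwiched R\'enyi divergence over tensor products, this strategy achieves the type-I exponent $\sup_{\alpha>1}G(\alpha,\lambda)$, where $G(\alpha,\lambda)\coloneqq\frac{\alpha-1}{\alpha}\big(r-\sum_x\lambda_x\widetilde{D}_\alpha(\nu_B^x\|\mu_B^x)\big)$. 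Optimizing over $\lambda$ gives $\limsup_{n}H_n(r,\cN,\cM)\leq\inf_\lambda\sup_{\alpha>1}G(\alpha,\lambda)$.

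The step I expect to be the main obstacle is then the minimax identity $\inf_\lambda\sup_{\alpha>1}G(\alpha,\lambda)=\sup_{\alpha>1}\inf_\lambda G(\alpha,\lambda)$. Its right-hand side equals $E^\ast$, since minimizing the affine-in-$\lambda$ map $G(\alpha,\cdot)$ over the simplex selects the most distinguishable symbol, giving $\inf_\lambda G(\alpha,\lambda)=\frac{\alpha-1}{\alpha}\big(r-\max_x\widetilde{D}_\alpha(\nu_B^x\|\mu_B^x)\big)$. To prove the identity I would apply Sion's minimax theorem on the compact convex simplex of distributions $\lambda$ and the interval $\alpha\in(1,\infty)$: the function $G$ is affine, hence quasi-convex and continuous, in $\lambda$, and the delicate point is that $G(\cdot,\lambda)$ is quasi-concave in $\alpha$. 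This follows by writing $G(\alpha,\lambda)=h_\lambda(\alpha)/\alpha$ with $h_\lambda(\alpha)\coloneqq(\alpha-1)r-\sum_x\lambda_x(\alpha-1)\widetilde{D}_\alpha(\nu_B^x\|\mu_B^x)$; because $\alpha\mapsto(\alpha-1)\widetilde{D}_\alpha(\rho\|\sigma)$ is convex (a known property of the sandwiched R\'enyi divergence), $h_\lambda$ is concave, and the sign of the derivative of $h_\lambda(\alpha)/\alpha$ is governed by $\alpha h_\lambda'(\alpha)-h_\lambda(\alpha)$, whose derivative $\alpha h_\lambda''(\alpha)$ is non-positive, so this quantity changes sign at most once and $G(\cdot,\lambda)$ is single-peaked. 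Chaining the converse, the achievability, and this minimax then gives $E^\ast\leq\liminf_n H_n\leq\limsup_n H_n\leq E^\ast$, so the limit exists and equals $E^\ast$. A secondary technical point worth flagging is the rigorous justification of applying the i.i.d.\ state-discrimination exponent of~\cite{Mosonyi2015} to the blockwise, non-identical product states above, which I would handle by grouping the fixed-proportion blocks into a single super-letter before passing to the limit.
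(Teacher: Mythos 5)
Your proposal is correct, and its skeleton---converse from Proposition~\ref{thm:converseexponent} plus the amortization collapse of Lemma~\ref{lem:RenyiCollapse}, achievability from a product-state strategy via the state-discrimination exponent of \cite{Mosonyi2015}, and a minimax to close the gap---is the same as the paper's; the genuine differences are in how the last two steps are implemented. For achievability, the paper does not allocate deterministic fractions $\lambda_x$ of the channel uses: it feeds in $n$ i.i.d.\ copies of the classical state $\sum_x p_X(x)|x\rangle\langle x|\otimes|x\rangle\langle x|$, keeping the chosen letter in a classical register, so the two output states are exactly the i.i.d.\ states $\nu_{XB}^{\otimes n}$ and $\mu_{XB}^{\otimes n}$ with $\nu_{XB}=\sum_x p_X(x)|x\rangle\langle x|_X\otimes\nu_B^x$, and \cite[Lemma 4.17]{Mosonyi2015} applies verbatim---no super-letter grouping, no rational approximation of $\lambda$, no density/continuity argument. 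Note this also changes the objective: the paper optimizes $\widetilde{D}_\alpha(\nu_{XB}\|\mu_{XB})=\frac{1}{\alpha-1}\log\sum_x p_X(x)\widetilde{Q}_\alpha(\nu_B^x\|\mu_B^x)$, which for $\alpha>1$ dominates your $\sum_x\lambda_x\widetilde{D}_\alpha(\nu_B^x\|\mu_B^x)$ by concavity of the logarithm (the randomized strategy is never worse than typed allocation); both nevertheless collapse to $E^\ast$ after the minimax, since the inner minimization over the simplex is attained at a vertex, where the two expressions coincide. For the minimax, the paper substitutes $u=(\alpha-1)/\alpha$ and uses \cite[Lemma 13]{Cooney2016} to obtain genuine concavity in $u$ (and convexity in $p_X$), as required by the Kneser--Fan theorem it invokes; your route via Sion's theorem needs only quasi-concavity in $\alpha$, and your derivative-sign argument for single-peakedness of $h_\lambda(\alpha)/\alpha$ is sound, with differentiability unproblematic under the support conditions $\supp(\nu_B^x)\subseteq\supp(\mu_B^x)$---conditions that the paper's own continuity step also assumes implicitly, and without which both $E^\ast$ and the statement degenerate. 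In short, your approach works modulo the flagged blockwise-to-i.i.d.\ reduction; the paper's randomized-input trick buys exactly the elimination of that technical overhead, while your Sion-based argument spares the change of variables and the auxiliary convexity lemma from \cite{Cooney2016}.
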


\begin{proof}
For the optimality part, i.e.,
\begin{equation}\label{eq:lwr-bnd-sc-exp}
H(r,\cN,\cM) \geq \sup_{\alpha>1}\frac{\alpha-1}{\alpha}\left(r-\max_x\widetilde{D}_\alpha(\nu_{B}^{x}\|\mu_{B}^{x})\right),
\end{equation}
we combine Proposition \ref{thm:converseexponent} with Lemma~\ref{lem:RenyiCollapse}, and the result follows immediately. It is helpful to rewrite this lower bound by allowing for an optimization over probability distributions on the input letters $x$. With this in mind, consider that
\begin{equation}\label{eq:letter-to-dist}
\max_{x}\widetilde{D}_\alpha(\nu_{B}^{x}\Vert\mu_{B}^{x})=\max_{p_{X}}\widetilde{D}_\alpha(\nu_{XB}\Vert\mu_{XB}),
\end{equation}
where the second maximum is with respect to a probability distribution $p_{X}$ and
\begin{equation}
\nu_{XB}   \coloneqq\sum_{x}p_{X}(x)|x\rangle\langle x|_{X}\otimes\nu_{B}^{x},\qquad
\mu_{XB}   \coloneqq\sum_{x}p_{X}(x)|x\rangle\langle x|_{X}\otimes\mu_{B}^{x}.
\end{equation}
To see the equality in \eqref{eq:letter-to-dist}, let $x^{\ast}$ be the
optimal choice for $\max_{x}\widetilde{D}_\alpha(\nu_{B}^{x}\Vert\mu_{B}^{x})$.
Then the distribution $p_{X}(x)=\delta_{x^{\ast},x}$ is a particular choice
for the optimization on the right, so that
\begin{equation}
\max_{x}\widetilde{D}_\alpha(\nu_{B}^{x}\Vert\mu_{B}^{x})\leq\max_{p_{X}}\widetilde{D}_\alpha(\nu_{XB}\Vert\mu_{XB}).
\end{equation}
For the other direction, consider that the sandwiched R\'enyi relative entropy is quasi-jointly-concave \cite{muller2013quantum,WWY14}, so that the following inequality holds for an arbitrary probability distribution $p_X$:
\begin{equation}
\widetilde{D}_\alpha(\nu_{XB}\Vert\mu_{XB})\leq\max_{x}\widetilde{D}_\alpha(|x\rangle\langle x|_{X}\otimes\nu_{B}^{x}\Vert|x\rangle\langle x|_{X}\otimes\mu_{B}^{x})=\max_{x}\widetilde{D}_\alpha(\nu_{B}^{x}\Vert\mu_{B}^{x}).
\end{equation}
So this means that we can rewrite the lower bound in \eqref{eq:lwr-bnd-sc-exp} as
\begin{equation}\label{eq:lwr-bnd-sc-exp-rewrite}
H(r,\cN,\cM)\geq\sup_{\alpha>1}\min_{p_X}\frac{\alpha-1}{\alpha}\left(r-\widetilde{D}_\alpha(\nu_{XB}\|\mu_{XB})\right).
\end{equation}
The following upper bound on the strong converse exponent is a consequence of implementing a product-state strategy, using the state discrimination result from \cite[Lemma 4.17]{Mosonyi2015}, i.e., inputting one share of the classical state $\sum_x p_X(x) |x\rangle\langle x | \otimes |x\rangle\langle x |$ for every channel use:
\begin{equation}\label{eq:from-state-disc-SC}
H(r,\cN,\cM)\leq\min_{p_X}\sup_{\alpha>1}\frac{\alpha-1}{\alpha}\left(r-\widetilde{D}_\alpha(\nu_{XB}\|\mu_{XB})\right).
\end{equation}
Thus, in light of the inequalities in 
\eqref{eq:lwr-bnd-sc-exp-rewrite} and \eqref{eq:from-state-disc-SC}, now our aim is to close off the proof by establishing the following equality, which is equivalent to establishing that a minimax exchange is possible:
\begin{equation}\label{eq:minimax-cq}
\sup_{\alpha>1}\min_{p_{X}}\frac{\alpha-1}{\alpha}\left(r-\widetilde{D}_\alpha(\nu_{XB}\Vert\mu_{XB})\right)=\min_{p_{X}}\sup_{\alpha>1}\frac{\alpha-1}{\alpha}\left(r-\widetilde{D}_\alpha(\nu_{XB}\Vert\mu_{XB})\right).
\end{equation}
Strategies for establishing such an equality were given in \cite{Hayashi09,Cooney2016}. Here, we follow 
the proof of \cite[Theorem 2]{Cooney2016}. Let us define the function
\begin{equation}
F(\alpha,p_{X})\coloneqq\left(\alpha-1\right)\widetilde{D}_\alpha(\nu_{XB}\Vert\mu_{XB}).
\end{equation}
Introducing the new variable $u\coloneqq\left(\alpha-1\right)/\alpha$, so that $u\in(0,1)$ for $\alpha>1$, the minimax
statement in \eqref{eq:minimax-cq} is equivalent to the following one:
\begin{equation}\label{eq:minimax-u-world}
\sup_{u\in(0,1)}\min_{p_{X}}f(u,p_{X})=\min_{p_{X}}\sup_{u\in(0,1)}f(u,p_{X}),
\end{equation}
where
\begin{equation}
f(u,p_{X})\coloneqq ur-\widetilde{F}(u,p_{X}),\qquad\widetilde{F}(u,p_{X})\coloneqq(1-u)F\!\left(  \frac{1}{1-u},p_{X}\right).
\end{equation}
By the fact that
\begin{align}
F(\alpha,p_{X})=\log\widetilde{Q}_\alpha(\nu_{XB}\Vert\mu_{XB})=\log\!\left(\sum_{x}p_{X}(x)\widetilde{Q}_\alpha(\nu_{B}^{x}\Vert\mu_{B}^{x})\right),
\end{align}
with $\widetilde{Q}_\alpha(\rho\Vert \sigma)=\tr[(\sigma^{(1-\alpha)/2\alpha}\rho\sigma^{(1-\alpha)/2\alpha})^\alpha]$ the sandwiched R\'enyi relative quasi-entropy, it follows from concavity of the logarithm that the function $p_{X}\mapsto F(\alpha,p_{X})$ is concave, and so then $p_{X}\mapsto f(u,p_{X})$ is convex.
On the other hand, the function $u\mapsto F(u,p_{X})$ is convex by
\cite[Corollary 3.1]{Mosonyi2015}, and by \cite[Lemma 13]{Cooney2016}, it follows that the
function $u\mapsto\widetilde{F}(u,p_{X})$ is also convex. This then means
that the function $u\mapsto f(u,p_{X})$ is concave. From the assumption
that the support condition $\operatorname{supp}(\nu_{B}^{x})\subseteq
\operatorname{supp}(\mu_{B}^{x})$ holds for all $x$, it is clear that the
function $p_{X}\mapsto f(u,p_{X})$ is continuous for all $u\in(0,1)$.
Since the space of probability distributions $p_{X}$ is compact, the
Kneser-Fan minimax theorem \cite{Kneser,Fan} implies \eqref{eq:minimax-u-world}.
\end{proof}

Theorem~\ref{thm:cq-str-conv-exp} above in fact implies the following strong variant of the Stein's lemma for classical-quantum channels.

\begin{corollary}\label{cor:stein-str-conv}
Let $\cN_{X\to B},\cM_{X\to B}\in\cQ(X\to B)$ be classical-quantum channels, as defined in \eqref{eq:cq1} and \eqref{eq:cq2}. Then, for all $\eps\in(0,1)$, we have
\begin{align}
\zeta(\eps,\cN,\cM)\coloneqq\lim_{n\to\infty}\zeta_n(\eps,\cN,\cM)=\max_{x}D(\nu_{B}^{x}\Vert\mu_{B}^{x}). 
\end{align}
\end{corollary}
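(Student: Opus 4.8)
The plan is to derive the corollary from the strong converse exponent of Theorem~\ref{thm:cq-str-conv-exp} together with a matching achievability bound, exploiting the standard implication that a \emph{strictly positive} strong converse exponent at every rate above $\max_x D(\nu^x_B\|\mu^x_B)$ forces the Stein exponent to be independent of $\eps$. For the achievability direction I would first show that $\underline{\zeta}(\eps,\cN,\cM)\geq\max_x D(\nu^x_B\|\mu^x_B)$ for every $\eps\in(0,1)$. Letting $x^\ast$ attain $\max_x D(\nu^x_B\|\mu^x_B)$ and feeding the product input $(|x^\ast\rangle\langle x^\ast|)^{\otimes n}$ into the $n$ channel uses reduces the task to discriminating the product states $(\nu^{x^\ast}_B)^{\otimes n}$ and $(\mu^{x^\ast}_B)^{\otimes n}$, so the ordinary quantum Stein's lemma~\cite{HP91,ON00} yields the achievable type-II exponent $D(\nu^{x^\ast}_B\|\mu^{x^\ast}_B)=\max_x D(\nu^x_B\|\mu^x_B)$ for all $\eps\in(0,1)$. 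This is a non-adaptive product-state strategy and hence, a fortiori, an admissible adaptive protocol.

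For the converse $\overline{\zeta}(\eps,\cN,\cM)\leq\max_x D(\nu^x_B\|\mu^x_B)$ I would argue by contradiction. The key observation is that $H(r,\cN,\cM)>0$ whenever $r>\max_x D(\nu^x_B\|\mu^x_B)$: since $\widetilde{D}_\alpha\to D$ as $\alpha\to1^+$ and the maximum over the finite letter set is continuous, we have $\lim_{\alpha\to1^+}\max_x\widetilde{D}_\alpha(\nu^x_B\|\mu^x_B)=\max_x D(\nu^x_B\|\mu^x_B)$, so for $r$ strictly above this threshold there is some $\alpha>1$ with $r>\max_x\widetilde{D}_\alpha(\nu^x_B\|\mu^x_B)$, making the supremum in Theorem~\ref{thm:cq-str-conv-exp} strictly positive. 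Now suppose some sequence of protocols with $\alpha_n\leq\eps$ achieved $-\frac1n\log\beta_n\to R$ along a subsequence with $R>\max_x D(\nu^x_B\|\mu^x_B)$. Choosing $r$ with $\max_x D(\nu^x_B\|\mu^x_B)<r<R$, we get $\beta_n\leq 2^{-rn}$ for large $n$, so these protocols are feasible in the definition of $H_n(r,\cN,\cM)$, whence $-\frac1n\log(1-\alpha_n)\geq H_n(r,\cN,\cM)\to H(r,\cN,\cM)>0$. This forces $1-\alpha_n\to0$, i.e.\ $\alpha_n\to1>\eps$, contradicting $\alpha_n\leq\eps$. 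Hence $\overline{\zeta}(\eps,\cN,\cM)\leq r$ for every $r>\max_x D(\nu^x_B\|\mu^x_B)$, and letting $r\downarrow\max_x D(\nu^x_B\|\mu^x_B)$ gives the claimed upper bound.

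Combining the two directions yields $\max_x D(\nu^x_B\|\mu^x_B)\leq\underline{\zeta}(\eps,\cN,\cM)\leq\overline{\zeta}(\eps,\cN,\cM)\leq\max_x D(\nu^x_B\|\mu^x_B)$ for every $\eps\in(0,1)$, so the limit $\zeta(\eps,\cN,\cM)$ exists and equals $\max_x D(\nu^x_B\|\mu^x_B)$, establishing the $\eps$-independence that characterizes the strong converse. I expect the main (though modest) technical point to be the continuity statement $\lim_{\alpha\to1^+}\max_x\widetilde{D}_\alpha(\nu^x_B\|\mu^x_B)=\max_x D(\nu^x_B\|\mu^x_B)$ that guarantees positivity of $H(r,\cN,\cM)$ just above the threshold; this rests on the monotonicity and limiting behaviour of the sandwiched R\'enyi divergence in $\alpha$, under the implicit support condition $\supp(\nu^x_B)\subseteq\supp(\mu^x_B)$ that keeps $\max_x D(\nu^x_B\|\mu^x_B)$ finite.
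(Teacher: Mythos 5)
Your proposal is correct and follows essentially the same route as the paper: the converse rests on exactly the paper's observation that monotonicity and continuity of $\alpha\mapsto\max_x\widetilde{D}_\alpha(\nu_B^x\|\mu_B^x)$ at $\alpha\to1^+$ make the strong converse exponent of Theorem~\ref{thm:cq-str-conv-exp} strictly positive for any $r>\max_x D(\nu_B^x\|\mu_B^x)$, forcing the type~I error to one, while achievability is the same product-state strategy plus quantum Stein's lemma already invoked for Theorem~\ref{thm:cqAmortized}. You merely spell out the contradiction argument and the achievability step that the paper leaves implicit.
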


\begin{proof}
In Theorem~\ref{thm:cq-str-conv-exp}, if $r > \max_x D(\nu_{B}^{x}\|\mu_{B}^{x})$, then by the fact that $\max_x\widetilde{D}_\alpha(\nu_{B}^{x}\|\mu_{B}^{x})$ is monotone increasing with $\alpha$ and the continuity
\begin{align}
\lim_{\alpha\to1} \max_x\widetilde{D}_\alpha(\nu_{B}^{x}\|\mu_{B}^{x}) = \max_x D(\nu_{B}^{x}\|\mu_{B}^{x}),
\end{align}
there exists $\alpha>1$ such that $r > \max_x\widetilde{D}_\alpha(\nu_{B}^{x}\|\mu_{B}^{x})$, and so $H(r,\cN,\cM)>0$, implying that the Type~I error probability tends to one exponentially fast.
\end{proof}

\begin{remark}
In contrast to the weak and strong Stein's lemma (Theorem~\ref{thm:cqAmortized} and Corollary~\ref{cor:stein-str-conv}), we cannot conclude that the strong converse exponent in Theorem \ref{thm:cq-str-conv-exp} is achieved by picking the best possible input element $x$, but we instead have to consider distributions over the input alphabet. This is similar to the classical case, and Hayashi in fact gives an explicit example where considering only one input element $x$ is not sufficient~\cite[Section IV]{Hayashi09}. He then  shows that, in the classical case, it suffices to optimize with respect to probability distributions that are strictly positive on just two elements~\cite[Theorem 3]{Hayashi09}.
\end{remark}


\subsection{Error exponent}

In this section, we give bounds on the Hoeffding error exponent for channel discrimination of classical-quantum channels. First, we provide an upper bound in terms of the log-Euclidean R\'enyi divergence, by using a technique related to that used to establish \cite[Equation (16)]{Hayashi09}. Note that, contrary to the classical case, this development does   not generally lead to a tight characterisation. We suspect that this gap can be closed with an improved proof strategy\,---\,which would also have to be novel for the classical case however. Second, we employ the fact that classical-quantum channels are environment-parametrized \cite{TW2016}, as elaborated upon in Section~\ref{sec:env-param}.

\begin{prop}
Let $\mathcal{N}_{X\rightarrow B},\mathcal{M}_{X\rightarrow B}\in\mathcal{Q}(X\rightarrow B)$ be classical-quantum channels, as defined in \eqref{eq:cq1} and \eqref{eq:cq2}.
Then, for $r>0$ we have
\begin{align}
\sup_{\alpha\in (0,1)}\frac{\alpha-1}{\alpha}\left(r-\max_x D_\alpha(\nu^x_B\| \mu^x_B)\right)&\leq\underline{B}(r,\mathcal{N},\mathcal{M})\\
&\leq\overline{B}(r,\mathcal{N},\mathcal{M})\\
&\leq\sup_{\alpha \in (0,1)}\frac{\alpha-1}{\alpha}\left(r-\max_x D^\flat_\alpha(\nu^x_B\| \mu^x_B)\right).
\end{align}
\end{prop}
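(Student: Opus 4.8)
The plan is to prove the two outer inequalities separately; the middle bound $\underline{B}(r,\mathcal{N},\mathcal{M})\leq\overline{B}(r,\mathcal{N},\mathcal{M})$ is immediate from $\liminf\leq\limsup$. For the lower bound I would use a product-state strategy: fix a single input letter $x$, feed $|x\rangle\langle x|$ into each of the $n$ channel uses, and thereby reduce the problem to discriminating the i.i.d.\ states $(\nu_B^x)^{\otimes n}$ versus $(\mu_B^x)^{\otimes n}$. The achievability (direct) part of the quantum Hoeffding bound then produces a non-adaptive test with $\beta_n\leq2^{-nr}$ and type-I error exponent at least $\sup_{\alpha\in(0,1)}\frac{\alpha-1}{\alpha}\big(r-D_\alpha(\nu_B^x\|\mu_B^x)\big)$, and since this is a special case of the adaptive protocols the same bound holds for $\underline{B}(r,\mathcal{N},\mathcal{M})$, for every $x$. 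As $\sup_\alpha$ and $\max_x$ commute, and since for fixed $\alpha\in(0,1)$ the map $t\mapsto\frac{\alpha-1}{\alpha}(r-t)$ is increasing (its prefactor is negative), one has $\max_x\frac{\alpha-1}{\alpha}(r-D_\alpha(\nu_B^x\|\mu_B^x))=\frac{\alpha-1}{\alpha}(r-\max_x D_\alpha(\nu_B^x\|\mu_B^x))$, and maximizing over $x$ reproduces the claimed lower bound exactly.

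The upper bound is the crux, and I would run an intermediate-channel (change-of-measure) argument, which is where the log-Euclidean divergence will enter. Fix any family $\{\kappa_B^x\}_x$ with $D(\kappa_B^x\|\mu_B^x)<r$ for all $x$, and let $\mathcal{K}_{X\to B}$ be the associated classical-quantum channel. For an arbitrary adaptive protocol $\{Q,\mathcal{A}\}$ discriminating $\mathcal{N}$ from $\mathcal{M}$ with $\beta_n\leq2^{-nr}$, let $\rho_{R_nB_n}$, $\kappa_{R_nB_n}$, $\tau_{R_nB_n}$ be the final states produced by this \emph{same} protocol under $\mathcal{N}$, $\mathcal{K}$, $\mathcal{M}$, and set $a_n\coloneqq\tr[(I-Q)\kappa_{R_nB_n}]$ and $b_n\coloneqq\alpha_n=\tr[(I-Q)\rho_{R_nB_n}]$. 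First, reading $\{Q,\mathcal{A}\}$ as a protocol for $\mathcal{K}$ versus $\mathcal{M}$, its type-II error $\tr[Q\tau_{R_nB_n}]=\beta_n$ decays at rate at least $r>\max_x D(\kappa_B^x\|\mu_B^x)$, the latter being the Stein exponent of $\mathcal{K}$ versus $\mathcal{M}$; the strong Stein's lemma for classical-quantum channels (Corollary~\ref{cor:stein-str-conv}) then forces the corresponding type-I error $a_n\to1$. Second, applying the meta-converse (Lemma~\ref{ThmAmortizedBound}) to the pair $\mathcal{K}$ versus $\mathcal{N}$ with final measurement $\{I-Q,Q\}$, together with the amortization collapse of the relative entropy (Lemma~\ref{lem:cqCollapse}), gives
\[
d(a_n\|b_n)\leq n\,D^{\mathcal{A}}(\mathcal{K}\|\mathcal{N})=n\max_x D(\kappa_B^x\|\nu_B^x),
\]
where $d(\cdot\|\cdot)$ is the binary relative entropy. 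Because $a_n\to1$, the term $a_n\log(a_n/b_n)$ dominates $d(a_n\|b_n)$ while the remaining contributions stay $O(1)$, so $-\tfrac1n\log\alpha_n\leq\max_x D(\kappa_B^x\|\nu_B^x)+o(1)$ and hence $\overline{B}(r,\mathcal{N},\mathcal{M})\leq\max_x D(\kappa_B^x\|\nu_B^x)$.

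It then remains to optimize the intermediate. Since the states $\kappa_B^x$ can be chosen independently per letter, the minimization decouples,
\[
\overline{B}(r,\mathcal{N},\mathcal{M})\leq\inf_{\{\kappa_B^x\}:\,D(\kappa_B^x\|\mu_B^x)<r\ \forall x}\ \max_x D(\kappa_B^x\|\nu_B^x)=\max_x\ \inf_{\tau:\,D(\tau\|\mu_B^x)\leq r} D(\tau\|\nu_B^x),
\]
and the inner infimum is exactly the divergence-sphere quantity \eqref{eq:div-sphere-rep}, equal to $\sup_{\alpha\in(0,1)}\frac{\alpha-1}{\alpha}(r-D^\flat_\alpha(\nu_B^x\|\mu_B^x))$. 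Pulling the commuting $\max_x$ through $\sup_\alpha$ as in the lower-bound step then yields precisely $\sup_{\alpha\in(0,1)}\frac{\alpha-1}{\alpha}(r-\max_x D^\flat_\alpha(\nu_B^x\|\mu_B^x))$, which completes the converse and also explains the gap: the same steps with $D_\alpha$ in place of $D^\flat_\alpha$ would match the lower bound, but the sphere minimization only produces the larger log-Euclidean quantity.

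I expect the delicate points to be concentrated in the converse. One must run the strong-converse step along a subsequence achieving $\overline{B}$ and invoke Corollary~\ref{cor:stein-str-conv} for \emph{every} $\varepsilon<1$ to conclude $a_n\to1$ (not merely $a_n$ bounded away from $0$); one must reconcile the strict constraint $D(\kappa_B^x\|\mu_B^x)<r$ (needed to trigger the strong converse) with the non-strict sphere in \eqref{eq:div-sphere-rep} via left-continuity in $r$, and restrict to the range of $r$ for which \eqref{eq:div-sphere-rep} is stated; and one must verify that $a_n\to1$ genuinely isolates $-\log\alpha_n$ from $d(a_n\|b_n)$ up to an $O(1)$ error. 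The conceptual heart, and the step I would expect to be the main obstacle, is recognizing that the intermediate-channel/strong-converse bound is precisely a relative-entropy minimization over a divergence sphere, so that \eqref{eq:div-sphere-rep} converts it into the log-Euclidean form $D^\flat_\alpha$ — forcing the asymmetry between the Petz-type lower bound and the log-Euclidean upper bound.
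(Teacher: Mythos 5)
Your proposal is correct and follows essentially the same route as the paper's proof: the lower bound via a best-single-letter, product-state strategy together with the Hoeffding achievability bound for state discrimination, and the upper bound via an intermediate classical-quantum channel, the strong Stein's lemma (Corollary~\ref{cor:stein-str-conv}) to force the type-I error of the $(\mathcal{K},\mathcal{M})$ test to one, and the divergence-sphere identity \eqref{eq:div-sphere-rep} to convert the resulting relative-entropy-ball minimization $\max_x \min_{\tau:D(\tau\Vert\mu_B^x)\leq r} D(\tau\Vert\nu_B^x)$ into the log-Euclidean form. The only substantive difference is in the second converse step: the paper invokes Corollary~\ref{cor:stein-str-conv} a second time, for the pair $(\mathcal{T},\mathcal{N})$ whose type-I error now tends to zero, whereas you re-derive that bound directly from the meta-converse (Lemma~\ref{ThmAmortizedBound}) combined with the relative-entropy amortization collapse (Lemma~\ref{lem:cqCollapse}) and an explicit binary-divergence estimate; these are equivalent here, and your version makes transparent that only a weak-converse-type bound is needed once the type-I error vanishes. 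Your handling of the technical caveats also matches or slightly sharpens the paper's: you correctly insist on the full limit $a_n\to 1$ (the paper states this as a $\limsup$, though the full limit is what its argument uses), and your strict-constraint/left-continuity remark plays the same role as the paper's $r-\delta$ constraint followed by $\delta\to 0$ and continuity of the relative entropy in its first argument.
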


\begin{proof}
The lower bound follows from employing a non-adaptive strategy, in which the letter~$x$ optimizing the expression on the left-hand side is sent in to every channel use and then the Hoeffding bound for state discrimination \cite{hayashi2007error} is invoked.

The upper bound follows from reasoning similar to that given for the loose upper bound on
the Hoeffding exponent for the case of quantum states \cite[Exercise 3.15]{H06}, as well as the proof of
the Hoeffding bound in \cite[Eq.~(16)]{Hayashi09}. Fix $\delta>0$. Let
$\mathcal{T}_{X\rightarrow B}$ be a classical-quantum channel
\begin{equation}
\mathcal{T}_{X\rightarrow B}(\cdot)=\sum_{x}\langle x|\cdot|x\rangle\tau
_{B}^{x},
\end{equation}
such that for all $x$
\begin{equation}
\tau_{B}^{x}\coloneqq\underset{\tau_{B}:D(\tau_{B}\Vert\mu_{B}^{x})\leq
r-\delta}{\operatorname{argmin}}D(\tau_{B}\Vert\nu_{B}^{x}).
\end{equation}
By construction, it follows that $r>\max_{x}D(\tau_{B}^{x}\Vert\mu_{B}^{x})$.
Let $\{Q^{(n)},\mathcal{A}^{(n)}\}$ denote a sequence of channel
discrimination strategies for the classical--quantum channels $\mathcal{T}_{X\rightarrow B}$
and $\mathcal{M}_{X\rightarrow B}$, and let us denote the associated Type I
and II\ error probabilities by
\begin{equation}
\alpha_{n}^{\mathcal{T}\Vert\mathcal{M}}(\{Q^{(n)},\mathcal{A}^{(n)}\}),\qquad\beta_{n}^{\mathcal{T}\Vert\mathcal{M}}(\{Q^{(n)},\mathcal{A}^{(n)}\}).
\end{equation}
By Corollary~\ref{cor:stein-str-conv}, the strong converse of Stein's lemma for the classical--quantum channels $\mathcal{T}
_{X\rightarrow B}$ and $\mathcal{M}_{X\rightarrow B}$, if $\left\{
Q^{(n)},\mathcal{A}^{(n)}\right\}  $ is a sequence of channel discrimination
strategies for these channels such that
\begin{equation}
\limsup_{n\rightarrow\infty}-\frac{1}{n}\log\beta_{n}^{\mathcal{T}\Vert\mathcal{M}}(\{Q^{(n)},\mathcal{A}^{(n)}\})=r,
\end{equation}
then necessarily, we have that
\begin{equation}\label{eq:in-between-channel-error}
\limsup_{n\rightarrow\infty}\alpha_{n}^{\mathcal{T}\Vert\mathcal{M}}(\{Q^{(n)},\mathcal{A}^{(n)}\})=1.
\end{equation}
However, this implies that $\{I-Q^{(n)},\mathcal{A}^{(n)}\}$ can be used as a
channel discrimination strategy for the channels $\mathcal{T}_{X\rightarrow
B}$ and $\mathcal{N}_{X\rightarrow B}$, and let us denote the associated Type
I\ and II\ error probabilities by
\begin{equation}
\alpha_{n}^{\mathcal{T}\Vert\mathcal{N}}(\{I-Q^{(n)},\mathcal{A}^{(n)}\}),\qquad\beta_{n}^{\mathcal{T}\Vert\mathcal{N}}(\{I-Q^{(n)},\mathcal{A}^{(n)}\}).
\end{equation}
By applying \eqref{eq:in-between-channel-error}, we conclude that
\begin{equation}
\limsup_{n\rightarrow\infty}\alpha_{n}^{\mathcal{T}\Vert\mathcal{N}}(\{I-Q^{(n)},\mathcal{A}^{(n)}\})=0,
\end{equation}
and by again invoking Corollary~\ref{cor:stein-str-conv}, the strong converse for Stein's lemma for classical--quantum channels, it is necessary
that
\begin{align}
\limsup_{n\rightarrow\infty}-\frac{1}{n}\log\beta_{n}^{\mathcal{T}\Vert\mathcal{N}}(\{I-Q^{(n)},\mathcal{A}^{(n)}\}) &  \leq\max_{x}D(\tau_{B}^{x}\Vert\nu_{B}^{x})\\
&=\max_{x}\min_{\tau_{B}:D(\tau_{B}\Vert\mu_{B}^{x})\leq r-\delta}D(\tau_{B}\Vert\nu_{B}^{x}).
\end{align}
By this line of reasoning, i.e., chaining together the asymptotic limitations coming from Stein's lemma for classical--quantum channels, we conclude that for any sequence of channel discrimination strategies $\{Q^{(n)},\mathcal{A}^{(n)}\}$ for the classical--quantum channels $\mathcal{N}_{X\rightarrow B}$
and $\mathcal{M}_{X\rightarrow B}$ such that
\begin{equation}
\limsup_{n\rightarrow\infty}-\frac{1}{n}\log\beta_{n}^{\mathcal{N}
\Vert\mathcal{M}}(\{Q^{(n)},\mathcal{A}^{(n)}\}) = r,
\end{equation}
it is necessary that
\begin{equation}
\limsup_{n\rightarrow\infty}-\frac{1}{n}\log\alpha_{n}^{\mathcal{N}
\Vert\mathcal{M}}(\{Q^{(n)},\mathcal{A}^{(n)}\}) \leq  \max_{x}
\min_{\tau_{B}:D(\tau_{B}\Vert\nu_{B}^{x})\leq r-\delta}D(\tau_{B}
\Vert\mu_{B}^{x}).
\end{equation}
Thus, we find the following bound holding for an
arbitrary classical--quantum channel $\mathcal{T}_{X\rightarrow B}$ for which $r>\max
_{x}D(\tau_{B}^{x}\Vert\nu_{B}^{x})$:
\begin{equation}
\overline{B}(r,\mathcal{N},\mathcal{M})\leq\max_{x}\min_{\tau_{B}:D(\tau_{B}\Vert\nu_{B}^{x})\leq r-\delta}D(\tau_{B}\Vert\mu_{B}^{x}).
\end{equation}
Since $\delta>0$ is arbitrary, we can employ the facts that the
quantum relative entropy is continuous in its first argument and the
finiteness of the alphabet for the classical-quantum channels to arrive at the following bound:
\begin{equation}
\overline{B}(r,\mathcal{N},\mathcal{M})\leq\max_{x}\min_{\tau_{B}:D(\tau_{B}\Vert\mu_{B}^{x})\leq r}D(\tau_{B}\Vert\nu_{B}^{x}).
\end{equation}
By applying the divergence-sphere optimization from \eqref{eq:div-sphere-rep}, we arrive at the bound in the statement of the proposition.
\end{proof}

Any two classical-quantum channels $\mathcal{N}_{X\rightarrow B}$ and $\mathcal{M}_{X\rightarrow B}$, as defined in \eqref{eq:cq1} and \eqref{eq:cq2}, can be understood as being environment-parametrized (see Section~\ref{sec:env-param}), in the following sense:
\begin{align}
\mathcal{N}_{X\rightarrow B}(\rho) &  =\mathcal{P}\left(  \rho\otimes
\nu_{B_{1}\cdots B_{\left\vert \mathcal{X}\right\vert }}^{\text{all}}\right)
,\\
\mathcal{M}_{X\rightarrow B}(\rho) &  =\mathcal{P}\left(  \rho\otimes
\mu_{B_{1}\cdots B_{\left\vert \mathcal{X}\right\vert }}^{\text{all}}\right)
,\\
\nu_{B_{1}\cdots B_{\left\vert \mathcal{X}\right\vert }}^{\text{all}} &
\coloneqq\bigotimes\limits_{x}\nu_{B_{x}}^{x},\label{eq:nu-all}\\
\mu_{B_{1}\cdots B_{\left\vert \mathcal{X}\right\vert }}^{\text{all}} &
\coloneqq\bigotimes\limits_{x}\mu_{B}^{x},\label{eq:mu-all}
\end{align}
where $\mathcal{P}$ is a common interaction channel that measures the state $\rho$ in the basis
$\{|x\rangle\langle x|\}_{x}$, and if the outcome $x$ is obtained, it outputs
either the state $\nu_{B}^{x}$ or $\mu_{B}^{x}$ by tracing out all systems
except for the $x$th one containing this state.\ The key observation here is
that the channel $\mathcal{P}$ acts in the same way for both $\mathcal{N}
_{X\rightarrow B}$ and $\mathcal{M}_{X\rightarrow B}$, and the only way in
which these channels differ is that the environment states $\nu_{B_{1}\cdots
B_{\left\vert \mathcal{X}\right\vert }}^{\text{all}}$ and $\mu_{B_{1}\cdots
B_{\left\vert \mathcal{X}\right\vert }}^{\text{all}}$ are potentially
different. As such, and as discussed further in Section~\ref{sec:env-param}, any adaptive channel discrimination protocol for
distinguishing $n$ uses of the channel $\mathcal{N}_{X\rightarrow B}$ from the
channel $\mathcal{M}_{X\rightarrow B}$ can be understood as a parallel state
discrimination protocol acting on either
\begin{align}
\text{ the state $\left(\nu_{B_{1}\cdots B_{\left\vert \mathcal{X}\right\vert }}^{\text{all}}\right)^{\otimes n}$ or
the state $\left(\mu_{B_{1}\cdots B_{\left\vert \mathcal{X}\right\vert }}^{\text{all}}\right)^{\otimes n}$.}
\end{align}
As such, any converse bound for state discrimination applies, including that for the Hoeffding bound, leading us to
the following from a direct application of the converse Hoeffding bound for states and the identity $D_\alpha(\nu_{B_{1}\cdots B_{\left\vert \mathcal{X}\right\vert }}^{\text{all}}\| \mu_{B_{1}\cdots B_{\left\vert \mathcal{X}\right\vert }}^{\text{all}}) = \sum_{x\in\mathcal{X}}D_{s}(\nu_{B}^{x}\Vert\mu_{B}^{x})$. Applying Proposition~\ref{thm:env-param-bnds}, we find the following.

\begin{prop}
Let $\mathcal{N}_{X\rightarrow B},\mathcal{M}_{X\rightarrow B}\in
\mathcal{Q}(X\rightarrow B)$ be classical-quantum channels, as defined in \eqref{eq:cq1} and \eqref{eq:cq2}.
Then, for $r>0$ we have that
\begin{equation}
\overline{B}(r,\mathcal{N},\mathcal{M})\leq\sup_{\alpha\in(0,1)}\frac{\alpha-1}{\alpha}\left(r-\sum_{x\in\mathcal{X}}D_\alpha(\nu_{B}^{x}\Vert\mu_{B}^{x})\right).
\end{equation}
\end{prop}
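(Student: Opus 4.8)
The plan is to exploit the fact, recorded in \eqref{eq:nu-all}--\eqref{eq:mu-all}, that any pair of classical--quantum channels $\mathcal{N}_{X\to B}$ and $\mathcal{M}_{X\to B}$ is environment-parametrized: both channels are realized by the \emph{same} interaction channel $\mathcal{P}$ acting on the input together with a fixed environment state, and they differ only through the choice of that environment state, namely $\nu^{\text{all}}_{B_1\cdots B_{|\mathcal{X}|}}=\bigotimes_x \nu^x_{B_x}$ versus $\mu^{\text{all}}_{B_1\cdots B_{|\mathcal{X}|}}=\bigotimes_x \mu^x_{B_x}$. The key structural consequence, established for environment-parametrized channels in Section~\ref{sec:env-param} (Proposition~\ref{thm:env-param-bnds}), is that any adaptive strategy for discriminating $n$ uses of $\mathcal{N}_{X\to B}$ from $n$ uses of $\mathcal{M}_{X\to B}$ can be simulated by a parallel state-discrimination protocol acting on either $(\nu^{\text{all}})^{\otimes n}$ or $(\mu^{\text{all}})^{\otimes n}$, with the same type~I and type~II error probabilities. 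In particular, no adaptive channel strategy can outperform the best measurement distinguishing these two tensor-power states.

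With this reduction in hand, the second step is to invoke the converse part of the quantum Hoeffding bound for state discrimination \cite{hayashi2007error}. Applied to the environment states $\nu^{\text{all}}$ and $\mu^{\text{all}}$ under the constraint $\beta_n\leq 2^{-rn}$, it yields the bound $\overline{B}(r,\mathcal{N},\mathcal{M})\leq \sup_{\alpha\in(0,1)}\frac{\alpha-1}{\alpha}\!\left(r-D_\alpha(\nu^{\text{all}}\Vert\mu^{\text{all}})\right)$ on the asymptotic Hoeffding error exponent. The final step is purely algebraic: since the Petz--R\'enyi divergence is additive over tensor products, $D_\alpha(\nu^{\text{all}}\Vert\mu^{\text{all}})=\sum_{x\in\mathcal{X}}D_\alpha(\nu^x_B\Vert\mu^x_B)$, and substituting this into the previous inequality gives the claimed bound.

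I expect the conceptual crux to lie entirely in the first step, i.e., in the reduction from adaptive channel discrimination to parallel state discrimination for environment-parametrized channels (Proposition~\ref{thm:env-param-bnds}); this is where one must argue that a common, input-independent interaction channel $\mathcal{P}$ together with a fixed environment leaves the discriminator with no way to extract more information from the $n$ calls than is already present in $n$ copies of the environment state. Once that reduction is granted, the remaining ingredients are standard: the converse Hoeffding bound for states is already known, and the additivity of $D_\alpha$ over tensor products is immediate. The only routine point to track is that the support condition $\operatorname{supp}(\nu^x_B)\subseteq\operatorname{supp}(\mu^x_B)$ keeps the relevant $D_\alpha$ finite in the regime where the bound is non-trivial, but this does not alter the inequality as stated.
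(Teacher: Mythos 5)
Your proposal is correct and follows essentially the same route as the paper: the paper likewise observes that classical-quantum channels are environment-parametrized with environment states $\nu^{\text{all}}_{B_1\cdots B_{|\mathcal{X}|}}$ and $\mu^{\text{all}}_{B_1\cdots B_{|\mathcal{X}|}}$, reduces any adaptive protocol to parallel state discrimination of $(\nu^{\text{all}})^{\otimes n}$ versus $(\mu^{\text{all}})^{\otimes n}$ via Proposition~\ref{thm:env-param-bnds}, applies the converse Hoeffding bound for states, and concludes with the additivity identity $D_\alpha(\nu^{\text{all}}\Vert\mu^{\text{all}})=\sum_{x\in\mathcal{X}}D_\alpha(\nu^x_B\Vert\mu^x_B)$. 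The only cosmetic difference is attribution of the state-discrimination converse (due to Nagaoka rather than the achievability reference), which does not affect the argument.
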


The above bound can be close to the lower bound in terms of $\max_x D_\alpha(\nu_{B}^{x}\Vert\mu_{B}^{x})$ in the case that there are states $\nu^x_B$ and $\mu^x_B$ (corresponding to the same letter $x$) that are highly distinguishable, but all of the other pairs $\nu^{x'}_B$ and $\mu^{x'}_B$ for $x'\neq x$ are not so distinguishable.


\subsection{Chernoff bound}

In this section, we investigate the Chernoff bound for classical-quantum channels. In contrast to the corresponding classical results \cite{Hayashi09}, we see below that our upper bound does not match the product-state lower bound, but the difference turns out to be less than a factor of two. We suspect that this gap can be closed with an improved proof strategy. We can also apply the observation from the previous section to arrive at another upper bound that is tighter in some cases.

\begin{prop}
Let $\cN_{A\to B},\cM_{X\to B}\in\cQ(A\to B)$ be classical-quantum channels, as defined in Eqs.~\eqref{eq:cq1} and \eqref{eq:cq2}. Then, we have
\begin{align}
\max_x C(\nu^x_B\| \mu^x_B)&\leq\underline{\xi}(\cN,\cM)\\
&\leq\overline{\xi}(\cN,\cM)\\
&\leq\min\left\{\max_x\widetilde{D}_{1/2}(\nu_{B}^{x}\|\mu_{B}^{x}), \ C(\nu_{B_{1}\cdots B_{\left\vert \mathcal{X}\right\vert }}^{\operatorname{all}}\|\mu_{B_{1}\cdots B_{\left\vert \mathcal{X}\right\vert }}^{\operatorname{all}}),\ 
\max_x C^\flat_\alpha(\nu^x_B\| \mu^x_B)\right\},
\end{align}
where the states $\nu_{B_{1}\cdots B_{\left\vert \mathcal{X}\right\vert }}^{\operatorname{all}}$ and $\mu_{B_{1}\cdots
B_{\left\vert \mathcal{X}\right\vert }}^{\operatorname{all}}$ are defined in \eqref{eq:nu-all}--\eqref{eq:mu-all} and $C^\flat$ is defined in \eqref{eq:Chernoff-div-expr-flat-renyi}.
\end{prop}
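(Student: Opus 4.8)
The plan is to establish the lower bound, the trivial middle inequality, and the three terms of the upper-bound minimum separately, since each draws on a distinct tool already in place. For the lower bound $\max_x C(\nu^x_B\|\mu^x_B)\leq\underline{\xi}(\cN,\cM)$, I would use a non-adaptive product-state strategy: letting $x^\ast$ attain $\max_x C(\nu^x_B\|\mu^x_B)$ and feeding $n$ copies of $|x^\ast\rangle\langle x^\ast|$ into the channels produces either $(\nu^{x^\ast}_B)^{\otimes n}$ or $(\mu^{x^\ast}_B)^{\otimes n}$, whose symmetric error exponent is $C(\nu^{x^\ast}_B\|\mu^{x^\ast}_B)$ by the quantum Chernoff bound for state discrimination~\cite{ACMBMAV07,Audenaert2008,NS09}. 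As this is an achievable rate for every $n$, it lower bounds $\underline{\xi}$, and the middle inequality $\underline{\xi}\leq\overline{\xi}$ is immediate from $\liminf\leq\limsup$.

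Two of the three upper bounds follow quickly. The bound $\overline{\xi}\leq\max_x\widetilde{D}_{1/2}(\nu^x_B\|\mu^x_B)$ is obtained by combining Corollary~\ref{cor:chernoff}, which gives $\overline{\xi}(\cN,\cM)\leq\widetilde{D}^{\mathcal{A}}_{1/2}(\cN\|\cM)$, with the amortization collapse of Lemma~\ref{lem:RenyiCollapse} at $\alpha=1/2$, which yields $\widetilde{D}^{\mathcal{A}}_{1/2}(\cN\|\cM)=\max_x\widetilde{D}_{1/2}(\nu^x_B\|\mu^x_B)$. The bound $\overline{\xi}\leq C(\nu^{\mathrm{all}}\|\mu^{\mathrm{all}})$ uses the environment-parametrized representation recalled just above: by Proposition~\ref{thm:env-param-bnds}, any adaptive discrimination of $n$ uses of $\cN$ versus $\cM$ reduces to a parallel state discrimination of $(\nu^{\mathrm{all}})^{\otimes n}$ versus $(\mu^{\mathrm{all}})^{\otimes n}$, with $\nu^{\mathrm{all}}$ and $\mu^{\mathrm{all}}$ as in \eqref{eq:nu-all}--\eqref{eq:mu-all}; the converse direction of the quantum Chernoff bound for states then caps the exponent at $C(\nu^{\mathrm{all}}\|\mu^{\mathrm{all}})$.

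For the third bound $\overline{\xi}\leq\max_x C^\flat(\nu^x_B\|\mu^x_B)$, I would route through the Hoeffding-to-Chernoff relation \eqref{eq:Hoeff-to-Chern}, namely $\overline{\xi}=\sup\{r:\overline{B}(r,\cN,\cM)\geq r\}$. The key input is the per-letter form of the Hoeffding upper bound established above, $\overline{B}(r)\leq\max_x\min_{\tau_B:D(\tau_B\|\mu^x_B)\leq r}D(\tau_B\|\nu^x_B)$, taken \emph{before} the divergence-sphere substitution \eqref{eq:div-sphere-rep} is performed. Then $\overline{B}(r)\geq r$ forces the existence of a letter $x$ and an exponent $\alpha\in(0,1)$ with $\tfrac{\alpha-1}{\alpha}(r-D^\flat_\alpha(\nu^x_B\|\mu^x_B))\geq r$, which rearranges to $r\leq(1-\alpha)D^\flat_\alpha(\nu^x_B\|\mu^x_B)$; taking the supremum over admissible $r$ and recognizing \eqref{eq:Chernoff-div-expr-flat-renyi} gives exactly $\max_x C^\flat(\nu^x_B\|\mu^x_B)$. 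The step I would treat most carefully, and the main obstacle, is the quantifier ordering: because the chaining argument behind the Hoeffding bound produces the maximization over the input letter $x$ \emph{outside} the optimization over $\alpha$ rather than inside the $\alpha$-supremum, the symmetric ``$\exists x,\exists\alpha$'' structure of the condition $\overline{B}(r)\geq r$ collapses cleanly to $\max_x\sup_\alpha$, delivering the tight $\max_x C^\flat$ in place of the weaker $\sup_\alpha(1-\alpha)\max_x D^\flat_\alpha$. Finally, the stated upper bound is the minimum of the three, since each argument independently bounds the same quantity $\overline{\xi}$, and I would close by assembling these into the displayed chain of inequalities.
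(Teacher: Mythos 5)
Your proposal is correct and follows essentially the same route as the paper's proof: product-input achievability via the quantum Chernoff bound for states for the lower bound, the first upper bound via Corollary~\ref{cor:chernoff} combined with Lemma~\ref{lem:RenyiCollapse}, the second via the environment-parametrized reduction of Proposition~\ref{thm:env-param-bnds}, and the third via the Hoeffding upper bound together with \eqref{eq:Hoeff-to-Chern}, which is precisely the analysis the paper invokes from \cite[Corollary 2]{Hayashi09}. The only remark is that the ``quantifier ordering'' you single out as the main obstacle is a non-issue: since $1-\alpha>0$ for $\alpha\in(0,1)$ and suprema always commute, we have $\sup_{\alpha\in(0,1)}(1-\alpha)\max_x D^\flat_\alpha(\nu^x_B\|\mu^x_B)=\max_x\sup_{\alpha\in(0,1)}(1-\alpha)D^\flat_\alpha(\nu^x_B\|\mu^x_B)=\max_x C^\flat(\nu^x_B\|\mu^x_B)$, so the two expressions you contrast coincide and either form of the Hoeffding bound yields the stated term.
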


\begin{proof}
The lower bound follows from sending in the letter $x$ that maximizes $\max_x C(\nu^x_B\| \mu^x_B)$ to every channel use and applying the Chernoff bound for quantum state discrimination \cite{NS09,ACMBMAV07}.
The upper bound of $\max_x\widetilde{D}_{1/2}(\nu_{B}^{x}\|\mu_{B}^{x})$ follows directly from Corollary~\ref{cor:chernoff} and the collapse of the amortized sandwiched R\'enyi divergence in Lemma \ref{lem:RenyiCollapse}. The second upper bound follows from the discussion surrounding  \eqref{eq:nu-all}--\eqref{eq:mu-all} and Proposition~\ref{thm:env-param-bnds}. The third upper bound follows from the Hoeffding exponent from the previous section, the equality in \eqref{eq:Hoeff-to-Chern}, and the same analysis as in the proof of \cite[Corollary 2]{Hayashi09}.
\end{proof}

Together with the lower bound achieved by employing a tensor-power input, non-adaptive discrimination strategy, this limits the Chernoff bound for discrimination of classical-quantum channels to
\begin{align}
C(\mathcal{N}\Vert\mathcal{M})\leq \underline{\xi}(\cN,\cM)\leq\overline{\xi}(\cN,\cM)\leq\widetilde{D}_{1/2}(\mathcal{N}\Vert\mathcal{M}).
\end{align}
Note that the upper bound is within a factor of two of the lower bound, due to the following inequalities:
\begin{equation}
\widetilde{D}_{1/2}(\mathcal{N}\Vert\mathcal{M})\leq D_{1/2}(\mathcal{N}\Vert\mathcal{M})\leq2\cdot C(\mathcal{N}\Vert\mathcal{M}).
\end{equation}
This establishes that  perfect distinguishability is possible
for classical-quantum channels if and only if there is at least one letter $x$ such that the channel outputs $\nu^x_B$ and $\mu^x_B$ can be made orthogonal in the Hilbert-Schmidt inner product. Notice that already for entanglement-breaking channels, this is not the case~\cite{Harrow10}.


\section{Other single-letter examples}\label{sec:examples}

In Section~\ref{sec:cq}, we generalized some of the classical results from \cite{Hayashi09} to the classical-quantum case, and we might hope that the same also works in the fully quantum case. One might even conjecture that the various channel divergences without amortization characterize the asymptotic error exponents of interest (as it happens classically). This is known not to hold for the Chernoff setting~\cite{Duan09,Harrow10}, but interestingly for the Stein setting, there are no counterexamples to this conjecture of which we are aware. In the following, we collect some more examples with single-letter characterisations.


\subsection{Zero-error examples}

For testing between two isometries or unitaries, it is known that there exists some finite~$n$ for which they can be perfectly distinguished. Moreover, it turns out that in this case the reference system $R$ can be chosen trivial, and entangled input states are not needed~\cite{A01,Duan07,Duan09}. For testing between two projective measurements, the exact same conclusions can be drawn~\cite{Duan06}. Hence, for those settings, all the asymptotic quantities introduced in Section~\ref{sec:discrimination} become trivial.


\subsection{Environment-parametrized channels}\label{sec:env-param}

In this section, we consider environment-parametrized channels acting as follows \cite{TW2016}:
\begin{align}
\mathcal{N}_{A\rightarrow B}(\rho_{A}) &  =\mathcal{P}_{AE\rightarrow B}
(\rho_{A}\otimes\theta_{E}^{\mathcal{N}}),\label{eq:environ-param-1}\\
\mathcal{M}_{A\rightarrow B}(\rho_{A}) &  =\mathcal{P}_{AE\rightarrow B}
(\rho_{A}\otimes\theta_{E}^{\mathcal{M}}),\label{eq:environ-param-2}
\end{align}
where $\mathcal{P}_{AE\rightarrow B}$ is a fixed channel and $\theta_{E}^{\mathcal{N}}$ and $\theta_{E}^{\mathcal{M}}$ are environment states. This notion is related  to the notion of programmable channels as considered in quantum computation~\cite{NC97,DP05} and to the notion of jointly teleportation-simulable channels as considered in channel discrimination~\cite{PhysRevLett.118.100502}. The next two propositions are related to the developments in~\cite{TW2016}, indicating that the distinguishability of these channels is limited by the distinguishability of the underlying environment states.

\begin{prop}\label{prop:environ-collapse}
Let $\cN_{A\rightarrow B},\cM_{A\rightarrow B}\in\cQ(A\rightarrow B)$ be environment-parametrized channels, as defined in \eqref{eq:environ-param-1}--\eqref{eq:environ-param-2}. Then for any generalized divergence satisfying sub-additivity, the following bound holds
\begin{equation}
\mathbf{D}^{\mathcal{A}}(\mathcal{N}\Vert\mathcal{M})\leq\mathbf{D}(\theta
_{E}^{\mathcal{N}}\Vert\theta_{E}^{\mathcal{M}}).
\end{equation}
\end{prop}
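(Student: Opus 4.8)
The plan is to fix arbitrary input states $\rho_{RA}$ and $\sigma_{RA}$ and bound the amortized difference
\[
\mathbf{D}(\mathcal{N}_{A\to B}(\rho_{RA})\Vert\mathcal{M}_{A\to B}(\sigma_{RA}))-\mathbf{D}(\rho_{RA}\Vert\sigma_{RA})
\]
term by term, arranging things so that the initial-state divergence $\mathbf{D}(\rho_{RA}\Vert\sigma_{RA})$ exactly cancels, leaving only $\mathbf{D}(\theta_E^{\mathcal{N}}\Vert\theta_E^{\mathcal{M}})$. First I would rewrite the channel outputs using the environment-parametrized form in \eqref{eq:environ-param-1}--\eqref{eq:environ-param-2}, so that $\mathcal{N}_{A\to B}(\rho_{RA})=\mathcal{P}_{AE\to B}(\rho_{RA}\otimes\theta_E^{\mathcal{N}})$ and $\mathcal{M}_{A\to B}(\sigma_{RA})=\mathcal{P}_{AE\to B}(\sigma_{RA}\otimes\theta_E^{\mathcal{M}})$, emphasising that the interaction channel $\mathcal{P}_{AE\to B}$ is identical in both cases and only the environment states differ.

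Next I would apply the data-processing inequality \eqref{eq:data-processing} with respect to this common channel $\mathcal{P}_{AE\to B}$ (acting trivially on the reference $R$), which gives
\[
\mathbf{D}(\mathcal{N}_{A\to B}(\rho_{RA})\Vert\mathcal{M}_{A\to B}(\sigma_{RA}))\leq\mathbf{D}(\rho_{RA}\otimes\theta_E^{\mathcal{N}}\Vert\sigma_{RA}\otimes\theta_E^{\mathcal{M}}).
\]
I would then invoke the assumed sub-additivity of the generalized divergence with respect to tensor-product states, taking the two tensor factors to be the $RA$ system and the environment $E$, to obtain
\[
\mathbf{D}(\rho_{RA}\otimes\theta_E^{\mathcal{N}}\Vert\sigma_{RA}\otimes\theta_E^{\mathcal{M}})\leq\mathbf{D}(\rho_{RA}\Vert\sigma_{RA})+\mathbf{D}(\theta_E^{\mathcal{N}}\Vert\theta_E^{\mathcal{M}}).
\]
Chaining the two displays and subtracting $\mathbf{D}(\rho_{RA}\Vert\sigma_{RA})$ from both sides yields
\[
\mathbf{D}(\mathcal{N}_{A\to B}(\rho_{RA})\Vert\mathcal{M}_{A\to B}(\sigma_{RA}))-\mathbf{D}(\rho_{RA}\Vert\sigma_{RA})\leq\mathbf{D}(\theta_E^{\mathcal{N}}\Vert\theta_E^{\mathcal{M}}).
\]
Since the right-hand side is independent of the chosen inputs, taking the supremum over all $\rho_{RA}$ and $\sigma_{RA}$ in the definition of $\mathbf{D}_A(\mathcal{N}\Vert\mathcal{M})$ closes the argument.

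There is no serious obstacle here: the statement is essentially a two-line consequence of data processing followed by sub-additivity, and the only structurally important feature is that $\mathcal{P}_{AE\to B}$ is shared between $\mathcal{N}$ and $\mathcal{M}$, so that after the pre-processing step the two states differ solely through the environment. The one point requiring a little care is to confirm that sub-additivity is applied along the correct factorization ($RA$ versus $E$) and that data processing is legitimately invoked on the joint state $\rho_{RA}\otimes\theta_E^{\mathcal{N}}$ with the reference $R$ passive; both are immediate from the axioms on generalized divergences recalled in Section~\ref{sec:technical-tools}.
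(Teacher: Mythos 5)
Your proposal is correct and follows exactly the paper's own argument: rewrite the channel outputs via the common interaction channel $\mathcal{P}_{AE\rightarrow B}$, apply data processing to reduce to the pre-interaction states, invoke sub-additivity to split off $\mathbf{D}(\theta_{E}^{\mathcal{N}}\Vert\theta_{E}^{\mathcal{M}})$, cancel $\mathbf{D}(\rho_{RA}\Vert\sigma_{RA})$, and take the supremum over inputs. No gaps; this matches the paper's proof step for step.
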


\begin{proof}
Consider fixed states $\rho_{RA}$ and $\sigma_{RA}$. We then have that
\begin{align}
&  \mathbf{D}(\mathcal{N}_{A\rightarrow B}(\rho_{RA})\Vert\mathcal{M}
_{A\rightarrow B}(\sigma_{RA}))-\mathbf{D}(\rho_{RA}\Vert\sigma_{RA}
)\nonumber\\
&  =\mathbf{D}(\mathcal{P}_{AE\rightarrow B}(\rho_{RA}\otimes\theta
_{E}^{\mathcal{N}})\Vert\mathcal{P}_{AE\rightarrow B}(\sigma_{RA}\otimes
\theta_{E}^{\mathcal{M}}))-\mathbf{D}(\rho_{RA}\Vert\sigma_{RA})\\
&  \leq\mathbf{D}(\rho_{RA}\otimes\theta_{E}^{\mathcal{N}}\Vert\sigma
_{RA}\otimes\theta_{E}^{\mathcal{M}})-\mathbf{D}(\rho_{RA}\Vert\sigma_{RA})\\
&  \leq\mathbf{D}(\rho_{RA}\Vert\sigma_{RA})+\mathbf{D}(\theta_{E}
^{\mathcal{N}}\Vert\theta_{E}^{\mathcal{M}})-\mathbf{D}(\rho_{RA}\Vert
\sigma_{RA})\\
&  =\mathbf{D}(\theta_{E}^{\mathcal{N}}\Vert\theta_{E}^{\mathcal{M}}).
\end{align}
Since the bound holds for all states $\rho_{RA}$ and $\sigma_{RA}$, this concludes the proof.
\end{proof}

\begin{prop}\label{thm:env-param-simple-bounds-div}
Let $\cN_{A\rightarrow B},\cM_{A\rightarrow B}\in\cQ(A\rightarrow B)$ be environment-parametrized channels, as defined in \eqref{eq:environ-param-1}--\eqref{eq:environ-param-2}. Then the following bounds hold for all $\varepsilon\in(0,1)$ and $\alpha>1$:
\begin{align}
\zeta_{n}(\varepsilon,\cN,\cM)  & \leq\frac{1}{1-\varepsilon}\left(D(\theta_{E}^{\mathcal{N}}\Vert\theta_{E}^{\mathcal{M}})+\frac{h_{2}(\varepsilon)}{n}\right),\\
\zeta_{n}(\varepsilon,\cN,\cM)  & \leq\widetilde{D}_{\alpha}(\theta_{E}^{\mathcal{N}}\Vert\theta_{E}^{\mathcal{M}})+\frac{\alpha}{n(\alpha-1)}\log\!\left(\frac{1}{1-\varepsilon}\right).
\end{align}
As a consequence, the following asymptotic bound holds for all $\eps \in (0,1)$
\begin{equation}
\overline{\zeta}_{n}(\varepsilon,\cN,\cM)\leq D(\theta_{E}^{\mathcal{N}}\Vert\theta_{E}^{\mathcal{M}}).
\end{equation}
\end{prop}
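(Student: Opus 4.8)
The plan is to obtain both non-asymptotic bounds by feeding the environment-parametrization collapse of Proposition~\ref{prop:environ-collapse} into the general Stein converse bounds already established, and then to read off the asymptotic statement as a limiting case of the sandwiched R\'enyi bound. The key enabling observation is that both the quantum relative entropy and the sandwiched R\'enyi divergences are additive on tensor products, hence sub-additive, so that Proposition~\ref{prop:environ-collapse} applies to each and bounds the corresponding amortized channel divergence by the divergence of the environment states.

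First I would dispatch the relative-entropy bound. Applying Proposition~\ref{prop:environ-collapse} with $\mathbf{D} = D$ gives $D^{\mathcal{A}}(\cN\Vert\cM) \leq D(\theta_E^{\mathcal{N}}\Vert\theta_E^{\mathcal{M}})$, and inserting this into the weak converse of Proposition~\ref{weakConverse} immediately yields the first inequality. For the sandwiched R\'enyi bound I would first establish a strong-converse-type Stein bound in terms of $\widetilde{D}_\alpha^{\mathcal{A}}$, in analogy with the $D_{\max}$ argument of Proposition~\ref{LemmaStrongConverseDmax}. Writing $p = 1-\alpha_n$ and $q = \beta_n$ for the final decision probabilities and evaluating the classical binary divergence, one has for $\alpha > 1$ the lower bound $\widetilde{D}_\alpha(p\Vert q) \geq \tfrac{\alpha}{\alpha-1}\log p - \log q$; imposing the constraint $\alpha_n \leq \varepsilon$ so that $p \geq 1-\varepsilon$, and rearranging, gives $-\log\beta_n \leq \widetilde{D}_\alpha(p\Vert q) + \tfrac{\alpha}{\alpha-1}\log\tfrac{1}{1-\varepsilon}$. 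Applying the meta-converse of Lemma~\ref{ThmAmortizedBound} with $\mathbf{D} = \widetilde{D}_\alpha$ (a faithful generalized divergence for $\alpha > 1$), dividing by $n$, and optimizing over protocols then yields $\zeta_n(\varepsilon,\cN,\cM) \leq \widetilde{D}_\alpha^{\mathcal{A}}(\cN\Vert\cM) + \tfrac{\alpha}{n(\alpha-1)}\log\tfrac{1}{1-\varepsilon}$. Proposition~\ref{prop:environ-collapse} with $\mathbf{D} = \widetilde{D}_\alpha$ upgrades $\widetilde{D}_\alpha^{\mathcal{A}}(\cN\Vert\cM)$ to $\widetilde{D}_\alpha(\theta_E^{\mathcal{N}}\Vert\theta_E^{\mathcal{M}})$, producing the second inequality.

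Finally, the asymptotic bound follows from the second inequality by taking $\limsup_{n\to\infty}$, which annihilates the $\tfrac{\alpha}{n(\alpha-1)}\log\tfrac{1}{1-\varepsilon}$ term and gives $\overline{\zeta}(\varepsilon,\cN,\cM) \leq \widetilde{D}_\alpha(\theta_E^{\mathcal{N}}\Vert\theta_E^{\mathcal{M}})$ for every fixed $\alpha > 1$; letting $\alpha \to 1^+$ and invoking the continuity $\lim_{\alpha\to1}\widetilde{D}_\alpha = D$ then yields $\overline{\zeta}(\varepsilon,\cN,\cM) \leq D(\theta_E^{\mathcal{N}}\Vert\theta_E^{\mathcal{M}})$ for all $\varepsilon \in (0,1)$, which is the desired strong-converse (i.e.\ $\varepsilon$-independent) bound. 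The step requiring the most care is the middle one: the sandwiched R\'enyi Stein bound in terms of $\widetilde{D}_\alpha^{\mathcal{A}}$ is not stated verbatim in the preceding development (only the $D_{\max}$ version in Proposition~\ref{LemmaStrongConverseDmax} and the Han--Kobayashi exponent in Proposition~\ref{thm:converseexponent}), so one must carry out the binary-hypothesis R\'enyi computation explicitly and check that $\widetilde{D}_\alpha$ meets the faithfulness hypothesis of the meta-converse; everything else reduces to substitutions and a single continuity limit.
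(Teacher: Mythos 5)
Your proposal is correct and follows essentially the same route as the paper: the paper's proof combines the meta-converse of Lemma~\ref{ThmAmortizedBound} with Proposition~\ref{weakConverse}, the binary sandwiched-R\'enyi test bound (cited there as Lemma~5 of \cite{Cooney2016}, which is exactly the $\widetilde{D}_\alpha(p\Vert q)\geq\frac{\alpha}{\alpha-1}\log p-\log q$ computation you carry out explicitly), and the amortization collapse of Proposition~\ref{prop:environ-collapse}. Your concluding step, taking $\limsup_{n\to\infty}$ at fixed $\alpha>1$ and then $\alpha\to1^{+}$ using $\lim_{\alpha\to1}\widetilde{D}_\alpha=D$, is also the intended way to obtain the $\varepsilon$-independent asymptotic bound.
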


\begin{proof}
This follows from the meta-converse in Lemma~\ref{ThmAmortizedBound} together with the general bounds in Proposition~\ref{weakConverse} and \cite[Lemma~5]{Cooney2016}, as well as the amortization collapse from Proposition~\ref{prop:environ-collapse}.
\end{proof}

We have taken the perspective that the converse in the above proposition arises from the bound on amortized channel
divergence in Proposition~\ref{prop:environ-collapse}. However, there is a more fundamental reason why the bounds in Proposition \ref{thm:env-param-simple-bounds-div} hold~\cite{DM14}.
\begin{figure}[ptb]
\begin{center}
\includegraphics[width=5.5in]
{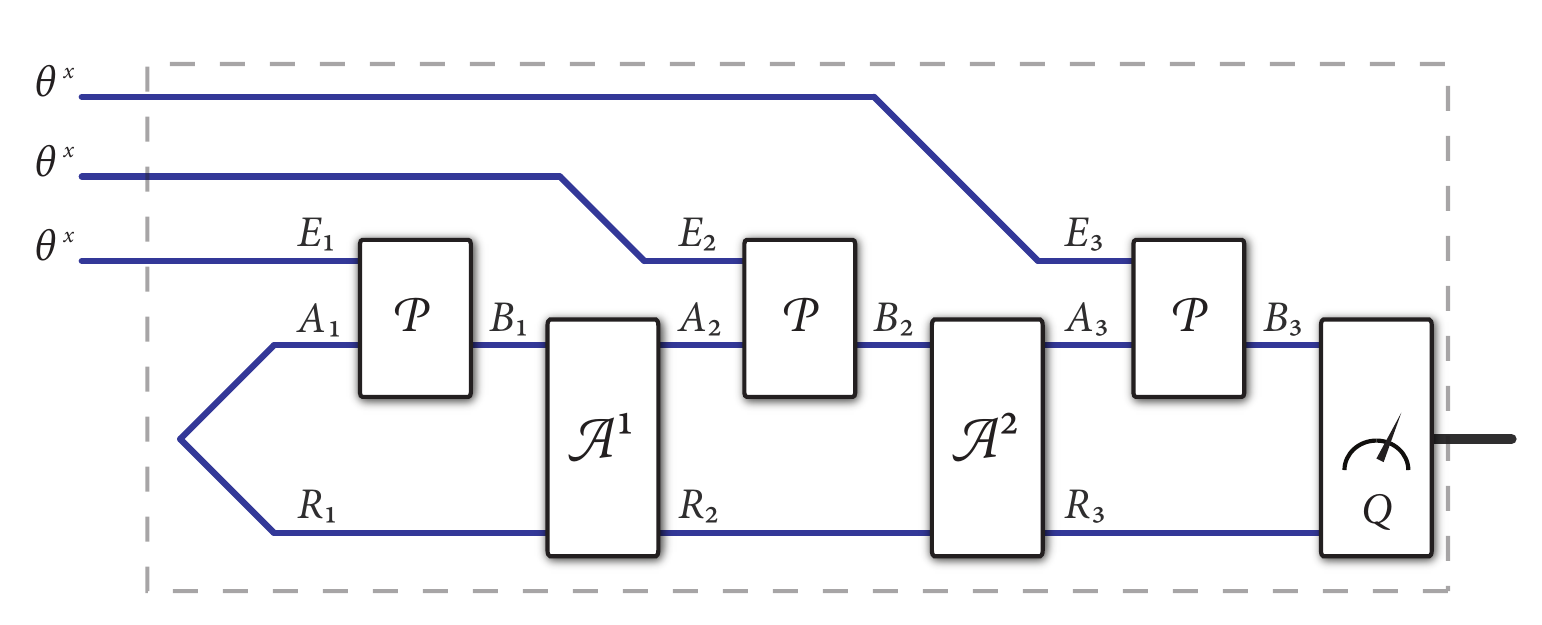}
\end{center}
\caption{The figure depicts how a general protocol for channel discrimination of two environment-parametrized channels  $\mathcal{N}$ and $\mathcal{M}$ can be understood as a quantum state discrimination protocol for the environment states $\theta^0 \equiv \theta_{E}^{\mathcal{N}}$ and $\theta^1 \equiv \theta_{E}^{\mathcal{M}}$, respectively. That is, the operations depicted in the dotted box form a measurement on the state $(\theta^x)^{\otimes n}$ for $n=3$. We emphasise the similarity of this figure with \cite[Figure 2]{DM14}.}
\label{fig:environ-param-disc}
\end{figure}
Due to the structure of environment-parametrized channels $\mathcal{N}$ and $\mathcal{M}$, any $n$-round, adaptive channel discrimination protocol for them, as depicted in Figure~\ref{fig:adaptive-prot}, can be understood as a particular kind of state discrimination protocol for the states $(\theta_{E}^{\mathcal{N}})^{\otimes n}$ and $(\theta_{E}^{\mathcal{M}})^{\otimes n}$. Figure~\ref{fig:environ-param-disc} provides a visual aid to understand this observation. As such, in this case, the type~I and II error probabilities, the fundamental quantities involved in the analysis of hypothesis testing of channels, can be rewritten as follows for any channel discrimination strategy $\left\{Q,\mathcal{A}\right\}$:
\begin{equation}
\alpha_{n}(\left\{  Q,\mathcal{A}\right\}  )=\operatorname{Tr}[(I-\Lambda
_{E^{n}})(\theta_{E}^{\mathcal{N}})^{\otimes n}],\qquad\beta_{n}(\left\{
Q,\mathcal{A}\right\}  )=\operatorname{Tr}[\Lambda_{E^{n}}(\theta
_{E}^{\mathcal{M}})^{\otimes n}],
\end{equation}
where $\{\Lambda_{E^{n}},I_{E^{n}}-\Lambda_{E^{n}}\}$ is a quantum measurement
that depends on the channel discrimination strategy $\left\{  Q,\mathcal{A}
\right\}  $, as well as the interaction channel $\mathcal{P}_{AE\rightarrow
B}$ in the definition in \eqref{eq:environ-param-1}--\eqref{eq:environ-param-2}. Now, since the various
channel discrimination exponents of interest from Section~\ref{sec:discrimination} involve an optimization over all channel discrimination strategies, if we allow for a further optimization over the
interaction channel $\mathcal{P}_{AE\rightarrow B}$, then the bounds loosen,
but we can use them to characterize the various exponents of interest by applying known results from the asymptotic theory of quantum state discrimination \cite{HP91,ON00,ACMBMAV07,NS09,Nagaoka06,hayashi2007error,Mosonyi2015}. We summarize this observation as the following proposition.

\begin{prop}\label{thm:env-param-bnds}
Let $\cN_{A\rightarrow B},\cM_{A\rightarrow B}\in\cQ(A\rightarrow B)$ be
environment-parametrized channels, as defined in
\eqref{eq:environ-param-1}--\eqref{eq:environ-param-2}. Fix $\varepsilon
\in(0,1)$, $r>0$, and $p\in(0,1)$. Then the following bounds hold for all $n \geq 1$:
\begin{align}
\zeta_{n}(\varepsilon,\mathcal{N},\mathcal{M})  & \leq\zeta_{n}(\varepsilon
,\theta_{E}^{\mathcal{N}},\theta_{E}^{\mathcal{M}}),\\
H_n(r,\mathcal{N},\mathcal{M})  & \geq H_{n}(r,\theta
_{E}^{\mathcal{N}},\theta_{E}^{\mathcal{M}}),\\
B_{n}(r,\mathcal{N},\mathcal{M})  & \leq B_{n}(r,\theta_{E}^{\mathcal{N}
},\theta_{E}^{\mathcal{M}}),\\
\xi_{n}(p,\mathcal{N},\mathcal{M})  & \leq\xi_{n}(p,\theta_{E}^{\mathcal{N}
},\theta_{E}^{\mathcal{M}}).
\end{align}
\end{prop}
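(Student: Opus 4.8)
The plan is to leverage the reduction to state discrimination that is spelled out in the discussion preceding the proposition, and then to argue that optimizing over channel strategies amounts to optimizing over a \emph{restricted} family of measurements on the $n$-fold environment states. Concretely, I would begin by recalling that for any adaptive channel discrimination strategy $\{Q,\mathcal{A}\}$, the type~I and type~II error probabilities take the form
\begin{equation}
\alpha_{n}(\{Q,\mathcal{A}\})=\operatorname{Tr}[(I_{E^{n}}-\Lambda_{E^{n}})(\theta_{E}^{\mathcal{N}})^{\otimes n}],\qquad \beta_{n}(\{Q,\mathcal{A}\})=\operatorname{Tr}[\Lambda_{E^{n}}(\theta_{E}^{\mathcal{M}})^{\otimes n}],
\end{equation}
where $\{\Lambda_{E^{n}},I_{E^{n}}-\Lambda_{E^{n}}\}$ is a valid two-outcome measurement on $E^{n}$ that depends on $\{Q,\mathcal{A}\}$ and on the fixed interaction channel $\mathcal{P}_{AE\to B}$. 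The key conceptual point, depicted in Figure~\ref{fig:environ-param-disc} and going back to~\cite{DM14}, is that since $\mathcal{P}_{AE\to B}$ is common to both channels and is known to the discriminator, the discriminator can supply the $n$ interaction channels himself, so that the entire apparatus (initial state, adaptive channels $\mathcal{A}^{(i)}$, the interaction channels, and the final measurement $Q$) collapses into a single POVM acting only on the $n$ environment registers.

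From here the argument is purely set-theoretic. Let $S^{\mathrm{chan}}_{n}$ denote the set of achievable error pairs $(\alpha_{n},\beta_{n})$ as $\{Q,\mathcal{A}\}$ ranges over all channel strategies, and let $S^{\mathrm{state}}_{n}$ denote the set of pairs $(\operatorname{Tr}[(I-\Lambda)(\theta_{E}^{\mathcal{N}})^{\otimes n}],\operatorname{Tr}[\Lambda(\theta_{E}^{\mathcal{M}})^{\otimes n}])$ as $\Lambda$ ranges over all measurement operators $0\leq\Lambda\leq I_{E^{n}}$. The reduction above shows $S^{\mathrm{chan}}_{n}\subseteq S^{\mathrm{state}}_{n}$, since each channel strategy induces a legitimate $\Lambda_{E^{n}}$, while not every measurement on $E^{n}$ arises from a channel strategy. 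I would then note that each of the four quantities in the statement is defined by optimizing a fixed functional of $(\alpha_{n},\beta_{n})$ over a feasibility-constrained subset of the achievable region, and that the feasibility constraints ($\alpha_n\le\varepsilon$ or $\beta_n\le 2^{-rn}$) are the same functional inequalities in both problems, so they pass through the containment without loss.

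With the containment in hand, the four inequalities follow from monotonicity of constrained suprema and infima under enlargement of the feasible region, taking care of the direction in each case. For the Stein quantity $\zeta_{n}$, the Hoeffding quantity $B_{n}$, and the Chernoff quantity $\xi_{n}$, which are all defined by a supremum, passing from $S^{\mathrm{chan}}_{n}$ to the larger set $S^{\mathrm{state}}_{n}$ can only increase the optimum, yielding $\zeta_{n}(\varepsilon,\mathcal{N},\mathcal{M})\leq\zeta_{n}(\varepsilon,\theta_{E}^{\mathcal{N}},\theta_{E}^{\mathcal{M}})$, and similarly for $B_{n}$ and $\xi_{n}$. For the strong-converse (Han--Kobayashi) quantity $H_{n}$, which is defined by an infimum, enlarging the feasible region can only decrease the optimum, so the inequality reverses to $H_{n}(r,\mathcal{N},\mathcal{M})\geq H_{n}(r,\theta_{E}^{\mathcal{N}},\theta_{E}^{\mathcal{M}})$. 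The main obstacle is really the reduction step---verifying that the induced $\Lambda_{E^{n}}$ is a genuine POVM element and that the environment states enter exactly in tensor-power form on $E^{n}$---but this is already justified by the structural argument recalled above, so the remaining bookkeeping of sup-versus-inf directions is routine.
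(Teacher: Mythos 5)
Your proposal is correct and takes essentially the same approach as the paper: the paper's proof is exactly the preceding reduction, in which the common interaction channel $\mathcal{P}_{AE\rightarrow B}$ lets any adaptive strategy $\{Q,\mathcal{A}\}$ be viewed as a two-outcome measurement $\{\Lambda_{E^{n}},I_{E^{n}}-\Lambda_{E^{n}}\}$ on $(\theta_{E}^{\mathcal{N}})^{\otimes n}$ versus $(\theta_{E}^{\mathcal{M}})^{\otimes n}$, so that channel strategies yield only a restricted subset of all measurements and each optimized quantity is bounded accordingly. Your explicit bookkeeping of the sup-versus-inf directions, including the reversed inequality for $H_{n}$, is precisely the monotonicity observation the paper leaves implicit.
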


Now, let us consider an interesting case in which all of the above upper bounds are achieved.

\begin{definition}
[Environment-seizable channels]
\label{def:seizable}
Environment-parametrized channels $\mathcal{N}$
and $\mathcal{M}$ are environment-seizable if there exists a common input
state $\rho_{RA}$\ and post-processing channel $\mathcal{D}_{RB\rightarrow E}$
that can be applied to both environment-parametrized channels $\mathcal{N}$
and $\mathcal{M}$, which allows for seizing the state of the environment
at the output of the channel:
\begin{align}
\mathcal{D}_{RB\rightarrow E}(\mathcal{N}_{A\rightarrow B}(\rho_{RA})) &
=\theta_{E}^{\mathcal{N}},\\
\mathcal{D}_{RB\rightarrow E}(\mathcal{M}_{A\rightarrow B}(\rho_{RA})) &
=\theta_{E}^{\mathcal{M}}.
\end{align}
\end{definition}

As non-trivial examples of environment-seizable channels, let us consider a pair of erasure channels and a pair of  dephasing channels. An important objective from an experimental point of view is to  devise the simplest possible method for seizing the underlying environment state $\theta_{E}^{\mathcal{N}}$ or $\theta_{E}^{\mathcal{M}}$ in the case that the channels are seizable. A quantum erasure channel is defined as \cite{GBP97}
\begin{equation}
\mathcal{E}^{p}(\rho)\coloneqq (1-p)\rho+p|e\rangle\langle e|,
\end{equation}
where $\rho$ is a $d$-dimensional input state, $p\in\left[  0,1\right]  $ is
the erasure probability, and $|e\rangle\langle e|$ is a pure erasure state
orthogonal to any input state, so that the output state has $d+1$ dimensions. To see that any two erasure channels are environment-parametrized, set the initial environment state to be
\begin{equation}
\theta^p_E \coloneqq (1-p)| 0\rangle \langle 0|_E + p | 1\rangle \langle 1|_E.
\end{equation}
Then the common interaction $\mathcal{P}$ consists of adjoining the erasure symbol $|e\rangle \langle e |_{E'}$, applying a controlled-SWAP to the channel input and the register $E'$, controlled on the register $E$, and finally discarding the registers $E$ and $E'$. The simplest way of seizing the environment state $\theta^p_E$ is to input the state  $\vert 0\rangle \langle 0\vert$, leading to $(1-p)| 0\rangle \langle 0|_E + p | e\rangle \langle e|_E$, and then to perform the classical transformation $ \vert 0\rangle \langle 0\vert \to \vert 0\rangle \langle 0\vert$ and $\vert e\rangle \langle e\vert \to \vert 1\rangle \langle 1\vert$.

A $d$-dimensional dephasing channel has the following action:
\begin{equation}
\mathcal{D}^{\mathbf{p}}(\rho)=\sum_{i=0}^{d-1}p_{i}Z^{i}\rho Z^{i\dag},
\end{equation}
where $\mathbf{p}$ is a vector containing the probabilities $p_{i}$ and $Z$
has the following action on the computational basis $Z|x\rangle=e^{2\pi
ix/d}|x\rangle$. To see that any two dephasing channels as above are environment-parametrized, set the initial environment state to be $\theta^{\mathbf{p}}_E \coloneqq \sum_{i=0}^{d-1} p_i \vert i \rangle \langle i \vert_E$.
Then the common interaction consists of applying the controlled unitary $\sum_{i=0}^{d-1} Z^i_A \otimes |i\rangle \langle i |_E$ to the channel input system $A$ and the environment $E$, and then tracing out $E$. 
The simplest way of seizing the environment state is to input the state $\vert \phi\rangle_A$ of maximal coherence, where
\begin{equation}
\vert \phi\rangle_A \coloneqq \frac{1}{\sqrt{d}}\sum_{i=0}^{d-1} \vert i \rangle_A.
\end{equation}
After doing so, the channel output is $\sum_{i=1}^{d}p_{i}Z^{i}\vert \phi \rangle \langle \phi \vert_A Z^{i\dag}$. Applying a Fourier transform to this state gives the underlying environment state $\theta^{\mathbf{p}}_E$.

In the case that two environment-parametrized channels $\mathcal{N}$ and
$\mathcal{M}$ are environment-seizable, the most sensible strategy for channel
discrimination is to first apply the environment seizing procedure highlighted
in Definition~\ref{def:seizable}, in order to seize the environment states $\theta_{E}^{\mathcal{N}}$ or
$\theta_{E}^{\mathcal{M}}$, and then follow with the best state discrimination
protocol for $\theta_{E}^{\mathcal{N}}$ or $\theta_{E}^{\mathcal{M}}$. As a
result of this observation, we conclude the following theorem.

\begin{theorem}\label{thm:seizable-tight}
Let $\cN_{A\rightarrow B},\cM_{A\rightarrow B}\in\cQ(A\rightarrow B)$ be environment-parametrized channels that are also environment seizable. Fix $\varepsilon\in(0,1)$, $r>0$, and $p\in(0,1)$. Then, for all $n\geq 1$, we have that
\begin{align}
\zeta_{n}(\varepsilon,\mathcal{N},\mathcal{M})  & =\zeta_{n}(\varepsilon
,\theta_{E}^{\mathcal{N}},\theta_{E}^{\mathcal{M}}),\\
H_n(r,\mathcal{N},\mathcal{M})  & = H_{n}(r,\theta
_{E}^{\mathcal{N}},\theta_{E}^{\mathcal{M}}),\\
B_{n}(r,\mathcal{N},\mathcal{M})  & = B_{n}(r,\theta_{E}^{\mathcal{N}
},\theta_{E}^{\mathcal{M}}),\\
\xi_{n}(p,\mathcal{N},\mathcal{M})  & =\xi_{n}(p,\theta_{E}^{\mathcal{N}
},\theta_{E}^{\mathcal{M}}).
\end{align}
so that
\begin{align}
\zeta(\varepsilon,\mathcal{N},\mathcal{M})&=D(\theta_{E}^{\mathcal{N}}\Vert\theta_{E}^{\mathcal{M}}),\\
H(r,\mathcal{N},\mathcal{M})&=\sup_{\alpha>1}\frac{\alpha-1}{\alpha}\left(r-\widetilde{D}_\alpha(\theta_{E}^{\mathcal{N}}\|\theta_{E}^{\mathcal{M}})\right),\\
B(r,\mathcal{N},\mathcal{M})&=\sup_{s\in(0,1)}\frac{\alpha-1}{\alpha}\left(r-D_\alpha(\theta_{E}^{\mathcal{N}}\|\theta_{E}^{\mathcal{M}})\right),\\
\xi(\mathcal{N},\mathcal{M})&=C\left(\theta_{E}^{\mathcal{N}}\middle\|\theta_{E}^{\mathcal{M}}\right).
\end{align}
\end{theorem}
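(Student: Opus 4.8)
The strategy is to reduce the entire statement to the asymptotic theory of quantum state discrimination by showing that each of the four channel quantities equals the corresponding quantity for the environment states $\theta_E^{\mathcal{N}}$ and $\theta_E^{\mathcal{M}}$ already at finite $n$. One direction is immediate from Proposition~\ref{thm:env-param-bnds}, which supplies the converse bounds $\zeta_n(\varepsilon,\mathcal{N},\mathcal{M})\leq\zeta_n(\varepsilon,\theta_E^{\mathcal{N}},\theta_E^{\mathcal{M}})$, $B_n(r,\mathcal{N},\mathcal{M})\leq B_n(r,\theta_E^{\mathcal{N}},\theta_E^{\mathcal{M}})$, and $\xi_n(p,\mathcal{N},\mathcal{M})\leq\xi_n(p,\theta_E^{\mathcal{N}},\theta_E^{\mathcal{M}})$, together with $H_n(r,\mathcal{N},\mathcal{M})\geq H_n(r,\theta_E^{\mathcal{N}},\theta_E^{\mathcal{M}})$; the last inequality points the opposite way precisely because $H$ is a converse exponent defined through an infimum over strategies rather than a supremum.

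For the matching reverse inequalities I would exploit the environment-seizable property of Definition~\ref{def:seizable}. The idea is to construct an explicit parallel channel-discrimination protocol that simulates the optimal state discriminator on $n$ copies: prepare $n$ independent copies of the common input state $\rho_{RA}$, send each $A$-system through one call of the unknown channel, and apply the common post-processing channel $\mathcal{D}_{RB\rightarrow E}$ to each output. By the seizing identities $\mathcal{D}_{RB\rightarrow E}(\mathcal{N}_{A\rightarrow B}(\rho_{RA}))=\theta_E^{\mathcal{N}}$ and $\mathcal{D}_{RB\rightarrow E}(\mathcal{M}_{A\rightarrow B}(\rho_{RA}))=\theta_E^{\mathcal{M}}$, this produces exactly $(\theta_E^{\mathcal{N}})^{\otimes n}$ or $(\theta_E^{\mathcal{M}})^{\otimes n}$, on which I am then free to perform the optimal state-discrimination measurement for the setting at hand. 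Since parallel protocols are a special case of adaptive ones (Remark~\ref{rmk:block-coding}), this is a legitimate channel strategy whose type~I and type~II error probabilities coincide with those of the chosen state discriminator. It therefore realizes the reverse inequalities, which combined with the converse bounds above give, for every $n$, equality between each channel quantity and its environment-state counterpart.

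Passing to the limit $n\to\infty$ and inserting the known single-letter formulas then finishes the proof: quantum Stein's lemma with strong converse \cite{HP91,ON00} gives $\zeta=D(\theta_E^{\mathcal{N}}\Vert\theta_E^{\mathcal{M}})$ for all $\varepsilon\in(0,1)$; the strong converse exponent \cite{Mosonyi2015} gives the stated $\sup_{\alpha>1}$ formula for $H$; the quantum Hoeffding bound \cite{hayashi2007error} gives the $\sup_{\alpha\in(0,1)}$ formula for $B$; and the quantum Chernoff bound \cite{ACMBMAV07,NS09} gives $\xi=C(\theta_E^{\mathcal{N}}\Vert\theta_E^{\mathcal{M}})$. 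I expect the only real point of care, rather than a genuine obstacle, to be the bookkeeping of inequality directions, in particular ensuring that the infimum-based exponent $H$ is handled consistently with the three supremum-based quantities, and checking that the seizing protocol reproduces the environment states exactly rather than approximately, so that achievability and converse meet at finite $n$.
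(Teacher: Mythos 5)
Your proposal is correct and follows essentially the same route as the paper: the converse bounds come from Proposition~\ref{thm:env-param-bnds} (adaptive protocols on environment-parametrized channels reduce to state discrimination of the tensor-power environment states), achievability comes from running the seizing procedure of Definition~\ref{def:seizable} on each channel use to produce $(\theta_E^{\mathcal{N}})^{\otimes n}$ or $(\theta_E^{\mathcal{M}})^{\otimes n}$ exactly and then applying the optimal state-discrimination measurement, and the single-letter formulas follow from the known asymptotic state-discrimination results (Stein with strong converse, Han-Kobayashi, Hoeffding, Chernoff). Your explicit attention to the reversed inequality direction for the infimum-based exponent $H$ and to the exactness of the seizing step matches the paper's implicit reasoning.
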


Another case of environment-parametrized channels that are also environment seizable are those that are jointly covariant or jointly teleportation-simulable with the Choi state as the associated resource state. Theorem~\ref{thm:seizable-tight} then recovers the previous results about adaptive quantum channel discrimination from~\cite{PhysRevLett.118.100502,TW2016}. Recall that two channels are jointly teleportation-simulable with associated resource states $\omega_{RB^{\prime}}^{\mathcal{N}}$ and $\omega_{RB^{\prime}}^{\mathcal{M}}$ \cite{PL17,TW2016}\ if there exists an LOCC channel $\mathcal{L}_{ARB^{\prime}\rightarrow B}$ such that for all input states $\rho_{A}$
\begin{align}
\mathcal{N}_{A\rightarrow B}(\rho_{A}) &  =\mathcal{L}_{ARB^{\prime
}\rightarrow B}(\rho_{A}\otimes\omega_{RB^{\prime}}^{\mathcal{N}}),\\
\mathcal{M}_{A\rightarrow B}(\rho_{A}) &  =\mathcal{L}_{ARB^{\prime
}\rightarrow B}(\rho_{A}\otimes\omega_{RB^{\prime}}^{\mathcal{M}}).
\end{align}
Jointly teleportation-simulable channels are then a special case of environment-parametrized channels. If the resource states $\omega_{RB^{\prime}}^{\mathcal{N}}$ and $\omega_{RB^{\prime}}^{\mathcal{M}}$ are equal to the respective Choi states of the channels, then these channels are also environment-seizable:\ One would just input the maximally entangled state and then recover the Choi state of the channel. Two channels are jointly covariant if they are jointly covariant with respect to a group $\{U_{A}^{g}\}_{g}$ that forms a one-design:\ $\frac{1}{\left\vert
\mathcal{G}\right\vert }\sum_{g\in\mathcal{G}}U_{A}^{g}(X)U_{A}^{g\dag}=\operatorname{Tr}[X]I/\left\vert A\right\vert $. Following the methods from~\cite[Section 7]{CDP09}, any jointly covariant channels are then jointly teleportation-simulable with Choi states as the associated resource states.

Going further, we can extend the definitions above for environment-parametrized and environment-seizable channels to the case of multiple channels, as was done for environment-parametrized channels in \cite{DW17}. In the case of multiple environment-seizable channels, generalizing the above, channel discrimination problems reduce to state discrimination problems. As such, the recent result from the multiple Chernoff bound \cite{li2016} immediately applies to this setting (but we refrain from stating any details here).


\subsection{Replacer channels and quantum illumination}

Another interesting scenario for channel discrimination occurs when the null hypothesis is an arbitrary channel $\cN$ and the alternative hypothesis is a replacer channel $\cR$, defined as $\cR(X) = \tr[X] \tau$ for some state $\tau$. In \cite{Cooney2016}, the Stein's lemma and strong converse exponent for this setting were identified as follows for $\eps \in (0,1)$ and $r>D(\cN\Vert \cR)$, respectively:
\begin{align}
\zeta(\varepsilon,\mathcal{N},\mathcal{R})&= D(\cN\Vert \cR),\\
H(r,\mathcal{N},\mathcal{R})  & =\sup_{\alpha>1}\frac{\alpha-1}{\alpha}\left(r-\widetilde{D}_\alpha(\cN\Vert \cR) \right).
\label{eq:sc-replacer}
\end{align}
In this section, we prove that the energy-constrained quantum relative entropy of an arbitrary channel and a replacer channel does not increase under amortization. After that, we revisit the setting of \cite{Cooney2016} and establish the weak Stein's lemma for energy-constrained channel discrimination in this setting, demonstrating that adaptive strategies do not help. We note that this result solves an open question stated at the end of \cite{PhysRevA.98.012101}, having to do with the theory of quantum illumination \cite{L08}. After that, we recast one of the results of~\cite{Cooney2016} in terms of amortized channel divergence: that is, we prove that the sandwiched R\'enyi channel divergence of an arbitrary channel and a replacer channel does not increase under amortization.

\begin{lemma}\label{lem:amort-collapse-replacer-en-const}
Let $\cN_{A\to B}\in\cQ(A\to B)$ and let $\cR_{A\to B}\in\cQ(A\to B)$ be a replacer channel $\cR(\cdot) = \tr(\cdot)\,\tau_B$ where $\tau_B \in \cS(B)$. Let $H_A$ be a Hamiltonian and $E \in [0,\infty)$ an energy constraint. Then an amortization collapse occurs for the energy-constrained quantum relative entropy of the channels $\cN$ and $\cR$:
\begin{align}
D^{\mathcal{A}}_{H,E}(\cN\|\cR)=D_{H,E}(\cN\|\cR) \coloneqq\sup_{\psi_{RA}:\tr[H_A\psi_A]\leq E}D(\cN_{A\to B}(\psi_{RA})\|\cR_{A\to B}(\psi_{RA})).
\end{align}
\end{lemma}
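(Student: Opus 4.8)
The inequality $D^{\mathcal{A}}_{H,E}(\cN\|\cR)\geq D_{H,E}(\cN\|\cR)$ is immediate from the energy-constrained analogue of Proposition~\ref{Lem:AmortizedIneq} (take $\sigma_{RA}=\rho_{RA}$), so the plan is to establish the reverse inequality $D^{\mathcal{A}}_{H,E}(\cN\|\cR)\leq D_{H,E}(\cN\|\cR)$. The whole argument rests on the special structure of the replacer channel. First I would record that for any bipartite operator $\xi_{RA}$ we have $\cR_{A\to B}(\xi_{RA})=\xi_R\otimes\tau_B$ with $\xi_R\coloneqq\tr_A[\xi_{RA}]$, and that since $\cN_{A\to B}$ is trace preserving, $\tr_B[\cN_{A\to B}(\rho_{RA})]=\rho_R$. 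In particular $\cR_{A\to B}(\rho_{RA})=\rho_R\otimes\tau_B$ shares its reduced state on $R$ with $\cN_{A\to B}(\rho_{RA})$.

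The key step is a chain-rule identity that separates the contribution of the two distinct input states. Writing $\omega_{RB}\coloneqq\cN_{A\to B}(\rho_{RA})$, so that $\omega_R=\rho_R$, I would verify by direct expansion that
\begin{equation}
D(\cN_{A\to B}(\rho_{RA})\|\cR_{A\to B}(\sigma_{RA}))=D(\cN_{A\to B}(\rho_{RA})\|\cR_{A\to B}(\rho_{RA}))+D(\rho_R\|\sigma_R).
\end{equation}
Since $\cR_{A\to B}(\sigma_{RA})=\sigma_R\otimes\tau_B$ and $\cR_{A\to B}(\rho_{RA})=\rho_R\otimes\tau_B$ are product operators on $RB$, the logarithm of each splits as $\log\sigma_R\otimes I_B+I_R\otimes\log\tau_B$ (respectively with $\rho_R$); the $\tau_B$ terms cancel between the two right-hand divergences, and the remaining terms collapse to $D(\rho_R\|\sigma_R)$ precisely because $\omega_R=\rho_R$. (If the relevant support condition fails, then $D_{H,E}(\cN\|\cR)=\infty$ and the claim is trivial, so I may assume finiteness throughout.)

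With this identity in hand, the amortized quantity becomes
\begin{align}
& D(\cN_{A\to B}(\rho_{RA})\|\cR_{A\to B}(\sigma_{RA}))-D(\rho_{RA}\|\sigma_{RA})\notag\\
&\quad = D(\cN_{A\to B}(\rho_{RA})\|\cR_{A\to B}(\rho_{RA}))+D(\rho_R\|\sigma_R)-D(\rho_{RA}\|\sigma_{RA}),
\end{align}
and data processing under the partial trace over $A$ gives $D(\rho_R\|\sigma_R)\leq D(\rho_{RA}\|\sigma_{RA})$, so the last two terms are nonpositive. The expression is therefore bounded above by the same-input divergence $D(\cN_{A\to B}(\rho_{RA})\|\cR_{A\to B}(\rho_{RA}))$. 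Finally I would bound this by $D_{H,E}(\cN\|\cR)$: purify $\rho_{RA}$ to a pure state $\phi_{R'RA}$, observing that the $A$-marginal is unchanged so the constraint $\tr[H_A\rho_A]\leq E$ persists, and apply data processing under the partial trace over $R'$ to obtain $D(\cN(\rho_{RA})\|\cR(\rho_{RA}))\leq D(\cN(\phi_{R'RA})\|\cR(\phi_{R'RA}))\leq D_{H,E}(\cN\|\cR)$. Taking the supremum over all energy-constrained $\rho_{RA},\sigma_{RA}$ then yields the reverse inequality.

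The main obstacle is essentially only the careful verification of the product-state decomposition identity, together with the bookkeeping that the energy constraint survives purification (since it depends only on the $A$-marginal). The crucial structural point being exploited is that the replacer output $\cR_{A\to B}(\sigma_{RA})=\sigma_R\otimes\tau_B$ couples to $\cN_{A\to B}(\rho_{RA})$ only through the fixed state $\tau_B$, which is exactly what decouples the $\sigma$-dependence and forces the collapse to the generalized channel divergence on a single input.
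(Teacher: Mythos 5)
Your proof is correct and follows essentially the same route as the paper's: the chain-rule identity you verify (splitting $\log(\sigma_R\otimes\tau_B)$ and using $\tr_B[\cN_{A\to B}(\rho_{RA})]=\rho_R$) is exactly the chain of equalities in the paper's proof, reorganized so that the data-processing inequality $D(\rho_R\|\sigma_R)\leq D(\rho_{RA}\|\sigma_{RA})$ is applied after, rather than before, the expansion. Your explicit purification step at the end makes rigorous a point the paper leaves implicit, but the substance of the argument is identical.
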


\begin{proof}
We first show the $\leq$ direction. 
Let $\rho_{RA}$ and $\sigma_{RA}$ be arbitrary states satisfying the energy constraints $\tr[H_A \rho_A],\tr[H_A \sigma_A] \leq E$. We find that
\begin{align}
& D(\cN_{A\rightarrow B}(\rho_{RA}) \| \sigma_R\otimes \tau_B) - D(\rho_{RA}\|\sigma_{RA}) \notag \\
&\leq D( \cN_{A\rightarrow B}(\rho_{RA}) \| \sigma_R\otimes \tau_B) - D(\rho_{R}\|\sigma_{R}) \\
&=  -S(\cN_{A\rightarrow B}(\rho_{RA})) - \tr[\cN_{A\rightarrow B}(\rho_{RA})\log\sigma_R\otimes \tau_B] - D(\rho_{R}\|\sigma_{R})  \\
&=  -S(\cN_{A\rightarrow B}(\rho_{RA})) - \tr[\cN_{A\rightarrow B}(\rho_{A})\log\tau_B] -\tr[\rho_R\log\sigma_R] - D(\rho_{R}\|\sigma_{R}) \\
&=  -S(\cN_{A\rightarrow B}(\rho_{RA})) - \tr[\cN_{A\rightarrow B}(\rho_{A})\log\tau_B] -\tr[\rho_R\log\rho_R] \\
&=  -S(\cN_{A\rightarrow B}(\rho_{RA})) - \tr[\cN_{A\rightarrow B}(\rho_{RA})\log\rho_R\otimes\tau_B] \\
&=  D( \cN_{A\rightarrow B}(\rho_{RA}) \| \rho_R\otimes\tau_B) \\
& \leq D_{H,E}(\cN\|\cM)
,
\end{align}
where the first inequality follows from the data-processing inequality and the second because the state $\rho_{RA}$ is a particular state satisfying the energy constraint. Since the inequality holds for all states $\rho_{RA}$ and $\sigma_{RA}$ satisfying the energy constraints, we conclude that
\begin{equation}
D^{\mathcal{A}}_{H,E}(\cN\|\cM) \leq 
D_{H,E}(\cN\|\cM).
\end{equation}
Combining with the fact that 
$D^{\mathcal{A}}_{H,E}(\cN\|\cM) \geq 
D_{H,E}(\cN\|\cM)$ for any two channels, which follows from Proposition~\ref{Lem:AmortizedIneq} and the fact that the quantum relative entropy is faithful, we conclude the statement of the lemma.
\end{proof}

As a direct consequence of \eqref{eq:weak-conv-stein-constrained} in Proposition~\ref{weakConverse} and Lemma~\ref{lem:amort-collapse-replacer-en-const}, we conclude the following theorem.

\begin{theorem}\label{thm:stein-energy-constrained}
Let $\cN_{A\to B}\in\cQ(A\to B)$ and let $\cR_{A\to B}\in\cQ(A\to B)$ be a replacer channel $\cR(\cdot) = \tr(\cdot)\,\tau_B$ where $\tau_B \in \cS(B)$. Let $H_A$ be a Hamiltonian and $E \in [0,\infty)$ an energy constraint. 
Then for $n\in\mathbb{N}$ and $\eps\in(0,1]$, the following bound holds
\begin{align}
\zeta_n(\eps,\cN,\cR,H,E)\leq \frac{1}{1-\eps}\left(D_{H,E}(\cN\|\cR)+\frac{h_2(\varepsilon)}{n}\right),
\end{align}
implying that
\begin{equation}
\zeta(\cN,\cR,H,E)\coloneqq\lim_{\varepsilon \to 0} \lim_{n \to \infty}\zeta_n(\eps,\cN,\cR,H,E)= 
D_{H,E}(\cN\|\cR).
\end{equation}
\end{theorem}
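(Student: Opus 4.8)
The plan is to combine the two results cited just above the statement: the energy-constrained meta-converse \eqref{eq:weak-conv-stein-constrained} of Proposition~\ref{weakConverse} and the amortization collapse of Lemma~\ref{lem:amort-collapse-replacer-en-const}. First I would record the non-asymptotic bound. Specializing \eqref{eq:weak-conv-stein-constrained} to the pair $(\cN,\cR)$ gives
\[
\zeta_n(\eps,\cN,\cR,H,E)\leq \frac{1}{1-\eps}\left(D^{\mathcal{A}}_{H,E}(\cN\|\cR)+\frac{h_2(\varepsilon)}{n}\right),
\]
and then substituting $D^{\mathcal{A}}_{H,E}(\cN\|\cR)=D_{H,E}(\cN\|\cR)$ from Lemma~\ref{lem:amort-collapse-replacer-en-const} reproduces the displayed inequality verbatim. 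Invoking the energy-constrained meta-converse is legitimate here because the quantum relative entropy is faithful and satisfies the direct-sum property in \eqref{eq:d-sum-prop}, as that bound requires.

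For the asymptotic statement I would prove the two matching directions. Taking $\limsup_{n\to\infty}$ of the non-asymptotic bound annihilates the $h_2(\varepsilon)/n$ term and yields the converse bound $\overline{\zeta}(\eps,\cN,\cR,H,E)\leq \tfrac{1}{1-\eps}D_{H,E}(\cN\|\cR)$ for each fixed $\eps\in(0,1)$. For the achievability direction I would use a non-adaptive product-state protocol: fix any pure $\psi_{RA}$ with $\tr[H_A\psi_A]\leq E$, input $\psi_{RA}^{\otimes n}$ across the $n$ channel uses, and run the optimal state-discrimination test on the outputs $(\cN_{A\to B}(\psi_{RA}))^{\otimes n}$ versus $(\cR_{A\to B}(\psi_{RA}))^{\otimes n}$. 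Since the reduced input at every use is the same $\psi_A$, the average-energy constraint \eqref{eq:energy-constraint} collapses to $\tr[H_A\psi_A]\leq E$ and holds by construction, so quantum Stein's lemma for states~\cite{HP91,ON00} guarantees the type-II exponent $D(\cN_{A\to B}(\psi_{RA})\|\cR_{A\to B}(\psi_{RA}))$ for every fixed $\eps\in(0,1)$. Optimizing over admissible $\psi_{RA}$ then gives $\underline{\zeta}(\eps,\cN,\cR,H,E)\geq D_{H,E}(\cN\|\cR)$.

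Combining the two directions, for every $\eps\in(0,1)$ one has the sandwich $D_{H,E}(\cN\|\cR)\leq \underline{\zeta}(\eps,\cN,\cR,H,E)\leq \overline{\zeta}(\eps,\cN,\cR,H,E)\leq \tfrac{1}{1-\eps}D_{H,E}(\cN\|\cR)$, and letting $\eps\to0$ squeezes the value to $D_{H,E}(\cN\|\cR)$, which is the asserted double limit.

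The content is essentially bookkeeping: the genuine work is already done in Lemma~\ref{lem:amort-collapse-replacer-en-const}, whose proof exploits the replacer structure $\cR(\cdot)=\tr(\cdot)\,\tau_B$ to pull the $\tau_B$ factor out of the entropy terms. The only step needing mild care is achievability, where one must check that the product input respects the \emph{average} energy constraint and that Stein's lemma applies; in particular the support condition $\supp(\cN_{A\to B}(\psi_{RA}))\subseteq\supp(\cR_{A\to B}(\psi_{RA}))$ is automatic whenever the exponent is finite. I do not anticipate a substantive obstacle.
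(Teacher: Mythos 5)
Your proposal is correct and follows essentially the same route as the paper, which derives the non-asymptotic bound exactly as you do, by specializing the energy-constrained meta-converse \eqref{eq:weak-conv-stein-constrained} of Proposition~\ref{weakConverse} and substituting the amortization collapse of Lemma~\ref{lem:amort-collapse-replacer-en-const}. The paper states the theorem as a ``direct consequence'' of those two results; your explicit treatment of achievability via a product-state input satisfying the average energy constraint and of the $\eps\to0$ squeeze merely fills in the bookkeeping the paper leaves implicit.
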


\begin{remark}
In a setting related to quantum illumination \cite{L08}, one considers the null hypothesis to be that the channel $\cN$ is a thermal bosonic channel and the other is a replacer channel $\cR$ that replaces with a thermal state (see the discussion in \cite[Section~2.3]{Cooney2016}). Recently, it was shown in \cite{PhysRevA.98.012101} that when the Hamiltonian is the photon number operator, the input that optimizes the energy-constrained, quantum relative entropy channel divergence $D_{H,E}(\cN\|\cR)$ is the two-mode squeezed vacuum state saturating the energy constraint. Combining this result with Theorem~\ref{thm:stein-energy-constrained}, we then have a complete characterization of the Stein exponent in this setting. For further developments along these lines, i.e., second-order characterizations of these exponents, see \cite{PhysRevLett.119.120501}.
\end{remark}

Finally, we recast one of the main results of \cite{Cooney2016} as a statement about the amortized sandwiched R\'enyi divergence of an arbitrary channel and a replacer channel.

\begin{prop}
Given an arbitrary channel $\mathcal{N}$ and a replacer channel $\mathcal{R}$,
the following amortization collapse holds for $\alpha>1$:
\begin{equation}
\widetilde{D}_{\alpha}^{\mathcal{A}}(\mathcal{N}\Vert\mathcal{R}
)=\widetilde{D}_{\alpha}(\mathcal{N}\Vert\mathcal{R}).
\end{equation}
\end{prop}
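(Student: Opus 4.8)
The plan is to prove the two inequalities separately, with all the nontrivial content living in the upper bound $\widetilde{D}_{\alpha}^{\mathcal{A}}(\cN\|\cR)\leq\widetilde{D}_{\alpha}(\cN\|\cR)$. The reverse inequality $\widetilde{D}_{\alpha}^{\mathcal{A}}(\cN\|\cR)\geq\widetilde{D}_{\alpha}(\cN\|\cR)$ is immediate from Proposition~\ref{Lem:AmortizedIneq}, since the sandwiched R\'enyi divergence is faithful for $\alpha>1$. Thus the whole task reduces to bounding the amortized quantity from above, and I would follow the template of the relative-entropy argument in Lemma~\ref{lem:amort-collapse-replacer-en-const}, replacing $D$ by $\widetilde{D}_{\alpha}$ throughout.

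Concretely, I would fix arbitrary states $\rho_{RA},\sigma_{RA}\in\cS(RA)$ and exploit that the replacer channel acts as $\cR_{A\to B}(\sigma_{RA})=\sigma_R\otimes\tau_B$ and $\cR_{A\to B}(\rho_{RA})=\rho_R\otimes\tau_B$, while $\cN_{A\to B}(\rho_{RA})$ has reduced state $\rho_R$ on $R$. First I would apply data processing under the partial trace over $A$ to pass from $\widetilde{D}_{\alpha}(\rho_{RA}\|\sigma_{RA})$ to $\widetilde{D}_{\alpha}(\rho_R\|\sigma_R)$, which can only enlarge the amortized difference:
\begin{align}
& \widetilde{D}_{\alpha}(\cN_{A\to B}(\rho_{RA})\|\sigma_R\otimes\tau_B)-\widetilde{D}_{\alpha}(\rho_{RA}\|\sigma_{RA}) \notag\\
& \qquad \leq \widetilde{D}_{\alpha}(\cN_{A\to B}(\rho_{RA})\|\sigma_R\otimes\tau_B)-\widetilde{D}_{\alpha}(\rho_R\|\sigma_R).
\end{align}
The crux is then the state-space inequality
\begin{equation}\label{eq:plan-chain}
\widetilde{D}_{\alpha}(\xi_{RB}\|\sigma_R\otimes\tau_B)-\widetilde{D}_{\alpha}(\xi_R\|\sigma_R)\leq\widetilde{D}_{\alpha}(\xi_{RB}\|\xi_R\otimes\tau_B),
\end{equation}
applied to $\xi_{RB}\coloneqq\cN_{A\to B}(\rho_{RA})$, for which $\xi_R=\rho_R$. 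Granting \eqref{eq:plan-chain}, its right-hand side is exactly $\widetilde{D}_{\alpha}(\cN_{A\to B}(\rho_{RA})\|\cR_{A\to B}(\rho_{RA}))$, which is at most $\widetilde{D}_{\alpha}(\cN\|\cR)$ by the definition of the generalized channel divergence; taking the supremum over $\rho_{RA}$ and $\sigma_{RA}$ then yields the claim.

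The main obstacle is establishing \eqref{eq:plan-chain}, a chain-rule/triangle-type statement asserting that, among all choices of the second-argument marginal $\sigma_R$, the value $\sigma_R=\xi_R$ is optimal; indeed setting $\sigma_R=\xi_R$ turns the left-hand side into exactly the right-hand side, since $\widetilde{D}_{\alpha}(\xi_R\|\xi_R)=0$. Unlike the relative-entropy case, where the analogous identity holds exactly via an explicit logarithm manipulation (cf. Lemma~\ref{lem:amort-collapse-replacer-en-const}), here the two sides need not coincide for $\sigma_R\neq\xi_R$, so I would instead show that $\sigma_R\mapsto\widetilde{D}_{\alpha}(\xi_{RB}\|\sigma_R\otimes\tau_B)-\widetilde{D}_{\alpha}(\xi_R\|\sigma_R)$ is maximized at $\sigma_R=\xi_R$. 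This is precisely the technical content underlying the feedback-assisted strong converse for replacer channels, so I would either invoke the corresponding result of \cite{Cooney2016} directly or reprove \eqref{eq:plan-chain} using a variational formula for $\widetilde{D}_{\alpha}$ (valid for $\alpha>1$) combined with a stationarity-plus-convexity analysis of the objective in $\sigma_R$; for each fixed finite-dimensional reference system $R$ the relevant optimization is over a compact set of density operators and hence well posed.
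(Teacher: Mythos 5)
Your lower bound and your first data-processing step (passing from $\widetilde{D}_{\alpha}(\rho_{RA}\|\sigma_{RA})$ to $\widetilde{D}_{\alpha}(\rho_{R}\|\sigma_{R})$) coincide with the paper's proof, but the step you identify as the crux, namely the chain-rule-type inequality
\begin{equation}
\widetilde{D}_{\alpha}(\xi_{RB}\|\sigma_R\otimes\tau_B)-\widetilde{D}_{\alpha}(\xi_R\|\sigma_R)\leq\widetilde{D}_{\alpha}(\xi_{RB}\|\xi_R\otimes\tau_B),
\end{equation}
is false, already for commuting states, so no stationarity-plus-convexity argument can establish it. Take $\alpha=2$, qubit systems $R$ and $B$, and diagonal states
\begin{equation}
\xi_{RB}=\tfrac{1}{4}|00\rangle\langle00|+\tfrac{1}{4}|01\rangle\langle01|+\tfrac{1}{2}|10\rangle\langle10|,\qquad
\sigma_R=\tfrac{9}{10}|0\rangle\langle0|+\tfrac{1}{10}|1\rangle\langle1|,\qquad
\tau_B=\tfrac{1}{2}I_B,
\end{equation}
so that $\xi_R=\tfrac{1}{2}I_R$. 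All three divergences reduce to classical R\'enyi-$2$ divergences, and a direct computation gives
\begin{equation}
\widetilde{D}_{2}(\xi_{RB}\|\sigma_R\otimes\tau_B)=\log\tfrac{95}{18},\qquad
\widetilde{D}_{2}(\xi_R\|\sigma_R)=\log\tfrac{25}{9},\qquad
\widetilde{D}_{2}(\xi_{RB}\|\xi_R\otimes\tau_B)=\log\tfrac{3}{2},
\end{equation}
so your left-hand side equals $\log\tfrac{19}{10}\approx0.926$ while your right-hand side equals $\log\tfrac{3}{2}\approx0.585$. The intuition that $\sigma_R=\xi_R$ maximizes the difference is exactly what fails for $\alpha\neq1$: there is no exact chain rule, and the optimal marginal in the second argument is a tilted state (as in Sibson's identity), not $\xi_R$. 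The configuration above even arises inside the amortized optimization (e.g., with $\cN$ the identity channel, $\rho_{RA}=\xi_{RA}$, and $\sigma_{RA}=\sigma_R\otimes\tau_A$), so this is not an unreachable corner case.

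This failure is precisely why the paper's proof takes a different route after the partial-trace step: it rewrites the difference exactly as
\begin{equation}
\frac{\alpha}{\alpha-1}\log\left\Vert(\Theta_{\tau_{B}^{(1-\alpha)/\alpha}}\circ\cN_{A\rightarrow B})(X_{RA})\right\Vert_{\alpha},
\quad\text{where}\quad
X_{RA}\coloneqq\frac{\sigma_{R}^{(1-\alpha)/2\alpha}\,\rho_{RA}\,\sigma_{R}^{(1-\alpha)/2\alpha}}{\left\Vert\sigma_{R}^{(1-\alpha)/2\alpha}\,\rho_{R}\,\sigma_{R}^{(1-\alpha)/2\alpha}\right\Vert_{\alpha}}
\end{equation}
and $\Theta_{\omega}(\cdot)\coloneqq\omega^{1/2}(\cdot)\omega^{1/2}$, notes that $X_{RA}\geq0$ with $\Vert X_{R}\Vert_{\alpha}\leq1$, and then invokes the norm-optimization result of \cite{GW13} (building on \cite[Theorem 10]{DJKR06}) to show that the supremum of this quantity over \emph{all} such sub-normalized positive operators $X_{RA}$\,---\,not just those whose own marginal sits in the second slot\,---\,equals $\widetilde{D}_{\alpha}(\cN\Vert\cR)$. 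That operator-norm collapse is the genuinely non-trivial content and cannot be replaced by the state-level inequality you propose. Your fallback of citing \cite{Cooney2016} outright is legitimate (the proposition is essentially a recasting of that result), but then the proof is the citation itself, not the reduction you outline.
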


\begin{proof}
We always have that $\widetilde{D}_{\alpha}^{\mathcal{A}}(\mathcal{N}
\Vert\mathcal{R})\geq\widetilde{D}_{\alpha}(\mathcal{N}\Vert\mathcal{R})$, due
to Proposition~\ref{Lem:AmortizedIneq} and the fact that the sandwiched R\'enyi relative entropy is
faithful. So we now prove the opposite inequality. Let $\rho_{RA}$ and
$\sigma_{RA}$ be arbitrary states.  Consider that
\begin{align}
& \widetilde{D}_{\alpha}(\mathcal{N}_{A\rightarrow B}(\rho_{RA})\Vert
\sigma_{R}\otimes\tau_{B})-\widetilde{D}_{\alpha}(\rho_{RA}\Vert\sigma
_{RA})\notag \\
& \leq\widetilde{D}_{\alpha}(\mathcal{N}_{A\rightarrow B}(\rho_{RA}
)\Vert\sigma_{R}\otimes\tau_{B})-\widetilde{D}_{\alpha}(\rho_{R}\Vert
\sigma_{R})\\
& =\frac{1}{\alpha-1}\log\operatorname{Tr}\left[  \left(  \left(\sigma
_{R}^{\left(  1-\alpha\right)  /2\alpha}\otimes\tau_{B}^{\left(
1-\alpha\right)  /2\alpha}\right)  \mathcal{N}_{A\rightarrow B}(\rho
_{RA})\left(\sigma_{R}^{\left(  1-\alpha\right)  /2\alpha}\otimes\tau
_{B}^{\left(  1-\alpha\right)  /2\alpha}\right)  \right)  ^{\alpha}\right]
\notag \\
&\qquad -\frac{1}{\alpha-1}\log\operatorname{Tr}\left[  \left(  \sigma_{R}^{\left(
1-\alpha\right)  /2\alpha}\rho_{R}\sigma_{R}^{\left(  1-\alpha\right)
/2\alpha}\right)  ^{\alpha}\right]  \\
& =\frac{1}{\alpha-1}\log\frac{\operatorname{Tr}\left[  \left(
\tau_{B}^{\left(  1-\alpha\right)  /2\alpha}\mathcal{N}_{A\rightarrow
B}\left(  \sigma_{R}^{\left(  1-\alpha\right)  /2\alpha}\rho_{RA}\sigma
_{R}^{\left(  1-\alpha\right)  /2\alpha}\right)  \tau_{B}^{\left(
1-\alpha\right)  /2\alpha}\right)  ^{\alpha}\right]  }{\operatorname{Tr}
\left[  \left(  \sigma_{R}^{\left(  1-\alpha\right)  /2\alpha}\rho_{R}
\sigma_{R}^{\left(  1-\alpha\right)  /2\alpha}\right)  ^{\alpha}\right]
}.
\end{align}
Now defining
\begin{align}
X_{RA}   \coloneqq\frac{\sigma_{R}^{\left(  1-\alpha\right)  /2\alpha}\rho
_{RA}\sigma_{R}^{\left(  1-\alpha\right)  /2\alpha}}{\left\Vert \sigma
_{R}^{\left(  1-\alpha\right)  /2\alpha}\rho_{R}\sigma_{R}^{\left(
1-\alpha\right)  /2\alpha}\right\Vert _{\alpha}},
\qquad\quad
\Theta_{\omega}(\cdot)   \coloneqq\omega^{1/2}(\cdot)\omega^{1/2},
\end{align}
we see from above that
\begin{align}
& \widetilde{D}_{\alpha}(\mathcal{N}_{A\rightarrow B}(\rho_{RA})\Vert
\sigma_{R}\otimes\tau_{B})-\widetilde{D}_{\alpha}(\rho_{R}\Vert\sigma_{R})
\notag\\
& =\frac{1}{\alpha-1}\log\operatorname{Tr}\left[  \left(  \tau
_{B}^{\left(  1-\alpha\right)  /2\alpha}\mathcal{N}_{A\rightarrow B}\left(
X_{RA}\right)  \tau_{B}^{\left(  1-\alpha\right)  /2\alpha}\right)  ^{\alpha
}\right]\\
& =\frac{\alpha}{\alpha-1}\log\operatorname{Tr}\left\Vert (\Theta_{\tau
_{B}^{\left(  1-\alpha\right)  /\alpha}}\circ\mathcal{N}_{A\rightarrow
B})\left(  X_{RA}\right)  \right\Vert _{\alpha}\\
& \leq
\sup_{X_{RA}\geq 0 , \Vert X_R \Vert_\alpha \leq 1} 
\frac{\alpha}{\alpha-1}\log\operatorname{Tr}\left\Vert (\Theta_{\tau
_{B}^{\left(  1-\alpha\right)  /\alpha}}\circ\mathcal{N}_{A\rightarrow
B})\left(  X_{RA}\right)  \right\Vert _{\alpha}\\
&= \widetilde{D}_{\alpha}(\mathcal{N}\Vert\mathcal{R}),
\end{align}
with the last (non-trivial)\ step following from \cite{GW13}, which in turn built upon \cite[Theorem 10]{DJKR06} and \cite[Section~3]{J06} (see also \cite[Appendix~A]{Cooney2016} in this context).
\end{proof}

Clearly, we can use the amortization collapse above and Proposition \ref{thm:converseexponent} to conclude the $\geq$~inequality in \eqref{eq:sc-replacer}.


\section{Conclusion \& Outlook}\label{sec:conclusion}

In order to derive upper bounds on the power of adaptive quantum channel discrimination protocols, we introduced a framework based on the concept of amortized channel divergence. This led to various converse bounds for general quantum channel discrimination, and as our main result, we established the strong Stein's lemma for classical-quantum channels by showing that asymptotically the exponential error rate for classical-quantum channel discrimination is not improved by adaptive strategies.

We regard our work as an initial step towards a plethora of open questions surrounding quantum channel discrimination. For example, with regards to classical-quantum channels, we are still missing tight characterisations in the Chernoff and Hoeffding settings\,---\,which hold in the classical case~\cite{Hayashi09}. The same questions also remain open for the setting involving replacer channels, which is strongly connected to similar open questions about quantum-feedback-assisted communication \cite{Cooney2016}. Even more fundamentally, we left open the question of whether adaptive protocols improve the exponential error rate for quantum channel discrimination in the asymmetric Stein setting (as they do in the symmetric Chernoff setting). We suspect that this is the case. A first step in this direction would be to look at the intermediate (parallel) setting as discussed in Remark~\ref{rmk:block-coding}, in which a state $\gamma_{RA^n}$ is prepared, either the tensor-power channel $(\cN_{A \to B})^{\otimes n}$ or $(\cM_{A \to B})^{\otimes n}$ is applied, and then a joint measurement is performed on the systems~$RB^n$. We emphasise that it is not even known whether this setting  offers an asymptotic advantage compared to a tensor-power strategy with input $\gamma_{RA}^{\otimes n}$. The question might be thought of as determining if the following limit holds
\begin{align}
\frac{1}{n}D\left(\mathcal{N}^{\otimes n}\middle\|\mathcal{M}^{\otimes n}\right)\overset{?}{\to} D\left(\mathcal{N}\middle\|\mathcal{M}\right).
\end{align}
Now, note that if we restrict the quantum memory system $R$ to be one-dimensional (trivial), then the Hastings counterexamples to the minimal output entropy conjecture~\cite{Hastings:2009aa}, applied to the  setting involving a replacer channel, immediately give a separation to the tensor-product strategy. This suggests that for a non-trivial quantum memory $R$, there are some deep entropic additivity questions that remain to be explored. Finally, quantum channel discrimination is strongly connected to many other fundamental tasks in quantum information theory, and we expect plentiful applications of our framework to be found.


\section*{Acknowledgements}

We are grateful to Fernando Brand\~ao, Gilad Gour, Milan Mosonyi, Giacomo de Palma, and Andreas Winter for discussions related to the topic of this paper.
The authors would like to thank the Isaac Newton Institute for Mathematical Sciences for support and hospitality during the programme ``Beyond i.i.d.~in information theory,''  which was supported by EPSRC grant number EP/R014604/1.
CH acknowledges support from Spanish MINECO, project FIS2016-80681-P with the support of AEI/FEDER funds and FPI Grant No. BES-2014-068888, as well as by the Generalitat de Catalunya, project CIRIT 2017-SGR-1127. EK acknowledges support from the Office of Naval Research. MMW acknowledges support from the National Science Foundation under grant no.~1907615.
He is also grateful to MB for hosting him for research discussions at Imperial College London during May 2018.

\section*{Conflict of interest statement}

On behalf of all authors, the corresponding author states that there is no conflict of interest.


\appendix

\section{Amortization does not increase the Hilbert $\alpha$-channel divergences}\label{app:hilbert-alpha-div}

In this appendix, we prove that the Hilbert $\alpha$-divergence from \cite[Section~III]{BG17} obeys a data-processed triangle inequality, and as a consequence, channel divergences based on it do not increase under amortization. We also remark how other metrics based on quantum fidelity obey a data-processed triangle inequality, and so their corresponding channel divergences do not increase under amortization.

The Hilbert $\alpha$-divergence of states $\rho$ and $\sigma$ is defined for
$\alpha\geq1$ as \cite[Section III]{BG17}
\begin{align}
H_{\alpha}(\rho\Vert\sigma)  & \coloneqq\frac{\alpha}{\alpha-1}\log\sup{}_{\alpha
}(\rho/\sigma)\\
\sup{}_{\alpha}(\rho/\sigma)  & \coloneqq\sup_{\alpha^{-1}I\leq\Lambda\leq I}
\frac{\operatorname{Tr}[\Lambda\rho]}{\operatorname{Tr}[\Lambda\sigma]}.
\end{align}
It is known that \cite[Theorem 1]{BG17}
\begin{align}
\lim_{\alpha\rightarrow1}H_{\alpha}(\rho\Vert\sigma)  & =\frac{1}{2\ln
2}\left\Vert \rho-\sigma\right\Vert _{1},\\
\lim_{\alpha\rightarrow\infty}H_{\alpha}(\rho\Vert\sigma)  & =D_{\max}
(\rho\Vert\sigma).
\end{align}
Here we prove that this quantity obeys a data-processed triangle inequality
for all $\alpha\geq1$.

\begin{lemma}
[Data-processed triangle inequality]Let $\mathcal{P}_{A\rightarrow B}$ be a
positive trace-preserving map, and let $\rho_{A},\sigma_{A}\in S(A)$ and
$\omega_{B}\in S(B)$. Then the following inequality holds for all $\alpha
\geq1$:
\begin{equation}
H_{\alpha}(\mathcal{P}_{A\rightarrow B}(\rho_{A})\Vert\omega_{B})\leq
H_{\alpha}(\rho_{A}\Vert\sigma_{A})+H_{\alpha}(\mathcal{P}_{A\rightarrow
B}(\sigma_{A})\Vert\omega_{B}).
\end{equation}
\end{lemma}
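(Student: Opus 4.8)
The plan is to reduce the additive statement to a multiplicative one and then prove the latter by combining an elementary triangle inequality for ``ratios'' with a data-processing property. Since $\alpha>1$ makes the prefactor $\alpha/(\alpha-1)$ strictly positive and $\log$ is monotone, the asserted inequality is equivalent to
\begin{equation}
\sup{}_\alpha(\mathcal{P}_{A\to B}(\rho_A)/\omega_B) \leq \sup{}_\alpha(\rho_A/\sigma_A)\cdot\sup{}_\alpha(\mathcal{P}_{A\to B}(\sigma_A)/\omega_B).
\end{equation}
Writing $R_{\alpha}(\cdot/\cdot)\coloneqq\sup{}_{\alpha}(\cdot/\cdot)$ for brevity, I would establish this bound by inserting the intermediate state $\mathcal{P}_{A\to B}(\sigma_A)$ and controlling the two resulting factors separately.

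First, the multiplicative triangle inequality. For any $\Lambda_B$ with $\alpha^{-1}I_B\leq\Lambda_B\leq I_B$, factoring the ratio through $\mathcal{P}_{A\to B}(\sigma_A)$ gives
\[
\frac{\operatorname{Tr}[\Lambda_B\,\mathcal{P}_{A\to B}(\rho_A)]}{\operatorname{Tr}[\Lambda_B\,\omega_B]}=\frac{\operatorname{Tr}[\Lambda_B\,\mathcal{P}_{A\to B}(\rho_A)]}{\operatorname{Tr}[\Lambda_B\,\mathcal{P}_{A\to B}(\sigma_A)]}\cdot\frac{\operatorname{Tr}[\Lambda_B\,\mathcal{P}_{A\to B}(\sigma_A)]}{\operatorname{Tr}[\Lambda_B\,\omega_B]}\leq R_{\alpha}(\mathcal{P}(\rho)/\mathcal{P}(\sigma))\cdot R_{\alpha}(\mathcal{P}(\sigma)/\omega).
\]
Taking the supremum over all feasible $\Lambda_B$ bounds the left-hand side of the multiplicative target by $R_{\alpha}(\mathcal{P}(\rho)/\mathcal{P}(\sigma))\cdot R_{\alpha}(\mathcal{P}(\sigma)/\omega)$, so it remains only to replace the first factor by $R_{\alpha}(\rho/\sigma)$.

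Second, the data-processing step $R_{\alpha}(\mathcal{P}(\rho)/\mathcal{P}(\sigma))\leq R_{\alpha}(\rho/\sigma)$. Here I would pass to the adjoint map $\mathcal{P}^{\dag}_{B\to A}$, defined by $\operatorname{Tr}[\Lambda_B\,\mathcal{P}_{A\to B}(X_A)]=\operatorname{Tr}[\mathcal{P}^{\dag}_{B\to A}(\Lambda_B)\,X_A]$. Positivity of $\mathcal{P}$ implies $\mathcal{P}^{\dag}$ is positive, hence monotone, and trace preservation of $\mathcal{P}$ implies $\mathcal{P}^{\dag}$ is unital, $\mathcal{P}^{\dag}(I_B)=I_A$. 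Consequently $\alpha^{-1}I_B\leq\Lambda_B\leq I_B$ forces $\alpha^{-1}I_A\leq\mathcal{P}^{\dag}(\Lambda_B)\leq I_A$, so $\mathcal{P}^{\dag}(\Lambda_B)$ is feasible for the optimization defining $R_{\alpha}(\rho/\sigma)$. Rewriting numerator and denominator of $R_{\alpha}(\mathcal{P}(\rho)/\mathcal{P}(\sigma))$ through the adjoint and relaxing the supremum from the image set $\{\mathcal{P}^{\dag}(\Lambda_B)\}$ to all feasible operators on $A$ then yields the claimed monotonicity. Chaining the two steps and applying $\tfrac{\alpha}{\alpha-1}\log$ to the resulting product recovers the statement.

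I expect no serious obstacle; the only point requiring care is the verification that the adjoint of a positive trace-preserving map is positive and unital and therefore maps the feasible set $\{\Lambda:\alpha^{-1}I\leq\Lambda\leq I\}$ into itself. This is exactly the fact that makes the data-processing inequality (and hence the whole argument) go through for merely \emph{positive} maps $\mathcal{P}$, rather than only completely positive ones, which is the natural level of generality for this divergence.
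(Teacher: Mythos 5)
Your argument for $\alpha>1$ is correct and is essentially the paper's own proof: both insert the intermediate state $\mathcal{P}_{A\to B}(\sigma_A)$, and both rest on the same key observation that the adjoint of a positive trace-preserving map is positive and unital, hence maps the feasible set $\{\Lambda:\alpha^{-1}I\leq\Lambda\leq I\}$ into itself. Your packaging as a ``multiplicative triangle inequality'' followed by a separate data-processing inequality for $\sup{}_{\alpha}$ is only a cosmetic reorganization of the paper's single chain of inequalities, in which the first ratio is converted directly through $\mathcal{P}^{\dag}$ using $\tr[\mathcal{P}^{\dag}(\Lambda_B)\,\sigma_A]=\tr[\Lambda_B\,\mathcal{P}_{A\to B}(\sigma_A)]$ and then bounded by $\sup{}_{\alpha}(\rho_A/\sigma_A)$.

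There is, however, a gap: the lemma is asserted for all $\alpha\geq 1$, and your proof covers only $\alpha>1$. Your opening reduction\,---\,dividing by the prefactor $\alpha/(\alpha-1)$\,---\,is unavailable at $\alpha=1$, where that prefactor is singular and $H_{1}$ is instead defined as the limit $\lim_{\alpha\to 1}H_{\alpha}(\rho\Vert\sigma)=\frac{1}{2\ln 2}\left\Vert\rho-\sigma\right\Vert_{1}$, per \cite[Theorem~1]{BG17}. The paper treats this boundary case separately, using the ordinary triangle inequality for the trace norm together with monotonicity of trace distance under positive trace-preserving maps. Alternatively, you could recover $\alpha=1$ from your own result by letting $\alpha\to 1^{+}$ in the inequality you proved and invoking the stated convergence of $H_{\alpha}$ at fixed states; either way, the case $\alpha=1$ must be addressed explicitly for the proof to establish the lemma as stated.
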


\begin{proof}
For $\alpha=1$, we have that $\lim_{\alpha\rightarrow1}H_{\alpha}(\rho
\Vert\sigma)=\frac{1}{2\ln2}\left\Vert \rho-\sigma\right\Vert _{1}$, as
recalled above. The statement then follows from the usual triangle inequality:
\begin{align}
\left\Vert \mathcal{P}_{A\rightarrow B}(\rho_{A})-\omega_{B}\right\Vert _{1}
& \leq\left\Vert \mathcal{P}_{A\rightarrow B}(\rho_{A})-\mathcal{P}
_{A\rightarrow B}(\sigma_{A})\right\Vert _{1}+\left\Vert \mathcal{P}
_{A\rightarrow B}(\sigma_{A})-\omega_{B}\right\Vert _{1} \label{eq:dp-tri-trace-1}\\
& \leq\left\Vert \rho_{A}-\sigma_{A}\right\Vert _{1}+\left\Vert \mathcal{P}
_{A\rightarrow B}(\sigma_{A})-\omega_{B}\right\Vert _{1},
\label{eq:dp-tri-trace-2}
\end{align}
and the fact that trace distance is monotone with respect to positive,
trace-preserving maps.

To prove the inequality for $\alpha>1$, let $\Lambda_{B}$ be an arbitrary operator such that
$\alpha^{-1}I_{B}\leq\Lambda_{B}\leq I_{B}$. The map $\mathcal{P}
_{A\rightarrow B}^{\dag}$ is positive and unital because $\mathcal{P}
_{A\rightarrow B}$ is positive and trace preserving by assumption. Then $\alpha^{-1}
I_{A}\leq\mathcal{P}_{A\rightarrow B}^{\dag}(\Lambda_{B})\leq I_{A}$ and
\begin{align}
\frac{\operatorname{Tr}[\Lambda_{B}\mathcal{P}_{A\rightarrow B}(\rho_{A}
)]}{\operatorname{Tr}[\Lambda_{B}\omega_{B}]}  & =\frac{\operatorname{Tr}
[\mathcal{P}_{A\rightarrow B}^{\dag}(\Lambda_{B})(\rho_{A})]}
{\operatorname{Tr}[\Lambda_{B}\omega_{B}]}\\
& =\frac{\operatorname{Tr}[\mathcal{P}_{A\rightarrow B}^{\dag}(\Lambda
_{B})(\rho_{A})]}{\operatorname{Tr}[\mathcal{P}_{A\rightarrow B}^{\dag
}(\Lambda_{B})(\sigma_{A})]}\frac{\operatorname{Tr}[\mathcal{P}_{A\rightarrow
B}^{\dag}(\Lambda_{B})(\sigma_{A})]}{\operatorname{Tr}[\Lambda_{B}\omega_{B}
]}\\
& =\frac{\operatorname{Tr}[\mathcal{P}_{A\rightarrow B}^{\dag}(\Lambda
_{B})(\rho_{A})]}{\operatorname{Tr}[\mathcal{P}_{A\rightarrow B}^{\dag
}(\Lambda_{B})(\sigma_{A})]}\frac{\operatorname{Tr}[\Lambda_{B}\mathcal{P}
_{A\rightarrow B}(\sigma_{A})]}{\operatorname{Tr}[\Lambda_{B}\omega_{B}]}\\
& \leq\left(\sup_{\alpha^{-1}I_{A}\leq\Gamma_{A}\leq I_{A}}\frac
{\operatorname{Tr}[\Gamma_{A}(\rho_{A})]}{\operatorname{Tr}[\Gamma_{A}
(\sigma_{A})]}\right)\cdot\left(\sup_{\alpha^{-1}I_{B}\leq\Lambda_{B}\leq
I_{B}}\frac{\operatorname{Tr}[\Lambda_{B}\mathcal{P}_{A\rightarrow B}
(\sigma_{A})]}{\operatorname{Tr}[\Lambda_{B}\omega_{B}]}\right)\\
& =\sup{}_{\alpha}(\rho_{A}/\sigma_{A})\cdot\sup{}_{\alpha}(\mathcal{P}
_{A\rightarrow B}(\sigma_{A})/\omega_{B}).
\end{align}
Since the inequality holds for all $\Lambda_{B}$ such that $\alpha^{-1}I_{B}\leq\Lambda_{B}\leq I_{B}$, we conclude that
\begin{equation}
\sup{}_{\alpha}(\mathcal{P}_{A\rightarrow B}(\rho_{A})/\omega_{B})\leq\sup{}_{\alpha}(\rho_{A}/\sigma_{A})\cdot\sup{}_{\alpha}(\mathcal{P}_{A\rightarrow B}(\sigma_{A})/\omega_{B}).
\end{equation}
Finally, we take a logarithm and multiply by $\alpha/\left(  \alpha-1\right)
$ to conclude the statement of the lemma.
\end{proof}

By the same proof that we gave for the max-relative entropy in Proposition~\ref{lem:D_max} and using the fact that the Hilbert $\alpha$-divergence is strongly faithful \cite[Theorem 1(i)]{BG17}, we conclude that
there is an amortization collapse for the Hilbert $\alpha$-divergence of
quantum channels. As special case, we conclude that the diamond norm of the
difference of two channels does not increase under amortization.

\begin{prop}
Let $\mathcal{N}_{A\rightarrow B},\mathcal{M}_{A\rightarrow B}\in
\mathcal{Q}(A\rightarrow B)$. Then for all $\alpha\geq1$, we have the following amortization collapse:
\begin{equation}
H_{\alpha}^{\mathcal{A}}(\mathcal{N}\Vert\mathcal{M})=H_{\alpha}(\mathcal{N}\Vert\mathcal{M}).
\end{equation}
\end{prop}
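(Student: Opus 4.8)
The plan is to mirror the proof of Proposition~\ref{lem:D_max}, using the data-processed triangle inequality for $H_\alpha$ established in the preceding lemma in place of Lemma~\ref{DataTriangle}. The case $\alpha=1$ will be absorbed into the general argument, since the triangle inequality holds for all $\alpha\geq 1$. First I would dispatch the easy direction $H_{\alpha}^{\mathcal{A}}(\mathcal{N}\Vert\mathcal{M})\geq H_{\alpha}(\mathcal{N}\Vert\mathcal{M})$: the Hilbert $\alpha$-divergence is a generalized divergence that is strongly faithful by \cite[Theorem 1(i)]{BG17}, hence faithful, so this inequality is exactly the content of Proposition~\ref{Lem:AmortizedIneq} applied to $H_\alpha$.

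For the reverse inequality, let $\rho_{RA}$ and $\sigma_{RA}$ be arbitrary states. I would apply the data-processed triangle inequality with the positive trace-preserving map taken to be $\mathcal{I}_R\otimes\mathcal{N}_{A\to B}$, with first argument $\rho_{RA}$, auxiliary state $\sigma_{RA}$, and final state $\omega=\mathcal{M}_{A\to B}(\sigma_{RA})$. This yields
\begin{equation}
H_\alpha(\mathcal{N}_{A\to B}(\rho_{RA})\Vert\mathcal{M}_{A\to B}(\sigma_{RA}))\leq H_\alpha(\rho_{RA}\Vert\sigma_{RA})+H_\alpha(\mathcal{N}_{A\to B}(\sigma_{RA})\Vert\mathcal{M}_{A\to B}(\sigma_{RA})).
\end{equation}
Rearranging, the amortized difference is bounded by $H_\alpha(\mathcal{N}_{A\to B}(\sigma_{RA})\Vert\mathcal{M}_{A\to B}(\sigma_{RA}))$, in which now the \emph{same} state $\sigma_{RA}$ is fed into both channels. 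Since $\sigma_{RA}$ is a particular input, this last quantity is at most $\sup_{\psi}H_\alpha(\mathcal{N}(\psi)\Vert\mathcal{M}(\psi))=H_\alpha(\mathcal{N}\Vert\mathcal{M})$. As the chain of inequalities holds for every pair $\rho_{RA},\sigma_{RA}$, taking the supremum over both on the left gives $H_{\alpha}^{\mathcal{A}}(\mathcal{N}\Vert\mathcal{M})\leq H_{\alpha}(\mathcal{N}\Vert\mathcal{M})$, and combining the two directions closes the proof.

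The one point requiring care\,---\,and the only genuine obstacle\,---\,is verifying that the data-processed triangle inequality is legitimately invoked in the presence of the reference system $R$. The lemma is stated for a positive trace-preserving map $\mathcal{P}_{A\to B}$ acting on states of the input system; here I instantiate it with $\mathcal{I}_R\otimes\mathcal{N}_{A\to B}$, which is completely positive and trace preserving, hence certainly positive and trace preserving, so the hypotheses are met and the bound applies verbatim to states on $RA$. Everything else is a direct transcription of the max-relative-entropy argument. Finally, since $\lim_{\alpha\to 1}H_\alpha(\rho\Vert\sigma)=\tfrac{1}{2\ln 2}\Vert\rho-\sigma\Vert_1$, the $\alpha\to 1$ limit of this collapse recovers the statement that the diamond norm of the difference $\mathcal{N}-\mathcal{M}$ does not increase under amortization, which I would note as an immediate corollary.
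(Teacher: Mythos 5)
Your proposal is correct and takes essentially the same route as the paper, which proves this proposition simply by invoking ``the same proof as for Proposition~\ref{lem:D_max}'' with the Hilbert $\alpha$-divergence's data-processed triangle inequality in place of Lemma~\ref{DataTriangle}, together with strong faithfulness and Proposition~\ref{Lem:AmortizedIneq} for the easy direction. Your explicit check that $\mathcal{I}_R\otimes\mathcal{N}_{A\to B}$ is positive and trace preserving (being CPTP) is precisely what licenses applying the lemma in the presence of the reference system, and your closing observation about the diamond norm matches the paper's own remark.
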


We can establish related results for the $c$-distance  and the Bures distance of quantum states, both of which are based on the quantum fidelity. For states $\rho$ and $\sigma$, the $c$-distance \cite{R02,R03,GLN04,R06} and Bures distance \cite{Bures1969} are respectively defined as
\begin{equation}
c(\rho,\sigma) \coloneqq \sqrt{1-F(\rho,\sigma)}, \qquad B(\rho\|\sigma) \coloneqq \sqrt{2\left(1-\sqrt{F(\rho,\sigma)}\right)}.
\end{equation}
(In the above and what follows, we use the notation $c(\rho,\sigma)$ for $c$-distance in order to differentiate this quantity from the Chernoff divergence $C(\rho\|\sigma)$.)
The same proof as in \eqref{eq:dp-tri-trace-1}--\eqref{eq:dp-tri-trace-2}, along with the fact that the quantum fidelity is monotone with respect to positive, trace-preserving maps \cite[Corollary~A.5]{Mosonyi2015}, implies that the following data-processed triangle inequalities hold:
\begin{lemma}
[Data-processed triangle inequalities]
Let $\mathcal{P}_{A\rightarrow B}$ be a
positive trace-preserving map, and let $\rho_{A},\sigma_{A}\in S(A)$ and
$\omega_{B}\in S(B)$. Then the following inequalities hold:
\begin{align}
c(\mathcal{P}_{A\rightarrow B}(\rho_{A}),\omega_{B})& \leq
c(\rho_{A},\sigma_{A})+c(\mathcal{P}_{A\rightarrow
B}(\sigma_{A}),\omega_{B}),\\
B(\mathcal{P}_{A\rightarrow B}(\rho_{A})\|\omega_{B}) & \leq
B(\rho_{A}\|\sigma_{A})+B(\mathcal{P}_{A\rightarrow
B}(\sigma_{A})\|\omega_{B}).
\end{align}
\end{lemma}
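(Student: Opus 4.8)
The plan is to reproduce exactly the two-step argument used for the trace distance in \eqref{eq:dp-tri-trace-1}--\eqref{eq:dp-tri-trace-2}, but with the trace distance replaced first by the $c$-distance and then by the Bures distance. Two ingredients are needed in each case: (i) that the quantity in question is a genuine metric on states, so that the ordinary triangle inequality applies; and (ii) that it is monotone (non-increasing) under the action of a positive trace-preserving map. Granting both, the inequality follows by inserting the auxiliary point $\mathcal{P}_{A\to B}(\sigma_A)$ between $\mathcal{P}_{A\to B}(\rho_A)$ and $\omega_B$ and then data-processing the ``inner'' term back to the input.

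First I would record that both $c(\rho,\sigma)$ and $B(\rho\|\sigma)$ are metrics on the set of states. This is standard: it is the content of the references \cite{R02,R03,GLN04,R06} for the $c$-distance and \cite{Bures1969} for the Bures distance, so no new argument is required and I would simply cite these facts. Applying the triangle inequality to the triple $\mathcal{P}_{A\to B}(\rho_A)$, $\mathcal{P}_{A\to B}(\sigma_A)$, $\omega_B$ then gives
\begin{align}
c(\mathcal{P}_{A\to B}(\rho_A),\omega_B) &\leq c(\mathcal{P}_{A\to B}(\rho_A),\mathcal{P}_{A\to B}(\sigma_A)) + c(\mathcal{P}_{A\to B}(\sigma_A),\omega_B),
\end{align}
and likewise with $c$ replaced by $B$ throughout.

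Second, I would invoke monotonicity of the fidelity under positive trace-preserving maps, namely $F(\mathcal{P}_{A\to B}(\rho_A),\mathcal{P}_{A\to B}(\sigma_A)) \geq F(\rho_A,\sigma_A)$, which is \cite[Corollary~A.5]{Mosonyi2015}. Since the maps $x\mapsto\sqrt{1-x}$ and $x\mapsto\sqrt{2(1-\sqrt{x})}$ are both monotone non-increasing on $[0,1]$, this fidelity lower bound immediately yields $c(\mathcal{P}_{A\to B}(\rho_A),\mathcal{P}_{A\to B}(\sigma_A)) \leq c(\rho_A,\sigma_A)$ and $B(\mathcal{P}_{A\to B}(\rho_A)\|\mathcal{P}_{A\to B}(\sigma_A)) \leq B(\rho_A\|\sigma_A)$. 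Substituting these data-processing bounds for the inner term in the triangle inequalities of the previous paragraph produces the two claimed inequalities.

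The data-processing step is entirely routine once fidelity monotonicity is cited, so the only substantive point is the triangle inequality itself. For these two fidelity-based distances the metric property is well documented, so I do not expect it to be an obstacle; if a self-contained justification were wanted, the Bures triangle inequality follows from its characterization as $\min_{\ket{\psi},\ket{\phi}}\big\||\psi\rangle-|\phi\rangle\big\|$ over purifications, and the $c$-distance inequality follows analogously from its purification-optimized angle/sine form. Thus the main content of the lemma is organizational: the combination of the metric triangle inequality with fidelity monotonicity, in exact parallel with the trace-distance argument already given.
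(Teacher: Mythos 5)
Your proposal is correct and follows essentially the same route as the paper: the paper likewise combines the metric triangle inequality for $c$ and $B$ (inserting $\mathcal{P}_{A\to B}(\sigma_A)$ as the intermediate point, exactly as in the trace-distance argument of \eqref{eq:dp-tri-trace-1}--\eqref{eq:dp-tri-trace-2}) with monotonicity of the fidelity under positive trace-preserving maps from \cite[Corollary~A.5]{Mosonyi2015}. Your additional remark on how the monotone decreasing functions $x\mapsto\sqrt{1-x}$ and $x\mapsto\sqrt{2(1-\sqrt{x})}$ convert fidelity monotonicity into data processing for the two distances just makes explicit what the paper leaves implicit.
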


By the same reasoning as above, we then conclude that the induced channel divergences do not increase under amortization:
\begin{prop}
Let $\mathcal{N}_{A\rightarrow B},\mathcal{M}_{A\rightarrow B}\in
\mathcal{Q}(A\rightarrow B)$. Then we have the following amortization collapses:
\begin{equation}
c^{\mathcal{A}}(\mathcal{N},\mathcal{M})=c(\mathcal{N},\mathcal{M}), \qquad B^{\mathcal{A}}(\mathcal{N}\|\mathcal{M})=B(\mathcal{N}\|\mathcal{M}).
\end{equation}
\end{prop}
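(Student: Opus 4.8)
The plan is to establish both equalities by the same two-inequality argument used for the max-relative entropy in Proposition~\ref{lem:D_max}, now fed by the data-processed triangle inequalities proved immediately above. I treat the $c$-distance in detail; the Bures distance is handled identically, replacing $c$ by $B$ throughout.

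For the inequality $c^{\mathcal{A}}(\mathcal{N},\mathcal{M}) \geq c(\mathcal{N},\mathcal{M})$, I would first observe that $c$ is a faithful generalized divergence. Indeed, monotonicity of the quantum fidelity under channels, together with the fact that $c$ is a decreasing function of $F$, gives the data-processing inequality $c(\mathcal{P}(\rho),\mathcal{P}(\sigma)) \leq c(\rho,\sigma)$, and faithfulness holds because $c(\rho,\rho) = \sqrt{1-F(\rho,\rho)} = 0$. Proposition~\ref{Lem:AmortizedIneq} then applies and yields $c^{\mathcal{A}}(\mathcal{N},\mathcal{M}) \geq c(\mathcal{N},\mathcal{M})$ directly.

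For the reverse inequality, I would fix arbitrary input states $\rho_{RA}$ and $\sigma_{RA}$ and invoke the data-processed triangle inequality just established, taking the positive trace-preserving map to be $\mathcal{N}_{A\rightarrow B}$ and the third state to be $\omega_B = \mathcal{M}_{A\rightarrow B}(\sigma_{RA})$. This produces
\begin{align}
c(\mathcal{N}_{A\rightarrow B}(\rho_{RA}), \mathcal{M}_{A\rightarrow B}(\sigma_{RA})) - c(\rho_{RA},\sigma_{RA}) &\leq c(\mathcal{N}_{A\rightarrow B}(\sigma_{RA}), \mathcal{M}_{A\rightarrow B}(\sigma_{RA}))\\
&\leq c(\mathcal{N},\mathcal{M}),
\end{align}
where the final step uses that $\sigma_{RA}$ is a feasible point for the generalized channel divergence $c(\mathcal{N},\mathcal{M})$, which optimizes the same state into both channels. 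Taking the supremum over all $\rho_{RA},\sigma_{RA}$ on the left-hand side gives $c^{\mathcal{A}}(\mathcal{N},\mathcal{M}) \leq c(\mathcal{N},\mathcal{M})$, and combining the two bounds closes the argument.

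I expect no genuine obstacle specific to this proposition: all of the analytic content already resides in the data-processed triangle inequalities, which in turn rest only on the ordinary triangle inequality satisfied by these metrics together with monotonicity of the fidelity under positive trace-preserving maps \cite[Corollary~A.5]{Mosonyi2015}. The one point meriting care is the verification that $c$ and $B$ genuinely qualify as faithful generalized divergences, so that Proposition~\ref{Lem:AmortizedIneq} is legitimately applicable; once this is checked, the collapse follows mechanically, and verbatim for $B$ in place of $c$.
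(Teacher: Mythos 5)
Your proof is correct and is essentially the paper's own argument: the paper establishes the collapse by invoking ``the same reasoning'' as for the max-relative entropy in Proposition~\ref{lem:D_max}, namely applying the data-processed triangle inequality with $\omega = \mathcal{M}(\sigma_{RA})$ to bound the amortized quantity by the channel divergence, and using faithfulness together with Proposition~\ref{Lem:AmortizedIneq} for the reverse inequality. Your explicit verification that $c$ and $B$ are faithful generalized divergences is exactly the implicit hypothesis the paper relies on, so nothing is missing.
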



\section{Generalized Fuchs-van-de-Graaf inequality}\label{app:gen-fuchs-vdgrf}

A well known inequality in quantum information theory is the following Fuchs-van-de-Graaf inequality \cite{FG99}:
\begin{equation}
\frac{1}{2}\left\Vert \rho-\sigma\right\Vert _{1}\leq\sqrt{1-F(\rho,\sigma)},
\end{equation}
which holds for density operators $\rho$ and $\sigma$, and $F(\rho,\sigma)\coloneqq\Vert\sqrt{\rho}\sqrt{\sigma}\Vert
_{1}^{2}$. The following lemma, proved in \cite[Supplementary~Lemma~3]{CKW14}, generalizes this relation to the case of
positive semi-definite operators $A$ and $B$, and it also represents a tighter
bound than that given in \cite[Theorem 7]{Audenaert2008}. (Note that 
\cite[Theorem 7]{Audenaert2008} generalizes one of the inequalities in \cite[Equation (1)]{Kholevo1972}
to positive semi-definite operators.) The proof of Lemma~\ref{lem:FvdG-PSD} that we give below is
very similar to the proof of Theorem 7 of \cite{Audenaert2008}, but it features a
minor change in the reasoning. The proof is also different from that given in \cite[Supplementary~Lemma~3]{CKW14}.

\begin{lemma}[\cite{CKW14}]\label{lem:FvdG-PSD}
For positive semi-definite, trace class operators $A$ and $B$ acting on a separable Hilbert space, we have that
\begin{equation}
\left\Vert A-B\right\Vert _{1}^{2}+4\left\Vert A^{1/2}B^{1/2}\right\Vert_{1}^{2}\leq\big(\operatorname{Tr}[A+B]\big)^2.
\end{equation}
\end{lemma}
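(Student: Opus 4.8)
The plan is to reduce the positive-semidefinite statement to the case of two density operators, and then to prove the reduced inequality by the purification/Uhlmann route, evaluating a rank-two trace norm exactly. First I would normalize: writing $a \coloneqq \tr[A]$ and $b \coloneqq \tr[B]$ and assuming both are nonzero (the cases $a=0$ or $b=0$ are trivial, since one operator vanishes and the inequality becomes the equality $\Vert A\Vert_1^2 = (\tr A)^2$), set $\rho \coloneqq A/a$ and $\sigma \coloneqq B/b$, which are states. Then $\Vert A-B\Vert_1 = \Vert a\rho - b\sigma\Vert_1$, while $A^{1/2}B^{1/2} = \sqrt{ab}\,\rho^{1/2}\sigma^{1/2}$ gives $\Vert A^{1/2}B^{1/2}\Vert_1^2 = ab\,F(\rho,\sigma)$, and $\tr[A+B] = a+b$. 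So the claim is equivalent to $\Vert a\rho - b\sigma\Vert_1^2 + 4ab\,F(\rho,\sigma) \le (a+b)^2$.

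Next, I would invoke Uhlmann's theorem to pick purifications $|\psi\rangle$ and $|\phi\rangle$ of $\rho$ and $\sigma$ in a doubled (still separable) Hilbert space with $\langle\psi|\phi\rangle = \sqrt{F(\rho,\sigma)} =: c \ge 0$. Since the partial trace is a positive trace-preserving map and the trace norm is monotone under such maps, $\Vert a\rho - b\sigma\Vert_1 \le \Vert a\,|\psi\rangle\langle\psi| - b\,|\phi\rangle\langle\phi|\Vert_1$. Calling this rank-$\le 2$ Hermitian operator $X$, I would compute its trace norm exactly from its first two moments: $\tr[X] = a-b$ and $\tr[X^2] = a^2 + b^2 - 2abc^2$, whence its two nonzero eigenvalues satisfy $\lambda_1\lambda_2 = \tfrac{1}{2}\big((\tr X)^2 - \tr[X^2]\big) = -ab(1-c^2) \le 0$. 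Because this product is nonpositive, $\Vert X\Vert_1^2 = (|\lambda_1|+|\lambda_2|)^2 = \lambda_1^2 + \lambda_2^2 + 2|\lambda_1\lambda_2| = (a^2+b^2-2abc^2) + 2ab(1-c^2) = (a+b)^2 - 4abc^2$. Since $c^2 = F(\rho,\sigma)$, combining this with the monotonicity bound yields exactly $\Vert a\rho-b\sigma\Vert_1^2 + 4ab\,F \le (a+b)^2$, which is the reduced inequality.

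The step I expect to carry the weight is the exact rank-two evaluation, and in particular the observation that the eigenvalue product equals $-ab(1-c^2)$ and is therefore nonpositive: this is what lets me replace $2|\lambda_1\lambda_2|$ by $2ab(1-c^2)$ and obtain the sharp constant $(a+b)^2 - 4abF$ rather than a looser estimate, and it is the point where the present argument sharpens the coarser bound behind \cite[Theorem 7]{Audenaert2008}. The remaining care is with the separable, trace-class setting: I would note that purifications still exist there, that monotonicity of the trace norm under partial trace holds for trace-class operators, and that the spectral computation takes place entirely within the two-dimensional span of $|\psi\rangle$ and $|\phi\rangle$, so no finite-dimensional hypothesis is actually used.
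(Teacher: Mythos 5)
Your proof is correct, but it takes a genuinely different route from the one in the paper. The paper's argument follows \cite[Theorem 7]{Audenaert2008} and is purely algebraic: for an arbitrary unitary $U$ it sets $P = A^{1/2}U$, $Q = B^{1/2}$, uses the identity $PP^{\dag} - QQ^{\dag} = \frac{1}{2}(SD^{\dag} + DS^{\dag})$ with $S = P+Q$, $D = P-Q$, together with the Cauchy--Schwarz bound $\|SD^{\dag}\|_1 \le \|S\|_2\|D\|_2$, to obtain $\|A-B\|_1 \le \|A^{1/2}U + B^{1/2}\|_2\,\|A^{1/2}U - B^{1/2}\|_2$; expanding the Hilbert--Schmidt norms and choosing $U$ from the polar decomposition $B^{1/2}A^{1/2}U = \sqrt{B^{1/2}AB^{1/2}}$ then produces exactly $(\tr[A+B])^2 - 4\|A^{1/2}B^{1/2}\|_1^2$. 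You instead normalize to density operators, invoke Uhlmann's theorem to pass to optimal purifications, use monotonicity of the trace norm under the partial trace, and evaluate the trace norm of the rank-two operator $a|\psi\rangle\langle\psi| - b|\phi\rangle\langle\phi|$ exactly from its first two moments; your key observation $\lambda_1\lambda_2 = -ab(1-c^2) \le 0$ is what delivers the sharp constant, and all the ingredients (existence of purifications, partial-trace monotonicity, the spectral computation confined to a two-dimensional subspace) do survive in the separable, trace-class setting, so there is no gap. As for what each approach buys: the paper's route is self-contained at the operator level, needing no purification machinery, and the optimization over $U$ is what makes the fidelity-like term $\|A^{1/2}B^{1/2}\|_1$ appear directly; your route reduces the generalized statement to the textbook Fuchs--van de Graaf argument \cite{FG99} extended to unequal traces, and it has the advantage of making tightness transparent\,---\,the rank-two evaluation is an equality, so the only slack in the whole chain is the single data-processing step under partial trace\,---\,whereas the paper's sequence of inequalities leaves the source of slack less visible.
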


\begin{proof}
For convenience, we give a complete proof and follow the proof of Theorem 7 of
\cite{Audenaert2008} quite closely. Consider two general operators $P$ and $Q$, and
define their sum and difference as $S=P+Q$ and $D=P-Q$. Then $P=\left(
S+D\right)  /2$ and $Q=\left(  S-D\right)  /2$. Consider that
\begin{align}
PP^{\dag}-QQ^{\dag}  & =\frac{1}{4}\left(\left(  S+D\right)  \left(
S+D\right)  ^{\dag}-\left(  S-D\right)  \left(  S-D\right)  ^{\dag}\right)
\\
& =\frac{1}{2}\left(SD^{\dag}+DS^{\dag}\right).
\end{align}
Then, we have
\begin{align}
\left\Vert PP^{\dag}-QQ^{\dag}\right\Vert _{1}  & =\frac{1}{2}\left\Vert
SD^{\dag}+DS^{\dag}\right\Vert _{1}\\
& \leq\frac{1}{2}\big(\left\Vert SD^{\dag}\right\Vert _{1}+\left\Vert
DS^{\dag}\right\Vert _{1}\big)\\
& =\left\Vert SD^{\dag}\right\Vert _{1}\\
& \leq\left\Vert S\right\Vert _{2}\left\Vert D\right\Vert _{2}.
\end{align}
Now pick $P=A^{1/2}U$ and $Q=B^{1/2}$, where $U$ is an arbitrary unitary. Then
$S,D=A^{1/2}U\pm B^{1/2}$, and we find that
\begin{equation}
\left\Vert A-B\right\Vert _{1}\leq\left\Vert A^{1/2}U+B^{1/2}\right\Vert
_{2}\left\Vert A^{1/2}U-B^{1/2}\right\Vert _{2}.
\end{equation}
Squaring this gives
\begin{align}
&\left\Vert A-B\right\Vert _{1}^{2} 
\notag\\
&  \leq\left\Vert A^{1/2}U+B^{1/2}\right\Vert _{2}^{2}\left\Vert A^{1/2}U-B^{1/2}\right\Vert _{2}^{2}\\
&  =\operatorname{Tr}\Big[(A^{1/2}U+B^{1/2})^{\dag}(A^{1/2}U+B^{1/2}
)\Big]\cdot\operatorname{Tr}\Big[(A^{1/2}U-B^{1/2})^{\dag}(A^{1/2}U-B^{1/2})\Big]\\
&  =\operatorname{Tr}\Big[A+B+B^{1/2}A^{1/2}U+U^{\dag}A^{1/2}B^{1/2}
\Big]\cdot\operatorname{Tr}\Big[A+B-B^{1/2}A^{1/2}U-U^{\dag}A^{1/2}B^{1/2}\Big]\\
&  =\left(\operatorname{Tr}[A+B]+2\operatorname{Re}\Big\{\operatorname{Tr}\Big[B^{1/2}A^{1/2}\Big]\Big\}\right)\Big(\operatorname{Tr}[A+B]-2\operatorname{Re}\Big\{\operatorname{Tr}\Big[B^{1/2}A^{1/2}U\big]\Big\}\Big)  \\
&  =\left(\operatorname{Tr}[A+B]\right)  ^{2}-4\left(\operatorname{Re}\Big\{\operatorname{Tr}\Big[B^{1/2}A^{1/2}U\Big]\Big\}\right)^{2}.
\end{align}
Note that the unitary $U$ in the above is arbitrary. So we can finally pick the unitary $U$ to be the operator from the polar decomposition of $B^{1/2}A^{1/2}$ as
\begin{equation}
B^{1/2}A^{1/2}U=\sqrt{B^{1/2}AB^{1/2}}.
\end{equation}
Then, we get
\begin{align}
\left\Vert A-B\right\Vert _{1}^{2} &  \leq\left(\operatorname{Tr}[A+B]\right)^{2}-4\left(\operatorname{Re}\Big\{\operatorname{Tr}\Big[\sqrt{B^{1/2}AB^{1/2}}\Big]\Big\}\right)^{2}\\
&  =\left(\operatorname{Tr}[A+B]\right)^{2}-4\left\Vert A^{1/2}B^{1/2}\right\Vert _{1}^{2}
\end{align}
and the proof is concluded.
\end{proof}


\bibliographystyle{unsrt}
\bibliography{Bib}

\end{document}